\newcommand{\ExtendedVersion}[1]{#1} \newcommand{\PaperVersion}[1]{}
\newcommand{\FinalVersion}[1]{#1}\newcommand{\DraftVersion}[1]{}
\newcommand{\hidden}[1]{} 
\newcommand{\definedAs}    
	{\mathrel{\mathop:}=}
\newcommand{\fctDomName}{\mathrm{dom}}
\newcommand{\fctDom}[1]{\fctDomName(#1)}
\newcommand{\symAllURIs}{\mathcal{U}} 
\newcommand{\symAllLiterals}{\mathcal{L}} 
\newcommand{\symAllBNodes}{\mathcal{B}} 
\newcommand{\symAllVariables}{\mathcal{V}} 
\newcommand{\symRDFgraph}{G} 
\newcommand{\fctBNodesName}{\mathrm{bnodes}}
\newcommand{\fctBNodes}[1]{\fctBNodesName(#1)} 
\newcommand{\fctVarsName}{\mathrm{vars}}
\newcommand{\fctVars}[1]{\fctVarsName(#1)} 
\newcommand{\symTP}{tp} 
\newcommand{\symBGP}{B} 
\newcommand{\symPattern}{P} 
\newcommand{\eval}[2]{[\![#1]\!]_{#2}}
\newcommand{\iface}{I} 
\newcommand{\ifaceReqLang}[1]{L_\textsf{#1req}} 
\newcommand{\ifaceReqLangTPF}{\ifaceReqLang{TPF-}}
\newcommand{\ifaceReqLangBrTPF}{\ifaceReqLang{brTPF-}}
\newcommand{\ifaceReqLangSPARQL}{\ifaceReqLang{sparql}}
\newcommand{\ifaceReqLangExp}{\rho} 
\newcommand{\ifaceFct}%
	{\varrho}
\newcommand{\ifaceTuple}{(\ifaceReqLang{},\ifaceFct)}
\newcommand{\member}{\textit{fm}} 
\newcommand{\Fed}{F} 
\newcommand{\xxxAlgebraOperatorName}[1]{\textsf{\itshape #1}}
\newcommand{\request}[2]{\xxxAlgebraOperatorName{req}^{\,#1}_{#2}}
\newcommand{\tpAdd}[3]{\xxxAlgebraOperatorName{tpAdd}^{\,#2}_{#3}(#1)}
\newcommand{\tpAddB}[3]{\xxxAlgebraOperatorName{tpAdd}^{\,#2}_{#3}\bigl(#1\bigr)}
\newcommand{\bgpAdd}[3]{\xxxAlgebraOperatorName{bgpAdd}^{\,#2}_{#3}(#1)}
\newcommand{\bgpAddB}[3]{\xxxAlgebraOperatorName{bgpAdd}^{\,#2}_{#3}\bigl(#1\bigr)}
\newcommand{\union}[2]{\xxxAlgebraOperatorName{union}(#1,#2)}
\newcommand{\unionB}[2]{\xxxAlgebraOperatorName{union}\bigl(#1,#2\bigr)}
\newcommand{\join}[2]{\xxxAlgebraOperatorName{join}(#1,#2)}
\newcommand{\joinB}[2]{\xxxAlgebraOperatorName{join}\bigl(#1,#2\bigr)}
\newcommand{\munion}[1]{\xxxAlgebraOperatorName{mu}\lbrace #1 \rbrace}
\newcommand{\munionB}[1]{\xxxAlgebraOperatorName{mu}\big\lbrace #1 \big\rbrace}
\newcommand{\mjoin}[1]{\xxxAlgebraOperatorName{mj}\lbrace #1 \rbrace}
\newcommand{\mjoinB}[1]{\xxxAlgebraOperatorName{mj}\big\lbrace #1\big\rbrace}
\newcommand{\sols}[1]{\mathsf{sols}(#1)}
\newcommand{\sacost}[1]{\textsf{sa-cost}(#1)}
\newcommand{\myequiv}[1]{\mathrel{\overset{\makebox[0pt]{\mbox{\normalfont\tiny $#1$}}}{\equiv}}}
\newcommand{\rdfterm}[1]{\texttt{#1}}
\newcommand{\definedTerm}[1]{\textbf{#1}}
\theoremstyle{definition}
\newtheorem{myexample}{Example}
\newtheorem{mydefinition}{Definition}
\newtheorem{mytheorem}{Theorem}
\newtheorem{myproposition}{Proposition}
\newtheorem{mylemma}{Lemma}
\newtheorem{mycorollary}{Corollary}
\begin{document}

\title{FedQPL: A Language for Logical Query Plans over Heterogeneous Federations of RDF Data Sources}
\ExtendedVersion{
	\subtitle{(Extended Version)}
	\subtitlenote{This manuscript is an extended version of a paper in the 22nd International Conference on Information Integration and Web-based Applications \& Services (iiWAS2020). The difference to the conference version is that this extended version includes an appendix with the full proofs of the results in the paper~(see page~\pageref{appendix}).}
}

\author{Sijin Cheng}
\email{sijin.cheng@liu.se}
\affiliation{%
  \institution{Dept.\ of Computer and Information Science (IDA), Link\"oping University}
  \city{Link\"oping}
  \country{Sweden}
}

\author{Olaf Hartig}
\email{olaf.hartig@liu.se}
\orcid{0000-0002-1741-2090}
\affiliation{%
  \institution{Dept.\ of Computer and Information Science (IDA), Link\"oping University}
  \city{Link\"oping}
  \country{Sweden}
}

\begin{abstract}
Federations of RDF data sources provide great potential when queried for answers and insights that cannot be obtained from one data source alone. A challenge for
	planning the execution of queries over
such a federation
is that the federation may be heterogeneous in terms of the types of data access interfaces provided by the federation members.
	This challenge has not received much attention in the literature.
%
	This paper provides
a solid formal foundation for future approaches that aim to address this challenge. Our main conceptual contribution is a formal language for representing query execution plans; additionally, we identify a fragment of this language that can be used to capture the result of selecting relevant data sources for different parts of a given query. As technical contributions, we show
	that this fragment is more expressive than what is supported by existing source selection approaches, which effectively highlights an inherent limitation of these approaches.
Moreover, we show
	that the source selection problem\hidden{ considering our language} is NP-hard and in $\Sigma_2^\mathrm{P}$,
and we
	provide
a comprehensive set of rewriting rules that can be used
	as a basis for query optimization.

\end{abstract}
\begin{CCSXML}
<ccs2012>
<concept>
<concept_id>10002951.10002952.10003219.10003399</concept_id>
<concept_desc>Information systems~Federated databases</concept_desc>
<concept_significance>500</concept_significance>
</concept>
<concept>
<concept_id>10002951.10002952.10003190.10003192.10003425</concept_id>
<concept_desc>Information systems~Query planning</concept_desc>
<concept_significance>500</concept_significance>
</concept>
<concept>
<concept_id>10002951.10002952.10003219.10003222</concept_id>
<concept_desc>Information systems~Mediators and data integration</concept_desc>
<concept_significance>300</concept_significance>
</concept>
<concept>
<concept_id>10002951.10003260.10003300</concept_id>
<concept_desc>Information systems~Web interfaces</concept_desc>
<concept_significance>300</concept_significance>
</concept>
</ccs2012>
\end{CCSXML}

\ccsdesc[500]{Information systems~Federated databases}
\ccsdesc[500]{Information systems~Query planning}
\ccsdesc[300]{Information systems~Mediators and data integration}
\ccsdesc[300]{Information systems~Web interfaces}

\keywords{Federation, RDF, SPARQL, Linked Data Fragments}

\maketitle

\section{Introduction} \label{sec:Intro}

Existing research on querying federations of RDF data sources focuses on federations that are homogeneous in terms of the type of
interface
	via which \removable{each of} the federation members can be accessed.
In particular, the majority of
	work
in this context assumes that all federation members provide the SPARQL endpoint interface~(e.g., \cite{%
acosta2011anapsid%
,charalambidis2015semagrow%
,saleem2018costfed%
,schmidt2011fedbench%
,schwarte2011fedx%
,DBLP:journals/tlsdkcs/VidalCAMP16%
})%
\removable{, whereas
	another line of research focuses
solely on URI lookups~(e.g., \cite{%
DBLP:conf/semweb/HartigBF09%
,DBLP:conf/semweb/LadwigT10%
,DBLP:journals/www/UmbrichHKHP11%
})}%
. However, there exist other types of
	\removable{Web}
interfaces to access an RDF data source, including the Triple Pattern Fragment~(TPF) interface~\cite{verborgh2016triple}, the Bindings-Restricted TPF~(brTPF) interface~\cite{hartig2016brtpf}, the SaGe interface~\cite{DBLP:conf/www/MinierSM19}, and the smart-KG interface~\cite{DBLP:conf/www/AzzamFABP20}. Given the fact that
	each type of interface \removable{has particular properties and} makes
different trade-offs~\cite{%
DBLP:conf/www/AzzamFABP20%
,hartig2016brtpf%
,hartig2017formal%
,DBLP:conf/www/MinierSM19%
,verborgh2016triple%
}%
	\removable{~(and the same will hold for other types of such interfaces proposed in the future)}%
, any provider of an RDF data source may choose to offer a different type of
interface, which leads to
	federations that are heterogeneous in terms of these~interfaces. 

Such a heterogeneity poses extra challenges for query federation engines%
	---especially during query planning---%
because different interfaces may require~(or enable!) the engine to leverage specific physical operators, and not all forms of subqueries can be answered directly by every
interface. To the best of our knowledge, there does not exist any research on systematic approaches to tackle these challenges. This paper
	provides
a starting point for~such~research.

We argue that any principled approach to query such heterogeneous federations
	\removable{of RDF data sources}
has to be based on a solid formal foundation. This foundation should provide not only a formal data model that captures this notion of federations, including a corresponding query semantics, but also formal concepts to precisely define the artifacts produced\hidden{ and reasoned over} by the various steps of query planning. There are typically three main types of such artifacts in a query federation engine: the results of the query decomposition~\&~source selection step~\cite{DBLP:journals/tlsdkcs/VidalCAMP16}, logical query plans, and physical query plans.
%
	We observe that, so
far, such artifacts have been treated very informally in the literature on query engines for~(homogeneous) federations of RDF data. That is, the authors talk about query plans only in terms of \hidden{a few }examples, where these examples are typically informal illustrations that visually represent some form of a tree in which the leaf nodes are one or more triple patterns with various annotations and the internal nodes are operators from the SPARQL algebra~(sometimes in combination with some additional, in\-for\-mal\-ly-in\-tro\-duced operators~\cite{schwarte2011fedx}). The lack of approaches to describe query plans formally is the focus of our work in this paper.

\medskip\noindent
\textbf{Contributions%
	\removable{\ and organization of the paper}%
:}
After introducing a suitable formal data model%
	\ and a corresponding query semantics for using the query language SPARQL%
~(cf.\ Section~\ref{sec:DataModel}), we make our main conceptual contribution: We define a language, called \emph{FedQPL}, that can be used to describe logical query plans formally~(cf.\ Section~\ref{sec:Language}). This language can be applied both to define query planning and optimization approaches in a more precise manner and to actually represent the logical plans in a query engine%
.
	FedQPL features operators to explicitly capture the intention to execute a particular subquery at a specific federation member and to
distinguish whether such an access to a federation member is meant to be based solely on the given subquery or also on intermediate results obtained for other subqueries.
	We argue that such features
are paramount for any principled approach to query planning in heterogeneous federations where
	the characteristics and limitations of different
data access interfaces have to be taken into account%
	\removable{~(and, certainly, these features can also be leveraged when defining new approaches that focus on homogeneous federations)}%
.

	Given the full definition of FedQPL, we \removable{then} study a \removable{specific} fragment of this
language
	\removable{that can be used}
to describe which
	federation members
have to be contacted for which part of a given query%
%
	~(cf.\ Section~\ref{sec:SourceSelection})%
. Hence,
	this language fragment provides
a formal tool
	\removable{that is} needed
to develop well-de\-fined source selection approaches.
	Moreover, in practice, this language fragment can be used to represent the results of such approaches directly as \emph{initial logical query plans} that can then be optimized in subsequent query planning steps.
As technical contributions regarding this language fragment we show that the corresponding source selection problem is NP-hard and in $\Sigma_2^\mathrm{P}$%
	. Additionally,
we show
	that \removable{the language fragment cannot only capture any possible output of existing source selection approaches for homogeneous federations but,} even when used only for such federations, it can express solutions to the source selection problem that these approaches
are not able~to~produce.

Finally, as another technical contribution, we show a comprehensive set of
	equivalences for FedQPL expressions that can be used as query rewriting rules for query optimization~(cf.\ Section~\ref{sec:Equivalences}).

\medskip
\noindent
\textbf{Limitations:}
	It
is not the purpose of this paper to develop concrete approaches or techniques for source selection, query planning, or query optimization. Instead, we focus on providing a solid theoretical foundation for such work in the future. A specific limitation regarding our contributions
	in this paper
is that the
	given definition of FedQPL covers
only the join-union fragment of SPARQL. However,
	the formalism can easily be extended with additional operators%
%
	, which is part of our future work%
.

\section{Preliminaries} \label{sec:Preliminaries}

This section
	\removable{provides a brief introduction of}
the concepts of RDF and SPARQL that are relevant for our work in this paper. Due to space limitations, we focus on
	\removable{notation and relevant symbols,}
and refer to the literature for
	the
detailed formal~definitions~\cite{Harris13:SPARQL1_1Language, DBLP:journals/tods/PerezAG09}. It is important to note that, in this paper, we focus on the set-based semantics of SPARQL as introduced by P{\'{e}}rez et al.~\cite{DBLP:journals/tods/PerezAG09}.

We assume
pairwise disjoint, countably infinite sets: $\symAllURIs$~(URIs), $\symAllBNodes$~(blank nodes), $\symAllLiterals$~(literals), and $\symAllVariables$~(variables%
).
	An \emph{RDF triple} is a tuple $(s,p,o) \in (\symAllURIs \cup \symAllBNodes) \times \symAllURIs \times (\symAllURIs \cup \symAllBNodes \cup \symAllLiterals)$.
%
	A set of such triples is called an \emph{RDF graph}.
%

The fundamental building block of SPARQL queries is the notion of a \emph{basic graph pattern}~(\emph{BGP})~\cite{Harris13:SPARQL1_1Language}; each such BGP is a
	nonempty
set of \removable{so-called} \emph{triple patterns} where every triple pattern is a tuple in $(\symAllVariables \cup \symAllURIs) \times (\symAllVariables \cup \symAllURIs) \times (\symAllVariables \cup \symAllURIs \cup \symAllLiterals)$.\footnote{For the sake of
	simplicity
we do not permit blank nodes in triple patterns. In practice, each blank node in a SPARQL query can be replaced by a new variable.}
%
	Other
types of graph patterns
	\removable{for SPARQL queries}
can be constructed by combining BGPs using various operators%
	\removable{~(e.g., UNION, FILTER\removable{, OPTIONAL})}%
~\cite{Harris13:SPARQL1_1Language}%
.

The result of evaluating any such graph pattern~$\symPattern$ over an RDF graph~$\symRDFgraph$ is a set---\removable{typically} denoted by~$\eval{\symPattern}{\symRDFgraph}$~\cite{DBLP:journals/tods/PerezAG09}---that consists of \removable{so-called}
	solution mappings, where \removable{a \emph{solution mapping}} is a partial function
%
	$\mu\!:\symAllVariables\rightarrow\symAllURIs \cup \symAllBNodes \cup \symAllLiterals$.
%
%

For the case that $\symPattern$ is a BGP~$\symBGP$, $\eval{\symPattern}{\symRDFgraph}$
	consists of every solution mapping~$\mu$ for which $\fctDom{\mu} = \fctVars{\symBGP}$ and $\mu[\symBGP] \subseteq \symRDFgraph$,
where $\fctVars{\symBGP}$ is the set of all variables in $\symBGP$ and $\mu[\symBGP]$ denotes the BGP that we obtain by replacing the variables in $\symBGP$ according to $\mu$; notice that $\mu[\symBGP]$ is a set of RDF triples if $\fctVars{\symBGP} \subseteq \fctDom{\mu}$.
	For any triple pattern~$\symTP$, we write $\eval{\symTP}{\symRDFgraph}$ as a shorthand notation for $\eval{\lbrace\symTP\rbrace}{\symRDFgraph}$.

For any more complex graph pattern~$\symPattern$, $\eval{\symPattern}{\symRDFgraph}$ is defined based on
	\removable{an algebra~\cite{DBLP:journals/tods/PerezAG09}. This algebra---which we call} the \emph{SPARQL algebra}\removable{---}%
consists of several
operators over sets of solution mappings, including $\Join$~(join) and $\cup$~(union)~\cite{DBLP:journals/tods/PerezAG09}. For instance, the join of
	two sets of solution mappings, $\Omega_1$~and~$\Omega_2$,
is defined as
	$\Omega_1 \Join \Omega_2 \definedAs \lbrace \mu_1 \cup \mu_2 \mid \mu_1 \in \Omega_1, \, \mu_2 \in \Omega_2, \text{ and } \mu_1 \sim \mu_2 \rbrace ,$
where we write $\mu_1 \sim \mu_2$
	if $\mu_1(?v) = \mu_2(?v)$ for every variable~$?v$ in $\fctDom{\mu_1} \cap \fctDom{\mu_2}$%
	\removable{; in this case, the combination of $\mu_1$ and $\mu_2$, denoted by $\mu_1 \cup \mu_2$, is also a solution mapping}%
.
It is not difficult to see that, for two BGPs $\symBGP_1$ and $\symBGP_2$, it holds that
	$\eval{\symBGP_1}{\symRDFgraph} \Join \eval{\symBGP_2}{\symRDFgraph} = \eval{\symBGP}{\symRDFgraph}$ where $\symBGP = \symBGP_1 \cup \symBGP_2$~\cite{DBLP:journals/tods/PerezAG09}.
Since the operators $\Join$ and $\cup$ are associative and commutative~\cite{DBLP:journals/tods/PerezAG09, DBLP:conf/icdt/Schmidt0L10}, we can avoid parenthesis when combining more than two sets of solution mappings using either of these operators; e.g., $(\Omega_1 \Join \Omega_2) \Join \Omega_3 = \Omega_1 \Join \Omega_2 \Join \Omega_3$.

\section{Data Model} \label{sec:DataModel}

This section introduces a
data model
	that captures the notion of
a federation
	of RDF data sources
that is heterogeneous in terms of the types of data access interfaces of the federation members.
	The
model includes a query semantics that defines the expected result of executing a SPARQL query over such~a~federation.

A key concept of the data model is that of an \emph{interface}, which we
	abstract
by
	two components:
a language for expressing data access requests and a function that, given an RDF graph, turns any expression in the language of the interface into a set of solution~mappings.

\begin{mydefinition} \label{def:Interface}
An
	\definedTerm{RDF data access interface}~(\definedTerm{interface})%
~$\iface$ is a tuple~$\ifaceTuple$
	where $\ifaceReqLang{}$ denotes a language and $\ifaceFct$ is a function
that maps every pair~$(\ifaceReqLangExp,\symRDFgraph)$, consisting of an expression~$\ifaceReqLangExp$ in $\ifaceReqLang{}$ and an RDF graph~$\symRDFgraph$, to a set of solution mappings.
\todo{TODO: add a paragraph about the responses (which are not necessarily sets of solution mappings in practice)}
\end{mydefinition}


The following
	\removable{three}
examples illustrate how
	any concrete interface can be defined \removable{in the context of our formalization}.

\begin{myexample} \label{ex:Interface:SPARQL}
The \emph{SPARQL endpoint interface}~\cite{feigenbaum2013sparql}
enables clients to request the
	result for
\emph{any} SPARQL query over the serv\-er-side dataset.
Hence, a server that provides this interface executes each requested SPARQL query over its dataset and, then, returns the result to the client. We may abstract this functionality by defining an interface $\iface_\mathsf{sparql} = (\ifaceReqLangSPARQL,\ifaceFct_\mathsf{sparql})$ where the expressions in $\ifaceReqLangSPARQL$ are all
	\removable{SPARQL}
graph patterns and $\ifaceFct_\mathsf{sparql}$ is defined
for every
graph pattern~$\symPattern$ and every RDF graph~$\symRDFgraph$%
	\ such
that $\ifaceFct_\mathsf{sparql}(\symPattern,\symRDFgraph) \definedAs \eval{\symPattern}{\symRDFgraph}$.
\end{myexample}

\begin{myexample} \label{ex:Interface:TPF}
While a SPARQL endpoint server enables clients to query its dataset by using the full expressive power of SPARQL, providing such a comparably complex functionality may easily overload such servers~\cite{DBLP:conf/semweb/ArandaHUV13}. To address this issue, several more restricted types of interfaces have been proposed with the goal to
	shift some of the query execution effort from the server to the clients.
The first of these alternatives has been the \emph{Triple Pattern Fragment (TPF) interface}~\cite{verborgh2016triple} that allows clients
	only to send triple pattern queries to the server. More specifically, via this interface, clients can
request the triples from the serv\-er-side dataset that match a given triple pattern.
In terms of our model, we abstract this functionality by an interface~$\iface_\mathsf{TPF} = (\ifaceReqLangTPF,\ifaceFct_\mathsf{TPF})$ where the expressions in $\ifaceReqLangTPF$ are all triple patterns and $\ifaceFct_\mathsf{TPF}$ is defined
for every triple pattern~$\symTP$ and every RDF graph~$\symRDFgraph$%
	\ such
that $\ifaceFct_\mathsf{TPF}(\symTP,\symRDFgraph) \definedAs \eval{\symTP}{\symRDFgraph}$.
\end{myexample}

\begin{myexample} \label{ex:Interface:brTPF}
The \emph{Bindings-Restricted
	TPF
(brTPF) interface}~\cite{hartig2016brtpf} extends the TPF interface by allowing clients to optionally attach intermediate results to triple pattern requests. The response to such a request is expected to contain
	only those matching triples
that are guaranteed to contribute in a join with the given intermediate result.
	We define a corresponding
interface $\iface_\mathsf{brTPF} = (\ifaceReqLangBrTPF,\ifaceFct_\mathsf{brTPF})$ where the expressions in $\ifaceReqLangBrTPF$ are
	i)~all triple patterns and ii)~%
all pairs consisting of a triple pattern and a set of solution mappings, and $\ifaceFct_\mathsf{brTPF}$ is defined
for every triple pattern~$\symTP$, every set~$\Omega$ of solution mappings, and every RDF graph~$\symRDFgraph$%
	\ such
that%
\begin{align*}
	\ifaceFct_\mathsf{brTPF}\bigl( \symTP,\symRDFgraph \,\bigr) &\definedAs \eval{\symTP}{\symRDFgraph}, \text{ and}
	\\
	\ifaceFct_\mathsf{brTPF}\bigl( (\symTP,\Omega),\symRDFgraph \,\bigr) &\definedAs
	\begin{cases}
		\eval{\symTP}{\symRDFgraph} & \text{if
					$\Omega = \emptyset$,} \\
		\{ \mu \in \eval{\symTP}{\symRDFgraph} \mid
			\exists\mu'\!\in\Omega: \mu \sim \mu'  \} & \text{else} .
	\end{cases}
\end{align*}
\end{myexample}

Given the notion of an interface, we can now define our notion of a federation.

\begin{mydefinition} \label{def:Federation}
A \definedTerm{federation member}~$\member$ is a pair~$(\symRDFgraph,\iface)$
	that consists of an RDF graph~$\symRDFgraph$ and an interface~$\iface$.
A \definedTerm{federation}~$\Fed$ is a
	finite and nonempty
set of federation members such that every member~$\member = (\symRDFgraph,\iface)$ in $\Fed$ uses a disjoint set of blank nodes; i.e., $\fctBNodes{\symRDFgraph} \cap \fctBNodes{\symRDFgraph'} = \emptyset$ for every other
$\member'\! = (\symRDFgraph'\!,\iface')$ in~$\Fed$%
.
\end{mydefinition}

\begin{myexample} \label{ex:Federation}
As a running example
	\removable{for this paper},
we consider a~%
	\removable{simple}
federation $\Fed_\mathsf{ex} = \lbrace \member_1, \member_2, \member_3 \rbrace$
	with
three members: $\member_1 = (\symRDFgraph_1,\iface_\mathsf{brTPF})$, $\member_2 = (\symRDFgraph_2,\iface_\mathsf{TPF})$, and $\member_3 = (\symRDFgraph_3,\iface_\mathsf{sparql})$.
The data in
	these members
are the following RDF graphs%
%
.%
\begin{align*}
G_1 = \lbrace &
	(\rdfterm{a}, \rdfterm{foaf\!:\!knows}, \rdfterm{c})
\rbrace
\quad \quad
G_2 = \lbrace
	(\rdfterm{c}, \rdfterm{foaf\!:\!name}, \rdfterm{"Lee"}),
\\[-1mm]
& \hspace{38.7mm}
	(\rdfterm{d}, \rdfterm{foaf\!:\!name}, \rdfterm{"Alice"})  
\rbrace
\\
G_3 = \lbrace &
	(\rdfterm{a}, \rdfterm{foaf\!:\!knows}, \rdfterm{b}),
	(\rdfterm{b}, \rdfterm{foaf\!:\!name}, \rdfterm{"Peter"})
\rbrace
\end{align*}
\end{myexample}

Informally, the result of
	a
SPARQL query over
	\removable{such}
a federation should be the same as if the query was executed over the union of all the RDF data available in all the federation members. Formally,
	this
query semantics is defined as follows.

\begin{mydefinition} \label{def:QuerySemantics}
The \definedTerm{evaluation} of a SPARQL graph pattern~$\symPattern$ over a federation~$\Fed$, denoted by $\eval{\symPattern}{\Fed}$, is a set of solution mappings that is defined
	as:
$\eval{\symPattern}{\Fed} \definedAs \eval{\symPattern}{\symRDFgraph_\mathrm{union}}$
where $\symRDFgraph_\mathrm{union} = \bigcup_{(\symRDFgraph,\iface) \in \Fed} \symRDFgraph$%
	\removable{~~~(and $\eval{\symPattern}{\symRDFgraph_\mathrm{union}}$ is as in Section~\ref{sec:Preliminaries})}%
.
\end{mydefinition}

\begin{myexample} \label{ex:QuerySemantics}
Consider a BGP~$\symBGP_\mathsf{ex} = \lbrace \symTP_1, \symTP_2 \rbrace$
	with
the two triple patterns
$\symTP_1 = ( ?x, \rdfterm{foaf\!:\!knows}, ?y )$ and
$\symTP_2 = ( ?y, \rdfterm{foaf\!:\!name}, ?z )$.
When evaluating
	$\symBGP_\mathsf{ex}$
over our example federation~$\Fed_\mathsf{ex}$ in Example~\ref{ex:Federation},
	we obtain
$\eval{\symBGP_\mathsf{ex}}{\Fed_\mathsf{ex}} = \lbrace \mu_1,\mu_2 \rbrace$ with
	\begin{align*} \mu_1 & = \lbrace ?x \!\rightarrow\! \rdfterm{a}, ?y \!\rightarrow\! \rdfterm{c}, ?z \!\rightarrow\! \rdfterm{"Lee"} \rbrace , \text{ and} \\ \mu_2 & = \lbrace ?x \!\rightarrow\! \rdfterm{a}, ?y \!\rightarrow\! \rdfterm{b}, ?z \!\rightarrow\! \rdfterm{"Peter"} \rbrace . \end{align*}
\end{myexample}

	For any
interface~$\iface=\ifaceTuple$, we say that $\iface$ \emph{supports triple pattern requests} if we can write every triple pattern~$\symTP$ as a request~$\ifaceReqLangExp$ in $\ifaceReqLang{}$ such that for every RDF graph~$\symRDFgraph$ we have that $\ifaceFct(\ifaceReqLangExp,\symRDFgraph)=\eval{\symTP}{\symRDFgraph}$.
Similarly, $\iface$ \emph{supports BGP requests} if every BGP~$\symBGP$ can be written as a request~$\ifaceReqLangExp$ in $\ifaceReqLang{}$ such that for every RDF graph~$\symRDFgraph$ we have that $\ifaceFct(\ifaceReqLangExp,\symRDFgraph)=\eval{\symBGP}{\symRDFgraph}$.
	\removable{Then,} out
of the three aforementioned~interfaces~(cf.\ Examples~\ref{ex:Interface:SPARQL}--\ref{ex:Interface:brTPF}), only the SPARQL endpoint interface supports BGP requests, but all three support triple pattern~requests.

The notion of support for triple pattern requests~(resp.\ BGP requests)
	can be carried
over to federation members%
	; e.g., if
the interface~$\iface$ of a federation member $\member = (\symRDFgraph,\iface)$ supports triple pattern requests%
, we also say that $\member$ \emph{supports triple pattern~requests}%
.

	Finally, we say that a
federation%
\ is \emph{triple pattern accessible} if
	all of its members support
triple pattern~requests%
. For the complexity results in this paper~(cf.\ Section~\ref{sec:SourceSelection:Complexity}) we assume that such federations can be encoded on the tape of a Turing Machine such that all triples that match a given triple pattern can be found in polynomial time.

\section{Query Plan Language} \label{sec:Language}

Now we
introduce our language, FedQPL, to describe logical plans for executing queries over heterogeneous federations of RDF data.

	A
logical plan is a tree of algebraic operators that capture a declarative notion of how their output is related to their input~(rather than a concrete algorithm of how the output will be produced, which is the focus of physical plans).
	Additionally, a logical plan typically also indicates an order in which the operators will be evaluated.
%
	\par As mentioned in the introduction,
		such plans are presented only informally in existing work on query engines for\hidden{~(homogeneous)} federations of RDF data.
%
In contrast to these informal representations, we define
	\removable{both the syntax and the semantics of}
FedQPL formally.
	In comparison to the standard SPARQL algebra, the main innovations of FedQPL
are that it contains operators to make explicit which federation member is accessed in each part of a query plan and to distinguish different ways of accessing a federation~member.

We begin by defining the syntax of FedQPL.

\begin{mydefinition} \label{def:FedQPL:Syntax}
A \definedTerm{FedQPL expression} is an expression~$\varphi$ that can be constructed from the following grammar,
	in which
$\xxxAlgebraOperatorName{req}$,
$\xxxAlgebraOperatorName{tpAdd}$,
$\xxxAlgebraOperatorName{bgpAdd}$,
$\xxxAlgebraOperatorName{join}$,
$\xxxAlgebraOperatorName{union}$,
$\xxxAlgebraOperatorName{mj}$,
$\xxxAlgebraOperatorName{mu}$,
$($, and $)$ are terminal symbols, 
$\ifaceReqLangExp$ is an expression in the request language~$\ifaceReqLang{}$ of some interface,
$\member$ is a federation member,
$\symTP$ is a triple pattern,
$\symBGP$ is a BGP,
and $\mathit{\Phi}$ is a nonempty set of FedQPL expressions.
\begin{align*}
\varphi ::= \quad &
      \request{\ifaceReqLangExp}{\member}
\quad | \quad
      \tpAdd{\varphi}{\symTP}{\member}
\quad | \quad
      \bgpAdd{\varphi}{\symBGP}{\member}
\quad | \quad
\\ &
      \join{\varphi}{\varphi}
\quad | \quad
      \union{\varphi}{\varphi}
\quad | \quad
       \xxxAlgebraOperatorName{mj} \, \mathit{\Phi}
\quad | \quad
       \xxxAlgebraOperatorName{mu} \, \mathit{\Phi}
\end{align*}
\end{mydefinition}

Before we present the formal semantics of FedQPL expressions, we provide an intuition of the different operators
	of the language.

\subsection{Informal Overview and Intended Use}

The
	first operator, $\xxxAlgebraOperatorName{req}$,
	captures the \removable{intention} to retrieve
the result of a given (sub)query from a given federation member, where the (sub)query is expressed in the request language of the interface provided by the federation member.

\begin{myexample} \label{ex:request}
For BGP~%
	$\symBGP_\mathsf{ex} = \lbrace \symTP_1, \symTP_2 \rbrace$
in Example~\ref{ex:QuerySemantics} we observe that member $\member_1$ of our example federation~%
	$\Fed_\mathsf{ex}$%
~(cf.\ Example~\ref{ex:Federation}) can contribute a solution mapping for~$\symTP_1$, whereas $\member_2$ can contribute two solution mappings for $\symTP_2$. The intention to retrieve these solution mappings from these federation members can be represented \removable{in a logical plan} by the operators $\request{\symTP_1}{\member_1\!}$ and $\request{\symTP_2}{\member_2\!}$, respectively.
%
	\end{myexample}\begin{myexample} \label{ex:request2} We also observe that, by accessing
	\removable{federation member~}%
$\member_3 \in \Fed_\mathsf{ex}$, we may retrieve a nonempty result for the
	example BGP~$\symBGP_\mathsf{ex}$ as a whole.
The intention to do so can be represented by the operator $\request{\lbrace\symTP_1,\symTP_2\rbrace\!}{\member_3}$. Notice that such a request
	with a BGP~(rather than with a single triple pattern)
is possible only because the interface of $\member_3$ is the SPARQL endpoint interface~$\iface_\mathsf{sparql}$~(cf.\ Example~\ref{ex:Federation}) which~supports~BGP~requests.
\end{myexample}

The second operator, $\xxxAlgebraOperatorName{tpAdd}$, captures the intention to access a\hidden{~(triple pattern accessible)} federation member to obtain solution mappings for a single triple pattern that need to be compatible with~(and are to be joined with) solution mappings in a given intermediate query result.

\begin{myexample} \label{ex:tpAdd}
Continuing with our example BGP~%
(cf.\ Example~\ref{ex:QuerySemantics}) over our example federation~%
	$\Fed_\mathsf{ex}$%
~(cf.\ Example~\ref{ex:Federation}), we observe that
	the solution mapping for $\symTP_1$ from $\member_1$ can be joined with only one of the solution mappings for $\symTP_2$ from $\member_2$.
To produce the join between the two sets of solution mappings~(i.e., between $\eval{\symTP_1}{G_1}$ and $\eval{\symTP_2}{G_2}$) we may use
	the set $\eval{\symTP_1}{G_1}$ 
as input to retrieve only
	those solution mappings for $\symTP_2$ from $\member_2$ that can actually be joined with the solution mapping in $\eval{\symTP_1}{G_1}$.
The plan to do so can be represented by
	combining a $\xxxAlgebraOperatorName{tpAdd}$ operator with the $\xxxAlgebraOperatorName{req}$ operator that retrieves $\eval{\symTP_1}{G_1}$,
which gives us the following FedQPL expression.
\begin{equation*}
\tpAddB{ \request{\symTP_1}{\member_1\!} }{\symTP_2}{\member_2\!}
\end{equation*}
\end{myexample}

While the third operator, \xxxAlgebraOperatorName{bgpAdd}, represents a BGP-based variation of $\xxxAlgebraOperatorName{tpAdd}$, the remaining operators
	($\xxxAlgebraOperatorName{join}$, $\xxxAlgebraOperatorName{union}$, $\xxxAlgebraOperatorName{mj}$, and $\xxxAlgebraOperatorName{mu}$)
lift the standard SPARQL algebra operators join and union into the FedQPL language. In particular, $\xxxAlgebraOperatorName{join}$ is a binary operator that joins two inputs whereas $\xxxAlgebraOperatorName{mj}$ represents a multiway variation of a join that can combine an arbitrary number of inputs. In contrast to $\xxxAlgebraOperatorName{tpAdd}$ and $\xxxAlgebraOperatorName{bgpAdd}$, the operators $\xxxAlgebraOperatorName{join}$ and $\xxxAlgebraOperatorName{mj}$ capture the intention to obtain the input sets of solution mappings independently and, then, join them only in the query federation engine alone. The operators $\xxxAlgebraOperatorName{union}$ and $\xxxAlgebraOperatorName{mu}$ are the union-based counterparts of $\xxxAlgebraOperatorName{join}$ and $\xxxAlgebraOperatorName{mj}$. Our language contains both the binary and the multiway variations of these operators to allow for query plans in which the intention to apply a multiway algorithm can be
	distinguished explicitly from the intention to use some algorithm designed for the binary case;
additionally, the operators $\xxxAlgebraOperatorName{mj}$ and $\xxxAlgebraOperatorName{mu}$ can be used during early stages of query planning when the order in which multiple intermediate results will be combined is not decided yet.

\begin{myexample} \label{ex:union}
By executing the operator $\request{\lbrace\symTP_1,\symTP_2\rbrace\!}{\member_3}$ as discussed in Example~\ref{ex:request2}, we can obtain a
	solution mapping that is part of the query result for our
example BGP~$\symBGP_\mathsf{ex}$~(cf.\ Example~\ref{ex:QuerySemantics}).
	Another such part of this result may be obtained based on the FedQPL expression in Example~\ref{ex:tpAdd}.
	These partial results can then be combined by using the $\xxxAlgebraOperatorName{union}$ operator as follows.
\begin{equation*}
\unionB{\; \request{\lbrace\symTP_1,\symTP_2\rbrace\!}{\member_3} \; }{ \; \tpAddB{ \request{\symTP_1}{\member_1\!} }{\symTP_2}{\member_2\!} \;}
\end{equation*} 
\end{myexample}


To further elaborate on the distinction between the $\xxxAlgebraOperatorName{join}$ operator~(as well as its multiway counterpart $\xxxAlgebraOperatorName{mj}$) and the operators $\xxxAlgebraOperatorName{tpAdd}$ and $\xxxAlgebraOperatorName{bgpAdd}$ we emphasize that the latter can be used in cases in which the processing power of a federation member can be exploited to
	join an input set of solution mappings with
the result of evaluating a triple pattern~(or a BGP) over the data of that federation member. Specific algorithms
that can be used as implementations of $\xxxAlgebraOperatorName{tpAdd}$ (or $\xxxAlgebraOperatorName{bgpAdd}$) in such cases are RDF-specific variations of the semijoin~\cite{DBLP:journals/tods/BernsteinGWRR81} and the bind join~\cite{DBLP:conf/vldb/HaasKWY97}. Concrete examples of such algorithms
	can be found in the SPARQL endpoint federation engines FedX~\cite{schwarte2011fedx}, \hidden{SPLENDID~\cite{gorlitz2011splendid}, }SemaGrow~\cite{charalambidis2015semagrow}, and CostFed~\cite{saleem2018costfed}, as well as in the brTPF client~\cite{hartig2016brtpf}.
Such algorithms rely on a data access interface in which the given input solution mappings can be captured as part of the requests.
However,
	for less expressive interfaces~(such as the TPF interface),
the $\xxxAlgebraOperatorName{tpAdd}$ operator can also be implemented using
	a variation of an index nested-loops join
in which
	a separate request is created for each input solution mapping%
~%
	\cite{quilitz2008querying,vidal2010efficiently,verborgh2016triple}.
In contrast, a standard~(local) nested-loops join---which has also been proposed in the literature on
	SPARQL federation engines%
~\cite{quilitz2008querying}---would be an implementation of the $\xxxAlgebraOperatorName{join}$ operator.
	Further examples of join algorithms that have been proposed for such engines and that would be implementations of
the $\xxxAlgebraOperatorName{join}$ operator are a group join~\cite{vidal2010efficiently}, a simple hash join~\cite{acosta2011anapsid,gorlitz2011splendid}, a symmetric hash join~\cite{acosta2011anapsid,saleem2018costfed}, and a~merge~join~\cite{charalambidis2015semagrow}.

\subsection{Validity}
While the various FedQPL operators can be combined arbitrarily as per the grammar in Definition~\ref{def:FedQPL:Syntax}, not every operator can be used arbitrarily for every federation member. In contrast,
	as already indicated in Example~\ref{ex:request2},
depending on their interface, federation members may not be capable to be accessed in the way as required by a particular operator. This observation leads to
	a
notion of validity of FedQPL
	expressions \removable{that we define recursively as follows}.

\begin{mydefinition} \label{def:FedQPL:Validity}
Let $\Fed$ be a federation.
A FedQPL expression~\definedTerm{$\varphi$ is valid for $\Fed$} if it has the following properties:
\begin{enumerate}
	\itemsep2mm 

	\item if $\varphi$ is of the form $\request{\ifaceReqLangExp}{\member}$ where $\member = (\symRDFgraph,\iface)$ with $\iface=\ifaceTuple$, then $\member \in \Fed$ and $\ifaceReqLangExp$ is an expression in $\ifaceReqLang{}$;

	\item if $\varphi$ is of the form $\tpAdd{\varphi'}{\symTP}{\member}$, then $\member \in \Fed$, $\member$ supports triple pattern requests, and $\varphi'$ is valid;

	\item if $\varphi$ is of the form $\bgpAdd{\varphi'}{\symBGP}{\member}$, then $\member \in \Fed$, $\member$ supports BGP requests, and $\varphi'$ is valid;

	\item if $\varphi$ is of the form $\join{\varphi_1}{\varphi_2}$, then $\varphi_1$ and $\varphi_2$ are valid;

	\item if $\varphi$ is of the form $\union{\varphi_1}{\varphi_2}$, then $\varphi_1$ and $\varphi_2$ are valid;

	\item if $\varphi$ is of the form $\xxxAlgebraOperatorName{mj} \mathit{\Phi}$, then every $\varphi'\! \in \mathit{\Phi}$ is valid;

	\item if $\varphi$ is of the form $\xxxAlgebraOperatorName{mu} \mathit{\Phi}$, then every $\varphi'\! \in \mathit{\Phi}$ is valid.
\end{enumerate}
\end{mydefinition}

\begin{myexample} \label{ex:Validity}
All FedQPL expressions presented in the previous examples are valid for our example federation~$\Fed_\mathsf{ex}$ in Example~\ref{ex:Federation}.
\end{myexample}

\begin{myexample} \label{ex:Validity2}
An expression that is not valid for $\Fed_\mathsf{ex}$ is $\request{\lbrace\symTP_1,\symTP_2\rbrace\!}{\member_1}$. The issue with this
	expression---and with every other expression that contains it as a subexpression---is
that it assumes that federation member~$\member_1$ supports BGP requests, which is not the case because the interface of $\member_1$ is $\iface_\mathsf{brTPF}$.
For the same reason, $\member_1$ cannot be used in the $\xxxAlgebraOperatorName{bgpAdd}$ operator, which makes expressions such as
	$\bgpAdd{ \varphi }{\lbrace\symTP_1,\symTP_2\rbrace\!}{\member_2}$ to be invalid as well (for any subexpression~$\varphi$).
\end{myexample}

In the remainder of this paper we assume FedQPL expressions that are valid for the federation for which they have been created. 

\subsection{Semantics}
Now we are ready to define a formal semantics of FedQPL. To this end, we introduce a function that defines for each~(valid) FedQPL expression, the set of solution mappings that is expected as the result of evaluating the expression. 

\begin{mydefinition} \label{def:FedQPL:Semantics}
Let $\varphi$ be a FedQPL expression that is valid for a federation~$\Fed$. The \definedTerm{solution mappings obtained with $\varphi$}, denoted by $\sols{\varphi}$, is a set of solution mappings that is defined
	recursively:
\begin{enumerate}
	\itemsep2mm 
	\item if $\varphi$ is of the form $\request{\ifaceReqLangExp}{\member}$ where $\member = (\symRDFgraph,\iface)$ with $\iface=\ifaceTuple$, then $\sols{\varphi} \definedAs \ifaceFct(\ifaceReqLangExp,\symRDFgraph)$;

	\item if $\varphi$ is
	$\tpAdd{\varphi'}{\symTP}{\member}$ where $\member = (\symRDFgraph,\iface)$ with $\iface=\ifaceTuple$, then $\sols{\varphi} \definedAs \sols{\varphi'} \Join \ifaceFct(\symTP,\symRDFgraph)$;

	\item if $\varphi$ is
	$\bgpAdd{\varphi'}{\symBGP}{\member}$ where $\member = (\symRDFgraph,\iface)$ with $\iface=\ifaceTuple$, then $\sols{\varphi} \definedAs \sols{\varphi'} \Join \ifaceFct(\symBGP,\symRDFgraph)$;

	\item if $\varphi$ is
	$\join{\varphi_1}{\varphi_2}$, then $\sols{\varphi} \definedAs \sols{\varphi_1} \Join \sols{\varphi_2}$;

	\item if $\varphi$ is
	$\union{\varphi_1}{\varphi_2}$, then $\sols{\varphi} \definedAs \sols{\varphi_1} \cup \sols{\varphi_2}$;

	\item if $\varphi$ is
		of the form
	$\xxxAlgebraOperatorName{mj} \mathit{\Phi}$ where $\mathit{\Phi} = \lbrace \varphi_1, \varphi_2, ...\,, \varphi_n \rbrace$, then $\sols{\varphi} \definedAs \sols{\varphi_1} \Join \sols{\varphi_2} \Join ... \Join \sols{\varphi_n}$;

	\item if $\varphi$ is
		of the form
	$\xxxAlgebraOperatorName{mu} \mathit{\Phi}$ where $\mathit{\Phi} = \lbrace \varphi_1, \varphi_2, ...\,, \varphi_n \rbrace$, then $\sols{\varphi} \definedAs \sols{\varphi_1} \cup \sols{\varphi_2} \cup ... \cup \sols{\varphi_n}$.
\end{enumerate}
\end{mydefinition}

\begin{myexample} \label{ex:FedQPL:Semantics}
Given our example federation~%
	$\Fed_\mathsf{ex}$%
~(cf.\ Example~\ref{ex:Federation}), for the FedQPL expressions in Examples~\ref{ex:request2} and~\ref{ex:tpAdd} we have that
$$\sols{ \request{\lbrace\symTP_1,\symTP_2\rbrace\!}{\member_3}\, } = \lbrace \mu_2 \rbrace 
\quad\text{ and }\quad
\sols{ \tpAddB{ \request{\symTP_1}{\member_1\!} }{\symTP_2}{\member_2\!} } = \lbrace \mu_1 \rbrace ,$$
where the solution mappings $\mu_1$ and $\mu_2$ are as given in Example~\ref{ex:QuerySemantics}.
Consequently, for the expression in Example~\ref{ex:union} we thus have that
$$\sols{\; \unionB{ \request{\lbrace\symTP_1,\symTP_2\rbrace\!}{\member_3} \, }{ \tpAddB{ \request{\symTP_1}{\member_1\!} }{\symTP_2}{\member_2\!} } \;} = \lbrace \mu_1, \mu_2 \rbrace .$$
\end{myexample}

\subsection{Correctness}
Given that FedQPL expressions are meant to represent (logical) query execution plans to produce the result of a given BGP over a given federation, we also introduce a correctness property to indicate whether a FedQPL expression correctly captures a given BGP for a given federation. Informally, a FedQPL expression has this correctness property if the
	set of solution mappings
obtained with the expression is the result expected for the BGP over
	the~federation.

\begin{mydefinition} \label{def:FedQPL:Correctness}
Let $\symBGP$ be a BGP and $\Fed$ be a federation.
A FedQPL expression~$\varphi$ \definedTerm{is correct for $\symBGP$ over $\Fed$} if $\varphi$ is valid for $\Fed$ and it holds that $\sols{\varphi} = \eval{\symBGP}{\Fed}$.
\end{mydefinition}

\begin{myexample} \label{ex:FedQPL:Correctness}
Based on Examples~\ref{ex:QuerySemantics} and~\ref{ex:FedQPL:Semantics}, we can see that the FedQPL expression in Example~\ref{ex:union} is correct for our example BGP~$\symBGP_\mathsf{ex}$ over our example federation~$\Fed_\mathsf{ex}$.
\end{myexample}

This completes the definition of FedQPL. In the remainder of the paper we first focus on a fragment of the language that can be used to
	capture
the output of the query decomposition~\&~source selection step. Thereafter, we show equivalences for FedQPL expressions, which provide a formal foundation for logical query optimization.

\section{Source Selection and Initial Plans} \label{sec:SourceSelection}

An important aspect
	and one of the first steps
of planning
	the
execution of queries over a federation is to identify which federation members have to be contacted for which part of a given query. The key tasks of this step are referred to as
\emph{query decomposition} and \emph{source selection}~%
	\removable{\cite{DBLP:series/sci/GorlitzS11, DBLP:journals/ker/OguzEYDH15, DBLP:journals/tlsdkcs/VidalCAMP16, DBLP:reference/bdt/AcostaHS19}}.
In this section, we identify a fragment of FedQPL that can be used to capture the output of this step formally. We call the expressions in this fragment \emph{source assignments}. After defining them, we study their expressive power as well as the complexity of finding minimal source assignments.

We emphasize that
	this source assignments fragment of FedQPL provides
a
	foundation
to
	define query decomposition and source selection approaches formally, and to compare such approaches systematically.
Moreover, and perhaps more importantly from a practical perspective, if the output of such an approach is described in the form of a source assignment, it can readily be used as an \emph{initial logical plan} that can then be rewritten and refined during
	the
subsequent query optimization~steps.

\subsection{Source Assignments} \label{ssec:SourceAssignments}


The goal of query decomposition is to split a given BGP into smaller components, called \emph{subqueries}, which may be subsets of the BGP or even individual triple patterns. The goal of source selection is to assign to each such subquery the federation members from which we may retrieve a nonempty result for the subquery. We may capture such an assignment of a federation member to a subquery by the FedQPL operator $\xxxAlgebraOperatorName{req}$ for which we only have to consider requests in the form of a BGP or a triple pattern. Additionally, it needs to be specified how these individual assignments belong together such that the intermediate results that may be obtained from them can be combined correctly into the complete result of the given BGP. To this end, we may use the operators $\xxxAlgebraOperatorName{mu}$~(for intermediate results that cover the same subqueries) and $\xxxAlgebraOperatorName{mj}$~(for intermediate results of different subqueries). We emphasize that we select the multiway versions of union and join on
	purpose; they do not prescribe any order over their input operators, which captures more accurately the output of the query decomposition and source selection step (deciding on such an order is not part of this step, but of the subsequent query optimization steps).
With this, we have all
	parts
of
	the language fragment needed for source assignments.

\begin{mydefinition} \label{def:SourceAssignment}
A \definedTerm{source assignment} is a FedQPL expression that
	uses
only the operators 
$\xxxAlgebraOperatorName{req}$,
$\xxxAlgebraOperatorName{mj}$ and
$\xxxAlgebraOperatorName{mu}$,
and for each subexpression of the form $\request{\ifaceReqLangExp}{\member}$ it holds that $\ifaceReqLangExp$ is a triple pattern or a BGP.
\end{mydefinition}

\begin{myexample} \label{ex:SourceAssignment}
From the running example we recall that members $\member_1$ and $\member_3$ of our example federation~%
	$\Fed_\mathsf{ex}$%
	\ 
can contribute matching triples for~$\symTP_1$ of the example BGP~%
	$\symBGP_\mathsf{ex}$%
~(cf.\ Example~\ref{ex:QuerySemantics}), whereas $\symTP_2$ can be matched only in the data of $\member_2$ and $\member_3$. Hence, we may use the following source assignment~$a_\mathsf{ex}$ for $\symBGP_\mathsf{ex}$ over $\Fed_\mathsf{ex}$.
\begin{equation*}
\mjoinB{
\;
	\munion{
		\request{\symTP_1}{\member_1},
		\request{\symTP_1}{\member_3}
	}
\;,
\;
	\munion{
		\request{\symTP_2}{\member_2},
		\request{\symTP_2}{\member_3}
	}
\;
}
\end{equation*}
However,
the matching triple for $\symTP_1$ in $\member_1$ can be combined only with the triple for $\symTP_2$ in $\member_2$ and, similarly, the triple for $\symTP_1$ in $\member_3$ can be combined only with the triple for $\symTP_2$ in $\member_3$.
	Hence,
a more sophisticated
	query decomposition \&
source selection approach may prune one access to $\member_3$ by combining $\symTP_1$ and $\symTP_2$ for a single access to $\member_3$, which gives us the following source~assignment~$a'_\mathsf{ex}$%
.%
\begin{equation*}
\munionB{
\;
	\mjoin{
		\request{\symTP_1}{\member_1},
		\request{\symTP_2}{\member_2}
	}
\;,
\;
	\request{\lbrace\symTP_1,\symTP_2\rbrace}{\member_3}
\;
}
\end{equation*}
It can be easily verified that both of these source assignments, $a_\mathsf{ex}$ and $a'_\mathsf{ex}$, are correct for $\symBGP_\mathsf{ex}$ over $\Fed_\mathsf{ex}$.
\end{myexample}

\subsection{Exhaustive Source Assignments}

In the previous example~(Example~\ref{ex:SourceAssignment}) we assume to have detailed knowledge of the data available at all the federation members. In practice, however, this knowledge may be much more limited and is captured in some form of pre-populated data catalog with metadata about the federation members~\cite{DBLP:series/sci/GorlitzS11, DBLP:journals/ker/OguzEYDH15, DBLP:reference/bdt/AcostaHS19}. The particular types of metadata vary for each source selection approach that relies on such a catalog. On the other hand, there are also source selection approaches that do not use a pre-populated data catalog at all but, instead, aim to obtain some information about the data of the federation members during the source selection process~itself~\cite{schwarte2011fedx, DBLP:journals/tlsdkcs/VidalCAMP16, abdelaziz2017lusail}.

A straightforward approach
	that does not require any such information is to create a source assignment that requests
every triple pattern of the BGP separately at each federation member. Then, the results of these requests can be unioned per triple pattern and, finally, all the triple pattern specific unions can be
	joined.
We call a source assignment that captures this approach \emph{exhaustive}.

\begin{mydefinition} \label{def:ExhaustiveSourceAssignment}
Let $\Fed = \lbrace \member_1, \member_2, ...\, , \member_m \rbrace$ be a federation that is triple pattern accessible, and let $\symBGP = \lbrace \symTP_1, \symTP_2, ...\, , \symTP_n \rbrace$ be a BGP.
The \definedTerm{exhaustive source assignment} for $\symBGP$ over~$\Fed$ is the following source assignment.
$$
\mjoinB{
	\;
	\munion{
		\request{\symTP_1}{\member_1},...\,,\request{\symTP_1}{\member_m}
	}
	\;
	, ...\,,
	\;
	\munion{
		\request{\symTP_n}{\member_1},...\,,\request{\symTP_n}{\member_m}
	}
	\;
}
$$
\end{mydefinition}

The following result follows
	readily
from Definitions~\ref{def:QuerySemantics}, \ref{def:FedQPL:Semantics}, \ref{def:FedQPL:Correctness},
\ref{def:ExhaustiveSourceAssignment}.

\begin{myproposition} \label{prop:CorrectnessOfExhaustiveSourceAssignments}
Let $\symBGP$ be a BGP and $\Fed$ be a federation that is triple pattern accessible.
The exhaustive source assignment for $\symBGP$ over~$\Fed$ is correct for $\symBGP$ over~$\Fed$.
\end{myproposition}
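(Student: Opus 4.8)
The plan is to check the two conditions of Definition~\ref{def:FedQPL:Correctness} for the exhaustive source assignment~$\varphi$, namely that $\varphi$ is valid for~$\Fed$ and that $\sols{\varphi} = \eval{\symBGP}{\Fed}$. Write $\Fed = \lbrace \member_1,\dots,\member_m \rbrace$ with $\member_j = (\symRDFgraph_j,\iface_j)$ and $\iface_j = (\ifaceReqLang{},\ifaceFct_j)$, and $\symBGP = \lbrace \symTP_1,\dots,\symTP_n \rbrace$ (both $m\ge 1$ and $n\ge 1$, so the multiway sets occurring in~$\varphi$ are nonempty as the grammar requires). Validity is then immediate: since $\Fed$ is triple pattern accessible, each $\member_j$ supports triple pattern requests, so every $\symTP_i$ can be written as some request~$\ifaceReqLangExp_{i,j}$ in $\ifaceReqLang{}$ with $\ifaceFct_j(\ifaceReqLangExp_{i,j},\symRDFgraph) = \eval{\symTP_i}{\symRDFgraph}$ for all RDF graphs~$\symRDFgraph$; hence each leaf $\request{\ifaceReqLangExp_{i,j}}{\member_j}$ satisfies condition~1 of Definition~\ref{def:FedQPL:Validity}, and closing under conditions~6 and~7 (for $\xxxAlgebraOperatorName{mu}$ and $\xxxAlgebraOperatorName{mj}$) gives validity of all of~$\varphi$.

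For the semantics I would first unfold $\sols{\varphi}$ bottom-up using Definition~\ref{def:FedQPL:Semantics}: each leaf gives $\sols{\request{\ifaceReqLangExp_{i,j}}{\member_j}} = \ifaceFct_j(\ifaceReqLangExp_{i,j},\symRDFgraph_j) = \eval{\symTP_i}{\symRDFgraph_j}$, each inner $\xxxAlgebraOperatorName{mu}$-node evaluates to $\bigcup_{j=1}^{m}\eval{\symTP_i}{\symRDFgraph_j}$, and the outer $\xxxAlgebraOperatorName{mj}$-node therefore yields
$$
\sols{\varphi} \;=\; \Bigl(\textstyle\bigcup_{j=1}^{m}\eval{\symTP_1}{\symRDFgraph_j}\Bigr) \Join \cdots \Join \Bigl(\textstyle\bigcup_{j=1}^{m}\eval{\symTP_n}{\symRDFgraph_j}\Bigr),
$$
which is well-defined because $\Join$ is associative and commutative. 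On the other side, Definition~\ref{def:QuerySemantics} gives $\eval{\symBGP}{\Fed} = \eval{\symBGP}{\symRDFgraph_\mathrm{union}}$ with $\symRDFgraph_\mathrm{union} = \bigcup_{j=1}^{m}\symRDFgraph_j$, and I would bridge the two sides with two small facts. (a)~For a \emph{single} triple pattern, $\eval{\symTP}{\symRDFgraph_\mathrm{union}} = \bigcup_{j=1}^{m}\eval{\symTP}{\symRDFgraph_j}$: this is immediate from $\eval{\symTP}{\symRDFgraph} = \lbrace \mu \mid \fctDom{\mu}=\fctVars{\symTP},\ \mu[\symTP]\in\symRDFgraph \rbrace$, since $\mu[\symTP]$ lies in a union of graphs iff it lies in one of them. (b)~$\eval{\symBGP}{\symRDFgraph_\mathrm{union}} = \eval{\symTP_1}{\symRDFgraph_\mathrm{union}}\Join\cdots\Join\eval{\symTP_n}{\symRDFgraph_\mathrm{union}}$: this follows by induction on~$n$ from the fact recalled in Section~\ref{sec:Preliminaries} that $\eval{\symBGP_1\cup\symBGP_2}{\symRDFgraph} = \eval{\symBGP_1}{\symRDFgraph}\Join\eval{\symBGP_2}{\symRDFgraph}$, again using associativity and commutativity of~$\Join$. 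Substituting~(a) into~(b) produces exactly the right-hand side of the display above, so $\sols{\varphi} = \eval{\symBGP}{\Fed}$, completing the proof.

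There is no deep obstacle here; the argument is essentially bookkeeping over the two recursive definitions. The one point that deserves care is fact~(a): the analogous statement is \emph{false} for BGPs in general, since a BGP can be satisfied by triples contributed by several different members, so $\eval{\symBGP}{\symRDFgraph_\mathrm{union}}$ is strictly larger than $\bigcup_j \eval{\symBGP}{\symRDFgraph_j}$ in typical cases. This is precisely why the exhaustive source assignment pushes the union all the way down to individual triple patterns before joining, and getting this ``union-then-join'' order right is the crux of matching $\sols{\varphi}$ with $\eval{\symBGP}{\Fed}$.
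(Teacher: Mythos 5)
Your proof is correct and is precisely the definition-unfolding argument the paper alludes to when it says the result ``follows readily'' from Definitions~\ref{def:QuerySemantics}, \ref{def:FedQPL:Semantics}, \ref{def:FedQPL:Correctness}, and~\ref{def:ExhaustiveSourceAssignment} (the paper gives no explicit proof). Your facts~(a) and~(b), together with the remark that~(a) fails for BGPs and that this is why the union must be pushed down to individual triple patterns, capture exactly the content that the paper leaves implicit.
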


\begin{mycorollary}
By Proposition~\ref{prop:CorrectnessOfExhaustiveSourceAssignments}, it follows
	trivially
that for every BGP~$\symBGP$ and every triple pattern accessible federation~$\Fed$, there exists a source assignment that is correct for $\symBGP$ over~$\Fed$.
\end{mycorollary}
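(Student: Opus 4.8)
The plan is to prove the claim by exhibiting an explicit witness, namely the \emph{exhaustive source assignment} for $\symBGP$ over $\Fed$ as defined in Definition~\ref{def:ExhaustiveSourceAssignment}.

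First I would observe that this exhaustive source assignment is well-defined under the hypotheses of the corollary: by Definition~\ref{def:Federation} the federation $\Fed = \lbrace \member_1, \dots, \member_m \rbrace$ is finite and nonempty, and a BGP $\symBGP = \lbrace \symTP_1, \dots, \symTP_n \rbrace$ is by definition a nonempty (finite) set of triple patterns, so the finite $\xxxAlgebraOperatorName{mj}$ of finite $\xxxAlgebraOperatorName{mu}$'s appearing in Definition~\ref{def:ExhaustiveSourceAssignment} is a legal FedQPL expression (with $m, n \geq 1$). Second, I would check that this expression indeed lies in the source-assignments fragment of Definition~\ref{def:SourceAssignment}: it is built only from the operators $\xxxAlgebraOperatorName{req}$, $\xxxAlgebraOperatorName{mj}$, and $\xxxAlgebraOperatorName{mu}$, and every occurrence of $\xxxAlgebraOperatorName{req}$ has the form $\request{\symTP_i}{\member_j}$ whose request argument $\symTP_i$ is a triple pattern, so both conditions of Definition~\ref{def:SourceAssignment} hold verbatim. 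Third and finally, I would invoke Proposition~\ref{prop:CorrectnessOfExhaustiveSourceAssignments}, which states precisely that this exhaustive source assignment is correct for $\symBGP$ over $\Fed$; note that, by Definition~\ref{def:FedQPL:Correctness}, correctness already subsumes validity for $\Fed$, so nothing further needs to be verified. Combining the three observations yields a source assignment that is correct for $\symBGP$ over $\Fed$, which is exactly the existential claim.

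I do not expect any genuine obstacle here: the corollary is a direct specialisation of the preceding proposition, which is why the statement itself already calls it trivial. The only point worth a second glance is implicit in the validity requirement for the $\xxxAlgebraOperatorName{req}$ operator (clause~1 of Definition~\ref{def:FedQPL:Validity}): it demands that each $\symTP_i$ be expressible as a request in the request language of the interface of $\member_j$. But this is guaranteed because $\Fed$ is assumed triple pattern accessible, meaning every member supports triple pattern requests; hence validity is automatic and the witness goes through without complication.
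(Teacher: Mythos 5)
Your proof is correct and follows exactly the route the paper intends: the corollary is stated as a trivial consequence of Proposition~\ref{prop:CorrectnessOfExhaustiveSourceAssignments}, with the exhaustive source assignment serving as the explicit witness. Your additional checks (well-definedness from finiteness and nonemptiness, membership in the source-assignment fragment, and validity via triple pattern accessibility) are all accurate and merely make explicit what the paper leaves implicit.
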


 \begin{table*}[t!]
 \centering
 \begin{tabular}{llc}
 \hline
 \multicolumn{1}{c}{federation engine} & \multicolumn{1}{c}{source assignment} & sa-cost \\ \hline \hline
  FedX~\cite{schwarte2011fedx}
  &
$\mjoinB{
	\;
	\request{\lbrace\symTP_1,\symTP_2\rbrace}{\member_\mathsf{drb}}
	\, , \;
	\request{\lbrace\symTP_3,\symTP_4\rbrace}{\member_\mathsf{kegg}}
	\, , \;
	\munion{
		\request{\symTP_5}{\member_\mathsf{drb}}, \,
		\request{\symTP_5}{\member_\mathsf{kegg}}, \,
		\request{\symTP_5}{\member_\mathsf{dbp}}
	}
	\;
}$
 & 5 \\ \hline
  SemaGrow~\cite{charalambidis2015semagrow}
 &
$\mjoinB{
	\;
	\request{\lbrace\symTP_1,\symTP_2\rbrace}{\member_\mathsf{drb}}
	\, , \;
	\request{\symTP_3}{\member_\mathsf{kegg}}
	\, , \;
	\munion{
		\request{\symTP_4}{\member_\mathsf{kegg}}, \,
		\request{\symTP_4}{\member_\mathsf{chebi}}
	}
	\, , \;
	\munion{
		\request{\symTP_5}{\member_\mathsf{kegg}}, \,
		\request{\symTP_5}{\member_\mathsf{chebi}}
	}
	\;
}$
 & 6 \\ \hline
 CostFed~\cite{saleem2018costfed}
  &
$\mjoinB{
	\;
	\request{\lbrace\symTP_1,\symTP_2\rbrace}{\member_\mathsf{drb}}
	\, , \;
	\request{\lbrace\symTP_3,\symTP_4,\symTP_5\rbrace}{\member_\mathsf{kegg}}
	\;
}$
 & 2 \\ \hline
 \end{tabular}
 \caption{Source assignments for FedBench query LS6
	by different federation engines.}
 \label{tb: source-selection-assignment}
 \end{table*}

\subsection{Cost and Minimality}

While exhaustive source assignments are correct, for many cases there are other source assignments that are also correct but have a smaller number of
	$\xxxAlgebraOperatorName{req}$ operators.
Smaller numbers are desirable from a performance perspective because each such
	operator
represents the intention to access a given federation member regarding a particular subquery. Hence, we may use this number as a simple
cost function to compare source assignments and, ultimately, to compare different query decomposition \& source selection approaches.

\begin{mydefinition} \label{def:CostOfSourceAssignments}
The \definedTerm{source access cost}~(\definedTerm{sa-cost}) of a source assignment~$a$, denoted by $\sacost{a}$, is the number of subexpressions of the form $\request{\ifaceReqLangExp}{\member}$ that are contained
	(recursively)
within~$a$.
\end{mydefinition}

	Notice
that, for the exhaustive source assignment~$a$ for a BGP~$\symBGP$ over a triple pattern accessible federation~$\Fed$%
	:
$\sacost{a} = \left|\symBGP\right| \cdot \left|\Fed\right|$. For other source assignments, the sa-cost may be smaller.

\begin{myexample} \label{ex:CostOfSourceAssignments}
The exhaustive source assignment for our example BGP~$\symBGP_\mathsf{ex}$~(cf.\ Example~\ref{ex:QuerySemantics}) over federation~$\Fed_\mathsf{ex}$~(cf.\ Example~\ref{ex:Federation})
	would have
an sa-cost of~6. In contrast, the sa-cost of the source assignments $a_\mathsf{ex}$ and $a'_\mathsf{ex}$ in Example~\ref{ex:SourceAssignment} is 4 and 3, respectively. 
\end{myexample}

Based on our notion of sa-cost, we can also introduce a notion of minimality for source assignments.

\begin{mydefinition} \label{def:MinimalityOfSourceAssignments}
Let $\symBGP$ be a BGP and $\Fed$ be a federation that is triple pattern accessible. A source assignment~$a$ that is correct for~$\symBGP$ over~$\Fed$ is \definedTerm{minimal} for~$\symBGP$ over~$\Fed$ if there
	is no other
source assignment~$a'$ such that $a'$ is
correct for $\symBGP$ over~$\Fed$ and $\sacost{a} > \sacost{a'}$.
\end{mydefinition}

\begin{myexample} \label{ex:MinimalityOfSourceAssignments}
By
	\removable{comparing}
the sa-costs of
$a_\mathsf{ex}$ and $a'_\mathsf{ex}$ in Example~\ref{ex:CostOfSourceAssignments}, we see that $a_\mathsf{ex}$ is \emph{not} minimal for $\symBGP_\mathsf{ex}$ over~$\Fed_\mathsf{ex}$.
Moreover, it is not difficult to check that $a'_\mathsf{ex}$ is minimal for $\symBGP_\mathsf{ex}$ over~$\Fed_\mathsf{ex}$.
\end{myexample}

	Finding minimal source assignments is a problem that
resembles the
	\removable{typical}
optimization problem that existing query decomposition~\&~source selection approaches aim to solve
	for
homogeneous federations~(e.g.,
	\removable{\cite{abdelaziz2017lusail,acosta2011anapsid,charalambidis2015semagrow,saleem2018costfed,schwarte2011fedx,DBLP:journals/tlsdkcs/VidalCAMP16}}%
).
Before we study the complexity of this problem for our more general case of heterogeneous federations, we demonstrate the suitability of
	source assignments
as a formal foundation to capture the output of these existing
	approaches, which also allows us to
show limitations of these approaches.

\subsection{Application to Existing Approaches}
\label{sec:SourceSelection:ExistingApproaches}

	This section illustrates
how our notion of source assignments can be applied to describe the output of several existing source selection
	approaches. To this end,
we consider the query LS6 from the FedBench benchmark~\cite{schmidt2011fedbench}, which consists of a
	BGP with five triple patterns, $\symBGP_\mathsf{LS6} = \{\symTP_1, \symTP_2, \symTP_3, \symTP_4, \symTP_5\}$~(cf.\ Listing~\ref{tb:source-assignment-examples}).


\lstset{language=SQL,morekeywords={PREFIX,java,url}}
\begin{lstlisting}[captionpos=b,
                   label=tb:source-assignment-examples,
%                       caption=Query LS6 of the FedBench benchmark (prefix declarations omitted),
                       caption=FedBench query LS6 (prefix declarations omitted),
                   basicstyle=\ttfamily\small,
                   frame=single]
SELECT ?drug ?title WHERE {
  ?drug db:drugCategory dbc:micronutrient .  # tp1
  ?drug db:casRegistryNumber ?id .           # tp2
  ?keggDrug rdf:type kegg:Drug .             # tp3
  ?keggDrug bio2rdf:xRef ?id .               # tp4
  ?keggDrug purl:title ?title }              # tp5
\end{lstlisting}

The FedBench benchmark specifies a federation consisting of several members that all provide a SPARQL endpoint interface, where
	only some
of these members are potentially relevant for query~LS6. Hence, in terms of our data model, we have a federation~$\Fed_\mathsf{FedBench}$ that, among others, contains the following~%
	\removable{four}
members:
\begin{align*}
	\member_\mathsf{drb} &= (\mathit{DrugBank}, \iface_\mathsf{sparql}) \in \Fed_\mathsf{FedBench},   \\
	\member_\mathsf{kegg} &= (\mathit{KEGG}, \iface_\mathsf{sparql}) \in \Fed_\mathsf{FedBench},       \\
	\member_\mathsf{dbp} &= (\mathit{DBPedia}, \iface_\mathsf{sparql}) \in \Fed_\mathsf{FedBench},      \\
	\member_\mathsf{chebi} &= (\mathit{ChEBI}, \iface_\mathsf{sparql}) \in \Fed_\mathsf{FedBench}.    
\end{align*}

FedBench query LS6 is interesting for our purpose because different authors have used it as an example to describe their query decomposition \& source selection approaches~\cite{charalambidis2015semagrow,saleem2018costfed,schwarte2011fedx}. Hence, the output of these approaches for this query is well documented; yet, the authors present these outputs only
	informally \removable{within a textual description or in figures.}
Given our notion of source assignments, we now can provide a precise formal description.

Hence, based on the informal descriptions provided by the authors, we have
	reconstructed
the respective source assignments that the corresponding SPARQL federation engines would produce for BGP~$\symBGP_\mathsf{LS6}$ over \removable{the federation}~$\Fed_\mathsf{FedBench}$. Table~\ref{tb: source-selection-assignment} presents these source assignments.
We emphasize that all
	these source assignments
are correct for $\symBGP_\mathsf{LS6}$ over $\Fed_\mathsf{FedBench}$.
Yet, they are syntactically different and
have different sa-costs, as also detailed in
	the table.
The source assignment with the lowest sa-cost is found
	by 
	CostFed,
whereas the source assignment of 
SemaGrow has the highest sa-cost%
	\removable{~(in this particular~case)}%
.

\subsection{Expressive Power of Source Assignments} \label{sec:SourceSelection:ExpressivePower}

While the source assignments of the different approaches in the previous section differ from one another~(cf.\ Table~\ref{tb: source-selection-assignment}), we observe that they are all of the same \removable{general} form. A related concept that resembles this specific form of source assignments is Vidal et al.%
	's notion of a ``\emph{SPARQL query decomposition}''~\cite{DBLP:journals/tlsdkcs/VidalCAMP16}.
%
	These observations raise the following questions: how does Vidal at el.'s notion compare to our notion of source assignments and, ultimately, \emph{what is the expressive power of the form of source assignments that it resembles}? \par To address these questions we define Vidal et al.'s notion in terms of the source assignments fragment of FedQPL. 
%
To this end, we first notice that 
	a more restricted version of the grammar is sufficient. 

\begin{mydefinition} \label{def:SourceAssignment:JoinOverUnionClass}
The \definedTerm{joins-over-unions class} of source assignments, denoted by $S_{\Join_{(\cup)}}$,
consists of every source assignment~$a$ that can be constructed from the following grammar,
where
$\xxxAlgebraOperatorName{mj}$,
$\xxxAlgebraOperatorName{mu}$, and
$\xxxAlgebraOperatorName{req}$
are terminal symbols,
$\ifaceReqLangExp$ is a triple pattern or a BGP,
$\member$ is a federation member,
	$\mathit{\Phi}_u$ is a nonempty set of source assignments that can be formed by the construction~$a_u$, and $\mathit{\Phi}_b$ is a nonempty set of source assignments that can be formed by the construction~$a_b$.
\begin{center}%
\begin{tabular}{lcl}
$a$ &
\, $::=$ \quad \,&
$     a_u
\quad | \quad
      \xxxAlgebraOperatorName{mj} \, \mathit{\Phi}_u$
\\[1mm]
$a_u$ &
\, $::=$ \quad \,&
$     a_b
\,\quad | \quad
      \xxxAlgebraOperatorName{mu} \, \mathit{\Phi}_b$
\\[1mm]
$a_b$ &
\, $::=$ \quad \,&
$\request{\ifaceReqLangExp}{\member}$
\end{tabular}
\end{center}
\end{mydefinition}

In addition to restricting the grammar, we need to introduce further syntactic restrictions to accurately capture Vidal et al.'s concept of a SPARQL query decomposition. Namely, in terms of our language, this concept is limited to source assignments in $S_{\Join_{(\cup)}}$ for which it holds that, within any subexpression of the form
$$\munionB{
	\request{\ifaceReqLangExp_1}{\member_1}
	, ...\, ,
	\request{\ifaceReqLangExp_n}{\member_n}
},$$
all $\ifaceReqLangExp_1, ...\, , \ifaceReqLangExp_n$ are the same triple pattern or
BGP; additionally, for source assignments of the form $\mjoin{ a_1, ...\, , a_n }$, there cannot be any triple pattern that occurs in more than one of the subexpressions%
.

	For the following formal definition of
these additional restrictions we introduce the recursive function $\mathsf{subexprs}$ that maps every source assignment~$a$ to a~set of all (sub)expressions contained in~$a$,
\begin{equation*}
\mathsf{subexprs}(a) \!\definedAs \begin{cases}
	\lbrace a \rbrace \hspace{26mm}\text{if $a$ is of the form } \request{\ifaceReqLangExp}{\member} , \\
	\lbrace a \rbrace \cup\, \bigcup_{1 \leq i \leq n}\!\mathsf{subexprs}(a_i)
		\quad \text{if $a$ is of the form } \\[-1.5mm] \hspace{38.5mm} \munion{ a_1, ...\, , a_n } , \\
	\lbrace a \rbrace \cup\, \bigcup_{1 \leq i \leq n}\!\mathsf{subexprs}(a_i)
		\quad \text{if $a$ is of the form } \\[-1.5mm] \hspace{38.5mm} \mjoin{ a_1, ...\, , a_n } .
\end{cases}
\end{equation*}
%

Now we are ready to define the \emph{restricted} joins-over-unions class
	that captures Vidal et al.'s concept of a ``SPARQL query decomposition'' accurately.

\begin{mydefinition} \label{def:SourceAssignment:RestrictedJoinOverUnionClass}
The \definedTerm{restricted joins-over-unions class} of source assignments, denoted by $S^*_{\Join_{(\cup)}}$, consists of every source assignment~$a$ that is in $S_{\Join_{(\cup)}}$ and that has
the following two properties.
\begin{enumerate}
	\itemsep2mm 

	\item  \label{defitem:SourceAssignment:RestrictedJoinOverUnionClass:1}
	for every expression
	of the form $\munionB{
		\request{\ifaceReqLangExp_1}{\member_1}
		, ...\, ,
		\request{\ifaceReqLangExp_n}{\member_n}
	}$ that is in $\mathsf{subexprs}(a)$, it holds that
	    $\ifaceReqLangExp_i = \ifaceReqLangExp_j$ for all $i,j \in \lbrace 1, ...\,, n \rbrace$;




	\item  \label{defitem:SourceAssignment:RestrictedJoinOverUnionClass:2}
	if $a$ is of the form $\mjoin{ a_1, ...\, , a_n }$, then
	$\mathsf{tps}(a_i) \cap \mathsf{tps}(a_j) = \emptyset$ for all~$i,j \in \lbrace 1, ...\,, n \rbrace$, where $\mathsf{tps}(a_k)$ denotes to the set of all the triple patterns mentioned in \removable{subexpression}~$a_k$ for all $k \in \lbrace 1, ...\,, n \rbrace$, i.e.,
	\begin{align*}
		\quad\quad
		\mathsf{tps}(a_k) \definedAs \,
		& \lbrace \symTP \in \ifaceReqLangExp \mid a' \in \mathsf{subexprs}(a_k) \text{ such that } a' \text{ is of}
		\\[-2mm]
		& \hspace{13mm} \text{the form $\request{\ifaceReqLangExp}{\member}$ where $\ifaceReqLangExp$ is a BGP } \rbrace
		\;
		\\[-1.5mm]
		& \cup \, \lbrace \ifaceReqLangExp \mid a' \in \mathsf{subexprs}(a_k) \text{ such that } a' \text{ is of the}
		\\[-2mm]
		& \hspace{9.5mm} \text{form $\request{\ifaceReqLangExp}{\member}$ where $\ifaceReqLangExp$ is a triple pattern } \rbrace
		.
	\end{align*}
\end{enumerate}
\end{mydefinition}

Notice that all exhaustive source assignments~(cf.\ Definition~\ref{def:ExhaustiveSourceAssignment}) are in the class~$S^*_{\Join_{(\cup)}}$, and so are the source assignments of Table~\ref{tb: source-selection-assignment}. More generally, to the best of our knowledge, this class encompasses all types of source assignments that
	the query decomposition~\&~source selection approaches proposed in the literature can produce.
%
A natural question at this point
	is: \emph{does their restriction to consider only source assignments in~$S^*_{\Join_{(\cup)}}$ present an actual limitation in the sense that these approaches
		are inherently unable to find a minimal source assignment in specific~cases}?

The following result shows that this is indeed the case.

\begin{myproposition} \label{prop:LimitationOfRestrictedJoinOverUnionClass}
There exists a BGP~$\symBGP$ and a triple pattern accessible federation~$\Fed$ such that
	all source assignments that are both correct and minimal for $\symBGP$ over~$\Fed$ are not in $S^*_{\Join_{(\cup)}}$.
\end{myproposition}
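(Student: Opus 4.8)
The plan is to exhibit an explicit counterexample: a small BGP $\symBGP$ and a triple pattern accessible federation~$\Fed$ for which every correct and minimal source assignment must use the ``cross-cutting'' expressive power that $S^*_{\Join_{(\cup)}}$ forbids. The guiding intuition comes directly from the running example and from Example~\ref{ex:SourceAssignment}: there, the minimal source assignment~$a'_\mathsf{ex}$ is obtained by bundling $\symTP_1$ and $\symTP_2$ together for member~$\member_3$ while keeping them separate for $\member_1$ and $\member_2$, so that one ``branch'' of the outer $\xxxAlgebraOperatorName{mu}$ treats $\{\symTP_1,\symTP_2\}$ as a unit and another branch splits it. In $S^*_{\Join_{(\cup)}}$ this is impossible: property~\ref{defitem:SourceAssignment:RestrictedJoinOverUnionClass:1} forces every $\xxxAlgebraOperatorName{mu}$ node to union requests for \emph{one and the same} pattern, and property~\ref{defitem:SourceAssignment:RestrictedJoinOverUnionClass:2} forbids any triple pattern from appearing in two different arguments of an $\xxxAlgebraOperatorName{mj}$. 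So I would design $\Fed$ so that the only way to be simultaneously correct and cheap is precisely to mix granularities across union branches.

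Concretely, I would take (a two-triple-pattern variant of) the running example: $\symBGP = \{\symTP_1,\symTP_2\}$ with $\symTP_1 = (?x,\rdfterm{foaf\!:\!knows},?y)$ and $\symTP_2 = (?y,\rdfterm{foaf\!:\!name},?z)$, and a federation $\Fed = \{\member_1,\member_2,\member_3\}$ whose data are engineered so that (i)~$\member_1$ contributes matches only for $\symTP_1$, $\member_2$ only for $\symTP_2$; (ii)~$\member_3$ contributes matches for both $\symTP_1$ and $\symTP_2$ and, crucially, the $\member_3$-match for $\symTP_1$ joins only with the $\member_3$-match for $\symTP_2$ (not with the $\member_2$-match), and symmetrically the $\member_1$-match for $\symTP_1$ joins only with the $\member_2$-match for $\symTP_2$. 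The data $\symRDFgraph_1,\symRDFgraph_2,\symRDFgraph_3$ from Example~\ref{ex:Federation} already have this shape; I would if necessary add an extra dangling triple to $\symRDFgraph_3$ so that omitting either of $\member_3$'s two contributions loses a solution, i.e. so that $\member_3$ is genuinely needed for both patterns. I would then (1)~compute $\eval{\symBGP}{\Fed}$ explicitly; (2)~argue a lower bound on $\sacost$ for \emph{any} correct source assignment over this $\Fed$ by a counting/coverage argument — each solution mapping must be produced by some $\xxxAlgebraOperatorName{req}$-subexpression, and the join structure forces a certain minimum number of distinct $\xxxAlgebraOperatorName{req}$ operators; (3)~exhibit one correct source assignment (of the form of $a'_\mathsf{ex}$) meeting that bound, establishing the minimum; and (4)~show that no assignment in $S^*_{\Join_{(\cup)}}$ can meet the bound, by case analysis on the grammar of Definition~\ref{def:SourceAssignment:RestrictedJoinOverUnionClass} together with properties~\ref{defitem:SourceAssignment:RestrictedJoinOverUnionClass:1} and~\ref{defitem:SourceAssignment:RestrictedJoinOverUnionClass:2}.

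For step~(4), the key structural observation is: a correct $S^*_{\Join_{(\cup)}}$ assignment for a two-pattern BGP is either a single $\xxxAlgebraOperatorName{req}$ with a BGP argument, or an $\xxxAlgebraOperatorName{mu}$ of such single requests (property~1 then forces all of them to carry the full BGP $\{\symTP_1,\symTP_2\}$), or an $\xxxAlgebraOperatorName{mj}$ of at most two arguments one of which ``covers'' $\symTP_1$ and the other ``covers'' $\symTP_2$ with disjoint triple-pattern sets (property~2). In the first two cases correctness fails because $\member_1$ and $\member_2$ do not support BGP requests (they are brTPF/TPF in the example, so $\request{\{\symTP_1,\symTP_2\}}{\member_1}$ is not even valid) and $\member_3$ alone does not yield all solutions. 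In the $\xxxAlgebraOperatorName{mj}$ case, the $\symTP_1$-side must be a $\xxxAlgebraOperatorName{mu}$ over $\{\request{\symTP_1}{\member_1},\request{\symTP_1}{\member_3}\}$ and the $\symTP_2$-side a $\xxxAlgebraOperatorName{mu}$ over $\{\request{\symTP_2}{\member_2},\request{\symTP_2}{\member_3}\}$ (any omission loses a solution, by construction), giving $\sacost = 4 > 3$. Hence no $S^*_{\Join_{(\cup)}}$ assignment is minimal, while $\munionB{\mjoin{\request{\symTP_1}{\member_1},\request{\symTP_2}{\member_2}},\request{\{\symTP_1,\symTP_2\}}{\member_3}}$ achieves $\sacost = 3$.

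I expect the main obstacle to be step~(2)/(4): pinning down the lower bound on $\sacost$ rigorously and ruling out \emph{all} unrestricted source assignments (not just $S^*_{\Join_{(\cup)}}$ ones) of cost $\leq 2$, since one must argue that two $\xxxAlgebraOperatorName{req}$ operators cannot jointly produce both $\mu_1$ and $\mu_2$ given the validity constraints on which members support which requests. This amounts to a short but careful exhaustion over the few possible pairs of $\xxxAlgebraOperatorName{req}$ subexpressions and the few ways the $\xxxAlgebraOperatorName{mj}/\xxxAlgebraOperatorName{mu}$ operators can combine them. A secondary subtlety is making sure the chosen $\Fed$ is genuinely triple pattern accessible and that the ``extra dangling triple'' trick (if needed) does not accidentally create a new solution or a new cheaper assignment; I would double-check minimality of the cost-$3$ witness by inspecting that no cost-$2$ assignment is both valid and correct.
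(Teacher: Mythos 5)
Your proposal matches the paper's proof: the paper establishes the proposition via the stronger Lemma~\ref{lemma:LimitationOfRestrictedJoinOverUnionClass}, using exactly the running example ($\symBGP_\mathsf{ex}$, $\Fed_\mathsf{ex}$, and the cost-$3$ witness $a'_\mathsf{ex} \notin S^*_{\Join_{(\cup)}}$) and then enumerating all correct members of $S^*_{\Join_{(\cup)}}$ to show each has sa-cost at least $4$, which is precisely your steps (1), (3), and (4); your case analysis over the $S^*_{\Join_{(\cup)}}$ grammar is in fact more explicit than the paper's one-line enumeration claim. The only difference is that your step (2) (a lower bound ruling out cost-$\le 2$ assignments in general) is unnecessary: once a correct cost-$3$ assignment exists outside the class and every correct class member costs at least $4$, no class member can be minimal, regardless of what the true minimum is.
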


We prove Proposition~\ref{prop:LimitationOfRestrictedJoinOverUnionClass} by showing the following claim, which is even stronger than the claim in Proposition~\ref{prop:LimitationOfRestrictedJoinOverUnionClass}.

\begin{mylemma} \label{lemma:LimitationOfRestrictedJoinOverUnionClass}
There exists a BGP~$\symBGP$, a triple pattern accessible federation~$\Fed$, and a source assignment~$a$ that is correct for $\symBGP$ over~$\Fed$, such that
	$\sacost{a} < \sacost{a'}$ for every source assignment~$a'$ that is in $S^*_{\Join_{(\cup)}}$ and that is also correct for $\symBGP$ over~$\Fed$.
\end{mylemma}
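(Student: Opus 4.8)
The plan is to construct an explicit family --- in fact a single small instance suffices --- of a BGP $\symBGP$ and a triple pattern accessible federation $\Fed$ for which the cheapest $S^*_{\Join_{(\cup)}}$-assignment is strictly more expensive than some correct source assignment that lives outside that class. The key phenomenon to exploit is the one already hinted at in Example~\ref{ex:SourceAssignment}: when a member $\member$ can match several triple patterns of $\symBGP$ but those matches only join ``locally'' (i.e., a match of $\symTP_i$ at $\member$ only combines with a match of $\symTP_j$ at the \emph{same} $\member$), we can collapse several per-triple-pattern requests into one BGP request $\request{\lbrace\symTP_i,\symTP_j,\dots\rbrace}{\member}$, but only if $\member$ does not \emph{also} need to contribute some of those triple patterns in combination with matches from other members. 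The restricted class is forced, by Property~\ref{defitem:SourceAssignment:RestrictedJoinOverUnionClass:2}, to keep each triple pattern in exactly one branch of the top-level $\xxxAlgebraOperatorName{mj}$, so whenever a triple pattern's matches must be routed to two different ``join partners,'' the restricted class has to pay for it multiple times at the union level rather than folding it into a BGP request.

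Concretely, I would look for an instance with (say) three triple patterns $\symTP_1,\symTP_2,\symTP_3$ and two or three SPARQL-endpoint members so that BGP requests are available, arranged so that the genuine result decomposes into two ``chains'' that share a triple pattern but route it to different partners --- for example member $\member_A$ supplies $\symTP_1$-and-$\symTP_2$ matches that only join internally, member $\member_B$ supplies $\symTP_2$-and-$\symTP_3$ matches that only join internally, and there is no cross-member join among the relevant triples. Then a correct assignment of the form $\munionB{\,\mjoin{\request{\lbrace\symTP_1,\symTP_2\rbrace}{\member_A},\,\request{\symTP_3}{\dots}}\,,\;\mjoin{\dots,\request{\lbrace\symTP_2,\symTP_3\rbrace}{\member_B}}\,}$ uses few $\xxxAlgebraOperatorName{req}$ operators, whereas any $S^*_{\Join_{(\cup)}}$-assignment, being forced into the joins-over-unions shape with each triple pattern confined to one top-level branch, cannot simultaneously package $\symTP_2$ with $\symTP_1$ in one place and with $\symTP_3$ in another, and therefore needs strictly more requests to stay correct. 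The proof obligation then splits into: (i) verify $\symBGP$, $\Fed$, and the witness $a$ satisfy the hypotheses --- $\Fed$ is triple pattern accessible (all members are SPARQL endpoints, hence trivially so), $a$ is a legal source assignment per Definition~\ref{def:SourceAssignment}, and $a$ is correct, i.e.\ $\sols{a}=\eval{\symBGP}{\Fed}$, which follows by unfolding Definition~\ref{def:FedQPL:Semantics} on the chosen data; and (ii) prove the strict lower bound $\sacost{a}<\sacost{a'}$ for every correct $a'\in S^*_{\Join_{(\cup)}}$.

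Step (ii) is the main obstacle and needs a genuine combinatorial argument, not just an example: I must rule out \emph{all} correct $S^*_{\Join_{(\cup)}}$-assignments below the target cost, not merely the ``obvious'' ones. The plan is a case analysis on the top-level structure permitted by Definition~\ref{def:SourceAssignment:JoinOverUnionClass}. If $a'$ is a single $\xxxAlgebraOperatorName{mu}$ of $\xxxAlgebraOperatorName{req}$ operators (or a bare $\xxxAlgebraOperatorName{req}$), correctness forces each top-level request to be a BGP request computing all of $\symBGP$ at one member; I design the instance so that \emph{no single member} returns the full answer, which kills this case outright (or forces a large union). If $a'$ is $\mjoin{a_1,\dots,a_k}$, then by Property~\ref{defitem:SourceAssignment:RestrictedJoinOverUnionClass:2} the sets $\mathsf{tps}(a_i)$ partition (a superset of) the triple patterns of $\symBGP$, and each $a_i$ is a $\xxxAlgebraOperatorName{mu}$ of requests all carrying the \emph{same} triple pattern or BGP (Property~\ref{defitem:SourceAssignment:RestrictedJoinOverUnionClass:1}); correctness forces, for each part, that the union over that part's request targets recovers exactly $\eval{\,\cdot\,}{\Fed}$ restricted to that sub-BGP. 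I would then argue by a counting/covering lemma: because of the shared-triple-pattern routing built into the instance, any such partition requires either a part whose sub-BGP is matched at $\ge 2$ members (costing $\ge 2$) in a situation where the unrestricted witness paid $1$ by folding, or an extra part altogether; a short exhaustive enumeration over the finitely many admissible partitions of a three-element (or four-element) triple-pattern set, combined with the per-part cost bounds just derived, yields $\sacost{a'}\ge \sacost{a}+1$ in every case. Keeping the instance minimal (few triple patterns, few members, tiny RDF graphs) is what makes this enumeration tractable; the delicate part is choosing the data so that every routing alternative the restricted class might try is provably suboptimal, which I expect to require one or two iterations on the concrete $\symRDFgraph_i$'s before the case analysis closes cleanly.
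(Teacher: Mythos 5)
Your approach is essentially the paper's: exhibit a small instance together with a cheap correct witness of the union-of-joins shape that folds several triple patterns into one BGP request, and then rule out every cheaper correct assignment in $S^*_{\Join_{(\cup)}}$ by a case analysis over the grammar of Definition~\ref{def:SourceAssignment:JoinOverUnionClass} together with Properties~(\ref{defitem:SourceAssignment:RestrictedJoinOverUnionClass:1}) and~(\ref{defitem:SourceAssignment:RestrictedJoinOverUnionClass:2}). The only difference is the choice of instance: the paper simply reuses its running example ($\symBGP_\mathsf{ex}$, $\Fed_\mathsf{ex}$, and $a'_\mathsf{ex}$ with sa-cost~3, versus a minimum of~4 within the class), which is smaller than the three-triple-pattern instance you sketch and makes the enumeration in your step~(ii) immediate rather than requiring iteration on the data.
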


\begin{proof}[Sketch]
Lemma~\ref{lemma:LimitationOfRestrictedJoinOverUnionClass} can be shown based on our example source assignment~$a'_\mathsf{ex}$~(%
Example~\ref{ex:SourceAssignment}) which is not in $S^*_{\Join_{(\cup)}}$ and has an sa-cost that is smaller than the sa-cost of any relevant
	\removable{source}
assignment in~$S^*_{\Join_{(\cup)}}$.
(For a detailed discussion of this and the other proofs in this section, which we have only sketched due to space limitations, refer to the
	\PaperVersion{appendix in the extended version of this paper~\cite{ExtendedVersion}.)}%
	\ExtendedVersion{appendix on page~\pageref{appendix}.)}%
\end{proof}

As a final remark, we emphasize that the same limitation exists even if we consider the
	less restricted
class $S_{\Join_{(\cup)}}$ (instead of $S^*_{\Join_{(\cup)}}$).

\begin{mylemma} \label{lemma:LimitationOfJoinOverUnionClass}
There exists a BGP~$\symBGP$, a triple pattern accessible federation~$\Fed$, and a source assignment~$a$ that is correct for $\symBGP$ over~$\Fed$, such that
	$\sacost{a} < \sacost{a'}$ for every source assignment~$a'$ that is in $S_{\Join_{(\cup)}}$ and that is also correct for $\symBGP$ over~$\Fed$.
\end{mylemma}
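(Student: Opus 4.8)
The plan is to reuse the running example as the witness. Take $\symBGP \definedAs \symBGP_\mathsf{ex}$ and $\Fed \definedAs \Fed_\mathsf{ex}$ (which is triple pattern accessible, since each of $\member_1,\member_2,\member_3$ supports triple pattern requests), and take $a \definedAs a'_\mathsf{ex}$ from Example~\ref{ex:SourceAssignment}. By that example $a$ is correct for $\symBGP_\mathsf{ex}$ over $\Fed_\mathsf{ex}$ with $\sacost{a}=3$, and $a \notin S_{\Join_{(\cup)}}$: its outermost operator is a $\xxxAlgebraOperatorName{mu}$ whose argument set contains a $\xxxAlgebraOperatorName{mj}$-expression, a combination that the grammar of Definition~\ref{def:SourceAssignment:JoinOverUnionClass} does not permit. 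Hence the whole lemma reduces to proving that every $a' \in S_{\Join_{(\cup)}}$ that is correct for $\symBGP_\mathsf{ex}$ over $\Fed_\mathsf{ex}$ has $\sacost{a'} \geq 4$.

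First I would exploit the rigid shape forced by Definition~\ref{def:SourceAssignment:JoinOverUnionClass}: every $a' \in S_{\Join_{(\cup)}}$ has the form $\xxxAlgebraOperatorName{mj}\,\lbrace U_1,\dots,U_k\rbrace$ for some $k\geq 1$, where each ``branch'' $U_i$ is either a single $\xxxAlgebraOperatorName{req}$-expression or a $\xxxAlgebraOperatorName{mu}$ of such expressions; in particular $a'$ carries exactly one level of join above one level of union, and $\sols{a'} = \sols{U_1} \Join \cdots \Join \sols{U_k}$. With $\mu_1,\mu_2$ as in Example~\ref{ex:QuerySemantics}, correctness forces this join to equal $\lbrace\mu_1,\mu_2\rbrace$, which means: each branch must contain, among the results of its requests, a mapping that is a restriction of $\mu_1$ and one that is a restriction of $\mu_2$ (otherwise $\mu_1$, resp.\ $\mu_2$, cannot appear in the join), and the join must not leave any mapping outside $\lbrace\mu_1,\mu_2\rbrace$ alive.

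The core of the argument is to pin down which requests can supply which fragments of $\mu_1$ and $\mu_2$, by inspecting $G_1,G_2,G_3$. I would establish: (i)~no single triple-pattern or BGP request at any member has $\mu_1$ in its result --- because the bindings $?y\!\rightarrow\!\rdfterm{c}$ and $?z\!\rightarrow\!\rdfterm{"Lee"}$ are jointly witnessed only by the triple $(\rdfterm{c},\rdfterm{foaf\!:\!name},\rdfterm{"Lee"})\in G_2$, which does not mention $\rdfterm{a}$, whereas $?x\!\rightarrow\!\rdfterm{a}$ is witnessed only in $G_1$ and $G_3$ --- and a union cannot build $\mu_1$ out of smaller mappings either; hence $k\geq 2$ and $\mu_1$ must be assembled from pieces sitting in distinct branches. (ii)~The only request whose result contains a restriction of $\mu_1$ that binds $?z$ is, in essence, $\request{\symTP_2}{\member_2}$, whose result is $\lbrace\nu_2^c,\nu_2^d\rbrace$ with $\nu_2^c=\lbrace ?y\!\rightarrow\!\rdfterm{c},?z\!\rightarrow\!\rdfterm{"Lee"}\rbrace$ and the ``spurious'' $\nu_2^d=\lbrace ?y\!\rightarrow\!\rdfterm{d},?z\!\rightarrow\!\rdfterm{"Alice"}\rbrace$ (the only alternative being the more specific request for $(\rdfterm{c},\rdfterm{foaf\!:\!name},?z)$ at $\member_2$, with result $\lbrace\lbrace ?z\!\rightarrow\!\rdfterm{"Lee"}\rbrace\rbrace$); the binding $?x\!\rightarrow\!\rdfterm{a}$ together with $?y\!\rightarrow\!\rdfterm{c}$ can come only from a request at $\member_1$; and producing $\mu_2$ forces either $\request{\lbrace\symTP_1,\symTP_2\rbrace}{\member_3}$ or both of $\request{\symTP_1}{\member_3}$ and $\request{\symTP_2}{\member_3}$ (or equivalent variants at $\member_3$), since the bindings making up $\mu_2$ are jointly witnessed only inside $G_3$. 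The decisive point is that $\request{\symTP_2}{\member_2}$ --- effectively unavoidable if $\mu_1$ is to be produced at all --- injects $\nu_2^d$ into its branch, and in the single top-level join $\nu_2^d$ can be killed only by another branch all of whose mappings bind $?y$ to a value $\neq\rdfterm{d}$ or $?z$ to a value $\neq\rdfterm{"Alice"}$; a mapping of small domain such as $\lbrace ?x\!\rightarrow\!\rdfterm{a}\rbrace$ or $\lbrace ?z\!\rightarrow\!\rdfterm{"Lee"}\rbrace$ is useless for that, while a mapping that does fix $?y$ is then incompatible with the $?y$-binding of $\mu_2$ (if it fixes $?y$ to $\rdfterm{c}$) or of $\mu_1$ (if to $\rdfterm{b}$) and so cannot sit in a branch shared with the provider of that other solution. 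Putting these constraints together and then checking the finitely many ways of distributing three requests among the (at least two, at most three) branches of an $\xxxAlgebraOperatorName{mj}$-of-$\xxxAlgebraOperatorName{mu}$s expression, one finds that every configuration either leaves $\nu_2^d$ alive, or fails to produce $\mu_2$, or fails to produce $\mu_1$. Hence three requests never suffice, so $\sacost{a'}\geq 4 > 3 = \sacost{a}$, which proves the lemma (and, incidentally, the bound is tight: $a_\mathsf{ex}$ from Example~\ref{ex:SourceAssignment} lies in $S_{\Join_{(\cup)}}$ and has sa-cost $4$).

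I expect the main obstacle to be step~(ii) together with the concluding case check: a priori a request may use an arbitrary triple pattern (possibly with constants or renamed variables) or, at $\member_3$, an arbitrary BGP, so the space of candidate requests is infinite. The key to making the argument finite is that any mapping which can participate in building $\mu_1$ (resp.\ $\mu_2$) must have domain contained in $\lbrace ?x,?y,?z\rbrace$ and agree there with $\mu_1$ (resp.\ $\mu_2$); since $G_1\cup G_2\cup G_3$ contains only four triples, this leaves only a handful of ``relevant'' request results, after which the case analysis is routine. A secondary subtlety, already visible above, is that one must explicitly rule out the ``specialized'' requests (e.g.\ for $(\rdfterm{c},\rdfterm{foaf\!:\!name},?z)$ at $\member_2$ or for $(?x,\rdfterm{foaf\!:\!knows},\rdfterm{c})$ at $\member_1$): these do return restrictions of $\mu_1$, but their small domains make them simultaneously unable to filter $\nu_2^d$ and unable, together with only one further request, to also produce $\mu_2$.
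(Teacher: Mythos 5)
Your proposal is correct and follows essentially the same route as the paper: both take $\symBGP_\mathsf{ex}$, $\Fed_\mathsf{ex}$, and $a'_\mathsf{ex}$ (sa-cost $3$, not in $S_{\Join_{(\cup)}}$) as the witness and then argue that every correct source assignment in $S_{\Join_{(\cup)}}$ has sa-cost at least $4$. If anything, your writeup is more complete than the paper's, which merely asserts that one ``may enumerate'' all class members of sa-cost at most $3$ and check that each is incorrect, whereas you supply the structural argument (which requests can contribute which fragments of $\mu_1$ and $\mu_2$, and why the spurious mapping $\lbrace ?y\!\rightarrow\!\rdfterm{d},\, ?z\!\rightarrow\!\rdfterm{"Alice"}\rbrace$ cannot be filtered out with only three requests) and, importantly, explain why the a priori infinite space of candidate requests reduces to a finite case check.
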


\begin{proof}[Sketch]
The proof of Lemma~\ref{lemma:LimitationOfRestrictedJoinOverUnionClass} also proves Lemma~\ref{lemma:LimitationOfJoinOverUnionClass} because
	\removable{the source assignment} $a'_\mathsf{ex}$ \removable{is also not in} $S_{\Join_{(\cup)}}$.
\end{proof}

\subsection{Complexity of Source Selection} \label{sec:SourceSelection:Complexity}

	For a version of the source selection problem that is defined based on their notion of a ``SPARQL query decomposition,'' Vidal et al.\ show that this problem
is NP-hard~\cite{DBLP:journals/tlsdkcs/VidalCAMP16}. Since we know from the previous section that this notion
	does not provide the full expressive power of our notion of source assignments
and, furthermore,
	Vidal et al.'s
work focuses only on homogeneous federations (of SPARQL endpoints), it is interesting to study the complexity of source selection for our more general case. 
To this end, we formulate the following decision problem.

\begin{mydefinition} \label{def:SourceSelection:DecisionProblem}
Given a BGP~$\symBGP$, a triple pattern accessible federation~$\Fed$, and a positive integer~$c$, the \definedTerm{source selection problem} is to decide whether there exists a source assignment~$a$ such that $a$ is correct for $\symBGP$ over~$\Fed$ and $\sacost{a} \leq c$.
\end{mydefinition}

Unfortunately, it can be shown that this problem is
	also
NP-hard.

\begin{mytheorem} \label{thm:SourceSelectionComplexity:LowerBound}
The source selection problem is NP-hard.
\end{mytheorem}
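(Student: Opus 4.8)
The plan is to establish NP-hardness by a polynomial-time reduction from a known NP-hard problem. The natural candidate is \textsc{Set Cover} (equivalently \textsc{Dominating Set} or \textsc{Hitting Set}), since the source selection problem is fundamentally about covering all triple patterns of a BGP using a small collection of requests to federation members, and each request can cover several triple patterns at once (when directed to a SPARQL-endpoint member that supports BGP requests). The cost function $\sacost{\cdot}$ counts exactly the number of $\xxxAlgebraOperatorName{req}$ operators, so minimizing it is a covering-type optimization. I would instead reduce from \textsc{Vertex Cover} or \textsc{3-SAT}, but \textsc{Set Cover} seems cleanest here; let me sketch that.

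\textbf{Reduction (from \textsc{Set Cover}).} Given a universe $\{e_1,\dots,e_n\}$ and a family $\mathcal{S} = \{S_1,\dots,S_m\}$ of subsets whose union is the universe, together with a bound $k$, I would build a BGP $\symBGP = \{\symTP_1,\dots,\symTP_n\}$ with one triple pattern $\symTP_i$ per element $e_i$ (using fresh variables so that the patterns are ``independent'' in the sense that any subset of them forms a well-defined sub-BGP), and a federation $\Fed$ with one member $\member_j$ per set $S_j$, each equipped with the SPARQL-endpoint interface $\iface_\mathsf{sparql}$. The RDF graph of $\member_j$ is chosen so that $\member_j$ can return a (nonempty, join-compatible) answer exactly for the triple patterns $\{\symTP_i : e_i \in S_j\}$ and for every sub-BGP thereof. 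Concretely one picks ground triples matching those $\symTP_i$ in a way that they all agree on the shared join variables, so that a single request $\request{\{\symTP_i : e_i \in S_j\}}{\member_j}$ returns a single solution mapping covering those patterns, and requests for patterns outside $S_j$ return the empty set. Then a correct source assignment of sa-cost $\le k$ exists iff there is a set cover of size $\le k$: given a cover $\{S_{j_1},\dots,S_{j_t}\}$ with $t\le k$, the source assignment $\mjoin{\request{B_1}{\member_{j_1}},\dots,\request{B_t}{\member_{j_t}}}$ (where $B_\ell$ is the sub-BGP of $\symBGP$ covered by $S_{j_\ell}$, minus overlaps if one wants disjointness) is correct by Definition~\ref{def:QuerySemantics} and Definition~\ref{def:FedQPL:Semantics}, and has sa-cost $t$; conversely, a correct source assignment names, among its $\xxxAlgebraOperatorName{req}$ leaves, a collection of members whose associated sets must cover all of $\{e_1,\dots,e_n\}$, because every triple pattern $\symTP_i$ must be answerable by at least one request (otherwise $\sols{a}$ would miss the solution mapping in $\eval{\symBGP}{\Fed}$), so the number of distinct members used is at least the optimal cover size, hence sa-cost $\ge$ that size.

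\textbf{Key steps, in order:} (1) define $\symBGP$ and $\Fed$ from the \textsc{Set Cover} instance, being careful that the construction is polynomial and that the resulting federation is triple pattern accessible (every member is a SPARQL endpoint, so trivially it supports triple pattern requests, and the encoding on a Turing machine tape admits polynomial-time matching); (2) verify $\eval{\symBGP}{\Fed}$ is exactly the single ``all ground'' solution mapping (or a controlled small set) so that correctness of a source assignment has a clean combinatorial characterization; (3) prove the forward direction (cover $\Rightarrow$ cheap correct source assignment); (4) prove the backward direction (cheap correct source assignment $\Rightarrow$ small cover), which is where one must argue that the multiway operators $\xxxAlgebraOperatorName{mj}$ and $\xxxAlgebraOperatorName{mu}$ cannot ``cheat'' — i.e., no combination of joins and unions can produce the required solution mapping unless every $\symTP_i$ is covered by some leaf request, and each leaf request contributes $1$ to the cost regardless of how the tree is structured.

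\textbf{Main obstacle.} The delicate part is the backward direction and, underlying it, designing the RDF graphs so that the semantics of $\xxxAlgebraOperatorName{mj}$/$\xxxAlgebraOperatorName{mu}$ is forced to behave like set cover rather than allowing some cheaper algebraic trick. In particular I must ensure (a) that a request to $\member_j$ for a BGP containing even one pattern $\symTP_i$ with $e_i\notin S_j$ returns the empty set (so joining it in kills the whole branch), forcing each $\xxxAlgebraOperatorName{req}$ leaf to be ``honest'' about which $S_j$ it corresponds to; and (b) that the join variables are shared across \emph{all} the $\symTP_i$ so that partial results from different members genuinely need to be joined and cannot accidentally union into the full answer. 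Getting the variable/constant layout of the $\symTP_i$ and the ground triples right — so that the only way to realize $\eval{\symBGP}{\Fed}$ is to cover every $\symTP_i$ — is the crux; once that encoding is fixed, the correspondence with \textsc{Set Cover} is essentially bookkeeping via Definition~\ref{def:FedQPL:Semantics} and Definition~\ref{def:CostOfSourceAssignments}.
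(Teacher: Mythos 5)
Your reduction is correct in outline, but it takes a genuinely different route from the paper. The paper reduces from \textsc{Vertex Cover}: it creates one federation member per \emph{vertex} (each a TPF server, so only triple pattern requests are possible), stores one dedicated triple $(\mathsf{uri}(e),\mathtt{rdf\!:\!type},\mathtt{ex\!:\!Edge})$ in the graphs of the two endpoint-members of each edge $e$, and uses a \emph{single-triple-pattern} BGP $\{(?x,\mathtt{rdf\!:\!type},\mathtt{ex\!:\!Edge})\}$ with $c=k$; correctness then forces the requested members to jointly contain every edge-triple, i.e., to form a vertex cover, and the sa-cost is exactly the number of members requested. The covering constraint there comes from having to \emph{produce every solution mapping} (one per edge) of a fixed one-pattern query, which sidesteps all of the issues you flag as the crux (join-variable layout, BGP requests, behaviour of $\xxxAlgebraOperatorName{mj}$/$\xxxAlgebraOperatorName{mu}$ across patterns) --- with a one-pattern BGP and TPF members, every leaf is necessarily $\request{\symTP}{\member}$ and the analysis is immediate. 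Your \textsc{Set Cover} reduction instead uses one triple pattern per universe element and SPARQL-endpoint members so that a single $\xxxAlgebraOperatorName{req}$ can cover several patterns, and the covering constraint comes from having to bind every variable of the single target solution mapping; this works (with disjoint variables per pattern and distinct predicates per element, a leaf request containing any pattern outside $S_j$ evaluates to $\emptyset$ at $\member_j$, and a union of mappings with partial domains can never fabricate a full-domain mapping), but it genuinely requires the careful encoding you acknowledge and a more delicate backward direction. Your approach buys a reduction that exercises the BGP-request capability of the model and shows hardness already arises from the set-cover structure of multi-pattern requests; the paper's buys a much shorter argument in which hardness arises even when every request is a single triple pattern against TPF-only members.
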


\begin{proof}[Sketch]
We show the NP-hardness by a reduction from the node cover problem~(also called vertex cover problem), which is known to be NP-hard~\cite{DBLP:conf/coco/Karp72}. The detailed proof
	\PaperVersion{can be found in~\cite{ExtendedVersion}.}%
	\ExtendedVersion{is in the appendix~(page~\pageref{Proof:thm:SourceSelectionComplexity:LowerBound}).}%
\end{proof}

The following theorem also gives an upper bound for the complexity of the source selection problem in Definition~\ref{def:SourceSelection:DecisionProblem} (note that Vidal et al.\ do not show such a result for their version of the source selection problem~\cite{DBLP:journals/tlsdkcs/VidalCAMP16}).

\begin{mytheorem} \label{thm:SourceSelectionComplexity:UpperBound}
The source selection problem is in $\Sigma_2^\mathrm{P}$.
\end{mytheorem}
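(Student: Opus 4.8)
The plan is to place the problem in $\Sigma_2^\mathrm{P} = \mathrm{NP}^{\mathrm{NP}}$ via the existential--universal characterization: a nondeterministic machine guesses a candidate source assignment~$a$ and then verifies in polynomial time, with the help of an $\mathrm{NP}$ oracle, that $a$ witnesses a ``yes'' instance. First I would argue that it suffices to guess an~$a$ of size polynomial in the input. By Proposition~\ref{prop:CorrectnessOfExhaustiveSourceAssignments} the exhaustive source assignment is always correct and has sa-cost~$|\symBGP|\cdot|\Fed|$, so the answer is trivially ``yes'' whenever $c \ge |\symBGP|\cdot|\Fed|$; hence we may assume $c < |\symBGP|\cdot|\Fed|$. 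A source assignment with at most~$c$ subexpressions of the form~$\request{\ifaceReqLangExp}{\member}$ can, after deleting one-child $\mathsf{mj}$/$\mathsf{mu}$ nodes and collapsing a $\mathsf{mj}$ (resp.\ $\mathsf{mu}$) nested directly inside a $\mathsf{mj}$ (resp.\ $\mathsf{mu}$)---operations that change neither $\sols{\cdot}$ nor the sa-cost---be rewritten with $O(c)$ nodes; one may also assume every $\request{\ifaceReqLangExp}{\member}$ carries a BGP that is a subset of~$\symBGP$. So the machine guesses such a normalized~$a$.

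That $a$ is a valid source assignment for~$\Fed$ (Definitions~\ref{def:SourceAssignment} and~\ref{def:FedQPL:Validity}) and that $\sacost{a}\le c$ are deterministic, polynomial checks needing no oracle. It remains to verify correctness, $\sols{a}=\eval{\symBGP}{\Fed}$, where $\eval{\symBGP}{\Fed}=\eval{\symBGP}{\symRDFgraph_\mathrm{union}}$ (Definition~\ref{def:QuerySemantics}) and $\symRDFgraph_\mathrm{union}$ has polynomial size since $\Fed$ is triple pattern accessible. For $\sols{a}\subseteq\eval{\symBGP}{\Fed}$, unfolding Definition~\ref{def:FedQPL:Semantics} shows that every mapping in $\sols{a}$ is produced by some \emph{run} of~$a$---a choice of one child at each $\mathsf{mu}$-node together with, at each surviving $\mathsf{req}$-leaf, a matching solution mapping drawn from the corresponding local evaluation---and a run has polynomial size while the induced mapping~$\mu$ and the test $\mu\in\eval{\symBGP}{\symRDFgraph_\mathrm{union}}$ are polynomial-time computable; hence this inclusion is a $\mathrm{coNP}$ property and can be decided with one oracle call. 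Symmetrically, one has to decide $\eval{\symBGP}{\Fed}\subseteq\sols{a}$: that every $\mu\in\eval{\symBGP}{\symRDFgraph_\mathrm{union}}$ is yielded by some run of~$a$.

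This reverse inclusion is the step I expect to be the main obstacle. Taken literally it reads ``for every $\mu\in\eval{\symBGP}{\Fed}$ there exists a run of~$a$ yielding~$\mu$'', a $\Pi_2^\mathrm{P}$ statement that a polynomial-time machine with an $\mathrm{NP}$ oracle cannot decide in general---indeed, deciding $\mu\in\sols{a}$ for an unrestricted~$a$ is $\mathrm{NP}$-complete, since a $\mathsf{mj}$ of $\mathsf{mu}$s of $\mathsf{req}$-operators with differing variable sets encodes the problem of picking one set from each of several groups so that their union is a prescribed universe. The way out that I would pursue is to show that the guessed~$a$ may be restricted to a syntactic normal form for which membership $\mu\in\sols{a}$ becomes polynomial-time decidable---e.g.\ a form in which every subexpression occurring as an operand of some $\mathsf{mj}$-operator has all its solution mappings over one and the same set of variables (this holds for every exhaustive source assignment, for all source assignments in Table~\ref{tb: source-selection-assignment}, and for~$a'_\mathsf{ex}$)---together with the claim that for every~$\symBGP$ and~$\Fed$ there is a source assignment of minimum sa-cost already in this normal form, so that restricting the guess does not change the decision problem. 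Under such a normal form, ``some $\mu\in\eval{\symBGP}{\Fed}$ is not in $\sols{a}$'' is an $\mathrm{NP}$ statement (guess~$\mu$, then test $\mu\in\eval{\symBGP}{\Fed}$ and $\mu\notin\sols{a}$ in polynomial time), so the full correctness test and thus the verification of~$a$ run in polynomial time with an $\mathrm{NP}$ oracle, which yields the $\Sigma_2^\mathrm{P}$ bound. Proving the adequacy of such a normal form without inflating the sa-cost is the part I expect to require the most care.
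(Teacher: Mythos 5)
Your overall architecture coincides with the paper's: nondeterministically guess a polynomial-size source assignment~$a$, check $\sacost{a}\leq c$ deterministically, and delegate the correctness test $\sols{a}=\eval{\symBGP}{\Fed}$ to one call to an NP oracle that decides \emph{non}-correctness by guessing a mapping~$\mu$ in the symmetric difference. (Your normalization argument for why a polynomial-size guess suffices is actually more careful than the paper's, which only invokes Proposition~\ref{prop:CorrectnessOfExhaustiveSourceAssignments}.) Where you diverge is at the step you yourself flag as the obstacle: deciding $\mu\in\sols{a}$. The paper claims this is polynomial-time for \emph{all} source assignments (Lemma~\ref{lemma:MuInSolsIsPTIME}), by structural induction: at a $\xxxAlgebraOperatorName{mu}$-node it tries every child, and at a $\xxxAlgebraOperatorName{mj}$-node (Algorithm~\ref{algo:MuInSolsIsPTIME:IndStepJoin}) it restricts $\mu$ to $\fctVars{a_i}$ and recurses on each operand~$a_i$. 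With that lemma and Lemma~\ref{lemma:FedEvaluationOfBgpIsPTIME}, both inclusions of the correctness test fit into a single NP computation, and no normal form is needed.

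That said, your suspicion is not idle. The restriction step at $\xxxAlgebraOperatorName{mj}$-nodes is sound only if every mapping in $\sols{a_i}$ has domain exactly $\fctVars{a_i}$; this holds for $\xxxAlgebraOperatorName{req}$ leaves and is preserved by $\xxxAlgebraOperatorName{mj}$, but not by a $\xxxAlgebraOperatorName{mu}$ over requests with different variable sets, which Definition~\ref{def:SourceAssignment} permits. For such assignments the unique-witness assumption behind Algorithm~\ref{algo:MuInSolsIsPTIME:IndStepJoin} fails (the algorithm may even reject at its first test because $\fctDom{\mu}\subsetneq\fctVars{a}$ for a genuine $\mu\in\sols{a}$), and your ``pick one set from each group so that their union is the universe'' encoding does appear to make $\mu\in\sols{a}$ NP-hard in general. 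So you have identified a real soft spot in the paper's own argument rather than merely a difficulty of your approach. Nevertheless, as a proof your proposal is incomplete precisely where you say it is: the claim that a minimum-cost correct source assignment always exists in a domain-homogeneous normal form is asserted, not proven, and it is not automatic---replacing a heterogeneous $\xxxAlgebraOperatorName{mu}$ by a homogeneous one can change $\sols{\cdot}$ or require extra $\xxxAlgebraOperatorName{req}$ leaves. To close the argument you must either establish that normal-form adequacy lemma or prove the membership test of Lemma~\ref{lemma:MuInSolsIsPTIME} under a hypothesis that all guessed assignments actually satisfy.
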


\begin{proof}[Sketch]
For this proof we assume a nondeterministic Turing machine that guesses a source assignment~$a$ and, then, checks that $a$ is correct for $\symBGP$ over~$\Fed$ and that $\sacost{a} \leq c$. While $\sacost{a} \leq c$ can be checked \removable{in polynomial time} by
scanning $a$, for the correctness check, the machine uses an NP oracle
	\PaperVersion{as detailed in~\cite{ExtendedVersion}.}%
	\ExtendedVersion{(see the appendix).}%
\end{proof}

\section{Equivalences} \label{sec:Equivalences}

While initial logical plans resulting from the query decomposition~\& source selection step can produce the correct query results, they may not be efficient in many cases. A query optimizer can convert such plans into more efficient
	ones by systematically replacing
subexpressions by other subexpressions that are semantically equivalent%
	\removable{~(i.e., that are guaranteed to produce the same result)}%
.
	This section provides a solid \removable{formal} foundation for such optimizations by showing
	\removable{a comprehensive set of} such equivalences~for~FedQPL.

Before we begin, we need to define the notion of \emph{semantic equivalence} for FedQPL expressions which is federation dependent.

\begin{mydefinition} \label{def:SemanticEquivalence}
Let $\Fed$ be a federation%
	. Two FedQPL expressions $\varphi$ and $\varphi'$ that are valid for $\Fed$
are \definedTerm{semantically equivalent} for $\Fed$, denoted by $\varphi \myequiv{\Fed} \varphi'$\!, if it holds that $\sols{\varphi} = \sols{\varphi'}$.
\end{mydefinition}

Now, the first equivalences cover expressions
	that focus on a single federation member \removable{with an interface}
that supports \emph{triple patterns requests}%
	\removable{~(recall from Section~\ref{sec:DataModel} that, among others, this includes the SPARQL endpoint interface, the TPF interface, and the brTPF interface)}%
.
	These \removable{equivalences}
follow from Definition~\ref{def:FedQPL:Semantics}.

\begin{myproposition} \label{prop:Equivalences:TP}
Let $\member = (\symRDFgraph,\iface)$ be a
member in a federation~$\Fed$ such that
	$\iface = \ifaceTuple$ is some interface that
supports triple pattern requests;
let $\symTP$ be a triple pattern, and $\varphi$ and $\varphi'$ be FedQPL expressions that are valid for $\Fed$.
It holds that:
\begin{enumerate}
	\item $\joinB{ \request{\symTP}{\member} \,}{\, \varphi} \,\myequiv{\Fed}\, \tpAdd{\varphi}{\symTP}{\member\!}$;

	\item $\joinB{ \request{\symTP}{\member} \,}{\, \join{\varphi}{\varphi'} } \,\myequiv{\Fed}\, \joinB{ \tpAdd{\varphi}{\symTP}{\member\!} \,}{\, \varphi'}$.
\end{enumerate}
\end{myproposition}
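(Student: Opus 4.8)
The plan is to prove both equivalences directly from the definition of the semantics function $\mathsf{sols}$ (Definition~\ref{def:FedQPL:Semantics}), using the associativity and commutativity of the SPARQL algebra operator~$\Join$ (Section~\ref{sec:Preliminaries}) as the only nontrivial ingredient. First I would fix the auxiliary observation that carries the whole argument: since $\iface = \ifaceTuple$ supports triple pattern requests, the triple pattern~$\symTP$ appearing in $\request{\symTP}{\member}$ is---by the definition of this property in Section~\ref{sec:DataModel}---a request in $\ifaceReqLang{}$ for which $\ifaceFct(\symTP,\symRDFgraph) = \eval{\symTP}{\symRDFgraph}$ for every RDF graph~$\symRDFgraph$. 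Consequently, by case~(1) of Definition~\ref{def:FedQPL:Semantics}, $\solsB{\request{\symTP}{\member}} = \eval{\symTP}{\symRDFgraph}$, and likewise, by case~(2), $\solsB{\tpAdd{\varphi}{\symTP}{\member}} = \sols{\varphi} \Join \eval{\symTP}{\symRDFgraph}$. The same property also makes the $\xxxAlgebraOperatorName{tpAdd}$ operator applicable to~$\member$, so all four FedQPL expressions in the statement are valid for~$\Fed$ (given that $\varphi$ and $\varphi'$ are), which is what we need for $\myequiv{\Fed}$ to be defined for them (Definition~\ref{def:SemanticEquivalence}).

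For item~(1) I would then unfold both sides: by case~(4) of Definition~\ref{def:FedQPL:Semantics} together with the observation above, $\solsB{\joinB{\request{\symTP}{\member}}{\varphi}} = \eval{\symTP}{\symRDFgraph} \Join \sols{\varphi}$, whereas $\solsB{\tpAdd{\varphi}{\symTP}{\member}} = \sols{\varphi} \Join \eval{\symTP}{\symRDFgraph}$; these two sets of solution mappings coincide because $\Join$ is commutative, which is exactly the claim $\joinB{\request{\symTP}{\member}}{\varphi} \myequiv{\Fed} \tpAdd{\varphi}{\symTP}{\member}$.

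For item~(2) I would proceed in the same fashion, expanding the left-hand side (via cases~(4) and~(1)) to $\eval{\symTP}{\symRDFgraph} \Join \bigl( \sols{\varphi} \Join \sols{\varphi'} \bigr)$ and the right-hand side (via cases~(4) and~(2)) to $\bigl( \sols{\varphi} \Join \eval{\symTP}{\symRDFgraph} \bigr) \Join \sols{\varphi'}$; associativity and commutativity of~$\Join$ turn one into the other. (Alternatively, item~(2) follows from item~(1) applied to the subexpression $\join{\varphi}{\varphi'}$ in place of~$\varphi$, plus one more use of the associativity and commutativity of~$\Join$; I would present the direct computation, as it is shorter.) I do not expect any real obstacle here: the one point that deserves explicit care is the implicit identification---licensed precisely by ``$\iface$ supports triple pattern requests''---of the triple pattern~$\symTP$ with its corresponding request in $\ifaceReqLang{}$, which I would spell out at the outset so that both the validity of the expressions and the equality $\solsB{\request{\symTP}{\member}} = \eval{\symTP}{\symRDFgraph}$ rest on firm footing.
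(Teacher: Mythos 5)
Your proposal is correct and follows essentially the same route as the paper's own proof: unfold both sides via the relevant cases of Definition~\ref{def:FedQPL:Semantics} and conclude by the commutativity (and, for item~(2), associativity) of~$\Join$. Your explicit justification that $\ifaceFct(\symTP,\symRDFgraph)=\eval{\symTP}{\symRDFgraph}$ via the ``supports triple pattern requests'' property is a slightly more careful spelling-out of what the paper handles with the remark that one may assume $\symTP \in \ifaceReqLang{}$, and your worked-out item~(2) is just the exercise the paper leaves to the reader, done by the same technique.
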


\begin{proof}[Sketch]
Since all equivalences in this section can be shown in a similar manner by applying Definition~\ref{def:FedQPL:Semantics}, we illustrate the proof only for Equivalence~(1) and leave the rest as an exercise for the reader.
To prove Equivalence~(1), we have to show that $\sols{ \join{ \request{\symTP}{\member}}{\varphi} } = \sols{ \tpAdd{\varphi}{\symTP}{\member\!} }$, which we do based on Definition~\ref{def:FedQPL:Semantics} as follows~(where we can assume that $\symTP \in \ifaceReqLang{}$):
\begin{align*}
\sols{ \join{ \request{\symTP}{\member}}{\varphi} }
& = \sols{ \request{\symTP}{\member} } \Join \sols{ \varphi }
& \textsf{\scriptsize by Def.~\ref{def:FedQPL:Semantics}, Case (4)}
\\
& = \ifaceFct(\ifaceReqLangExp,\symRDFgraph) \Join \sols{ \varphi }
& \textsf{\scriptsize by Def.~\ref{def:FedQPL:Semantics}, Case (1)}
\\
& = \sols{ \varphi } \Join \ifaceFct(\ifaceReqLangExp,\symRDFgraph)
& \textsf{\scriptsize by the commuta-}
\\[-2mm]
&& \textsf{\scriptsize tivity of $\Join$~\cite{DBLP:journals/tods/PerezAG09}}
\\
& = \sols{ \tpAdd{\varphi}{\symTP}{\member\!} }
& \textsf{\scriptsize by Def.~\ref{def:FedQPL:Semantics}, Case (2)}
.
\\[-10mm]
\end{align*}%
\end{proof}

\begin{myexample} \label{ex:Equivalences:TP}
By applying Equivalence~(1)~(cf.\ Proposition~\ref{prop:Equivalences:TP}),
	we may rewrite the FedQPL expression in Example~\ref{ex:tpAdd} into the following expression which is semantically equivalent for~$\Fed_\mathsf{ex}$.
\begin{equation*}
\join{ \request{\symTP_2}{\member_2} }{ \request{\symTP_1}{\member_1} }
\end{equation*} 
\end{myexample}

The following equivalences focus on expressions with a federation member
	that
supports \emph{BGP requests}. These equivalences follow from
	the definition of BGPs\removable{~(cf.\ Section~\ref{sec:Preliminaries})}, in combination with
Definition~\ref{def:FedQPL:Semantics}.
\removable{Notice that the first two
	\removable{of these~}%
equivalences
	are the BGP-specific counterparts of the equivalences
in Proposition~\ref{prop:Equivalences:TP}.}

\begin{myproposition} \label{prop:Equivalences:BGP}
Let $\member = (\symRDFgraph,\iface)$ be a member in a federation $\Fed$ such that
	$\iface$ is some interface that
supports BGP requests;
let $\symBGP$, $\symBGP_1$, $\symBGP_2$, and $\symBGP'$ be BGPs such that $\symBGP'\! = \symBGP_1 \cup \symBGP_2$, and $\varphi$ and $\varphi'$ be FedQPL expressions that are valid for $\Fed$.
It holds that:
\begin{enumerate}
\setcounter{enumi}{2}
	\item $\joinB{\request{\symBGP}{\member} \,}{\, \varphi} \,\myequiv{\Fed}\, \bgpAdd{\varphi}{\symBGP}{\member\!}$;

	\item $\joinB{ \request{\symBGP}{\member} \,}{\, \join{\varphi}{\varphi'} } \,\myequiv{\Fed}\, \joinB{ \bgpAdd{\varphi}{\symBGP}{\member\!} \,}{\, \varphi'}$;

	\item $\joinB{ \request{\symBGP_1}{\member} \,}{\, \request{\symBGP_2}{\member\!} } \,\myequiv{\Fed}\, \request{\symBGP'}{\member\!}$; 

	\item $\bgpAddB{ \request{\symBGP_2}{\member\!} }{\symBGP_1}{\member\!} \,\myequiv{\Fed}\, \request{\symBGP'}{\member\!}$; 

	\item $\bgpAddB{ \bgpAdd{\varphi}{\symBGP_2}{\member\!} }{\symBGP_1}{\member\!} \,\myequiv{\Fed}\, \bgpAdd{\varphi}{\symBGP'}{\member\!}$. 
\end{enumerate}
\end{myproposition}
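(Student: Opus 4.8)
The plan is to follow exactly the recipe already used for Proposition~\ref{prop:Equivalences:TP}: for each of the five equivalences, unfold both sides with Definition~\ref{def:FedQPL:Semantics} until only expressions over sets of solution mappings remain, and then close the gap using the elementary algebraic properties of $\Join$ and $\cup$ recalled in Section~\ref{sec:Preliminaries} (associativity and commutativity) together with the single SPARQL-specific fact, also recalled there, that $\eval{\symBGP_1}{\symRDFgraph} \Join \eval{\symBGP_2}{\symRDFgraph} = \eval{\symBGP'}{\symRDFgraph}$ whenever $\symBGP' = \symBGP_1 \cup \symBGP_2$. A preliminary observation to make is that, since $\member$ supports BGP requests, every BGP~$\symBGP$ can be written as a request~$\ifaceReqLangExp$ in $\ifaceReqLang{}$ with $\ifaceFct(\ifaceReqLangExp,\symRDFgraph) = \eval{\symBGP}{\symRDFgraph}$; hence Case~(1) of Definition~\ref{def:FedQPL:Semantics} gives $\sols{\request{\symBGP}{\member}} = \eval{\symBGP}{\symRDFgraph}$, which I would use tacitly throughout.

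Concretely, Equivalences~(3) and~(4) are the BGP-specific counterparts of Equivalences~(1) and~(2) of Proposition~\ref{prop:Equivalences:TP}, so I would prove them by the same short computations, with Case~(3) of Definition~\ref{def:FedQPL:Semantics} playing the role that Case~(2) played there, invoking commutativity of $\Join$ for~(3) and both associativity and commutativity of $\Join$ for~(4). For Equivalence~(5), the left-hand side unfolds (Cases~(4) and~(1)) to $\eval{\symBGP_1}{\symRDFgraph} \Join \eval{\symBGP_2}{\symRDFgraph}$, which by the BGP-evaluation fact equals $\eval{\symBGP'}{\symRDFgraph} = \sols{\request{\symBGP'}{\member}}$. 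For Equivalence~(6), the left-hand side unfolds (Cases~(3) and~(1)) to $\eval{\symBGP_2}{\symRDFgraph} \Join \eval{\symBGP_1}{\symRDFgraph}$, and commutativity followed by the same fact yields $\eval{\symBGP'}{\symRDFgraph}$. For Equivalence~(7), two applications of Case~(3) give $\bigl(\sols{\varphi} \Join \eval{\symBGP_2}{\symRDFgraph}\bigr) \Join \eval{\symBGP_1}{\symRDFgraph}$ on the left; reassociating and applying the fact turns this into $\sols{\varphi} \Join \eval{\symBGP'}{\symRDFgraph} = \sols{\bgpAdd{\varphi}{\symBGP'}{\member}}$.

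I do not expect a genuine obstacle here: every step is a direct appeal to a definition or to an already-stated algebraic property. The only points to watch are bookkeeping --- ensuring that the hypothesis ``$\member$ supports BGP requests'' is available wherever a $\request{\symBGP}{\member}$ or a $\bgpAdd{\cdot}{\symBGP}{\member}$ occurs (which it is, by the statement of the proposition), and checking that the $\symBGP'$ appearing on the right-hand side in~(5)--(7) is literally $\symBGP_1 \cup \symBGP_2$ so that the BGP-evaluation identity applies verbatim. As in the proof of Proposition~\ref{prop:Equivalences:TP}, I would write out one representative case in full --- Equivalence~(7) is a good choice since it exercises all three ingredients (Case~(3) of the semantics, associativity of $\Join$, and the BGP-evaluation fact) at once --- and remark that the remaining cases are entirely analogous.
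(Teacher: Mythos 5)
Your proposal is correct and follows exactly the route the paper intends: unfold both sides via Definition~\ref{def:FedQPL:Semantics} (using that BGP-request support gives $\sols{\request{\symBGP}{\member}} = \eval{\symBGP}{\symRDFgraph}$ and $\ifaceFct(\symBGP,\symRDFgraph)=\eval{\symBGP}{\symRDFgraph}$), then close with commutativity/associativity of $\Join$ and the identity $\eval{\symBGP_1}{\symRDFgraph} \Join \eval{\symBGP_2}{\symRDFgraph} = \eval{\symBGP_1 \cup \symBGP_2}{\symRDFgraph}$. The paper itself only states that these equivalences ``follow from the definition of BGPs in combination with Definition~\ref{def:FedQPL:Semantics}'' in the manner illustrated for Equivalence~(1), and your case-by-case computations are exactly that argument carried out.
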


The following equivalences focus on
	\removable{expressions with a} federation member
whose interface supports \emph{both, triple patterns requests and BGP requests}. These equivalences follow from
	the definition of BGPs, in combination with
Definition~\ref{def:FedQPL:Semantics}.

\begin{myproposition} \label{prop:Equivalences:TP&BGP}
Let $\member = (\symRDFgraph,\iface)$ be a member in a federation~$\Fed$ such that
	$\iface$ is some interface that
supports triple pattern requests
	as well as 
BGP requests;
let $\symTP$ be a triple pattern, $\symBGP = \lbrace \symTP_1, \symTP_2, ...\,, \symTP_n \rbrace$ be a BGP, and $\varphi$ be a FedQPL expression that is valid for $\Fed$.
It holds that:
\begin{enumerate}
\setcounter{enumi}{7}
	\item $\request{\symTP}{\member}  \,\myequiv{\Fed}\,  \request{\symBGP'}{\member}$, where $\symBGP' = \lbrace \symTP \rbrace$;

	\item $\request{\symBGP}{\member\!}  \,\myequiv{\Fed}\,  \joinB{ \joinB{\ldots \joinB{ \request{\symTP_1}{\member\!} }{ \request{\symTP_2}{\member\!} }  }{ \;\ldots \;} }{ \request{\symTP_n}{\member\!} }$;

	\item $\request{\symBGP}{\member\!}  \,\myequiv{\Fed}\,  \tpAdd{ \ldots (\tpAdd{ \request{\symTP_1}{\member\!} }{\symTP_{2\!}}{\member\!}) \ldots  }{\symTP_n}{\member\!}$;

	\item $\bgpAdd{\varphi}{\symBGP}{\member}  \,\myequiv{\Fed}\,  \tpAdd{ \ldots (\tpAdd{ \tpAdd{\varphi}{\symTP_1\!}{\member\!} }{\symTP_{2\!}}{\member\!}) \ldots  }{\symTP_n\!}{\member\!}$;

	\item $\bgpAdd{ \request{\symTP}{\member} }{\symBGP}{\member\!}  \,\myequiv{\Fed}\,  \request{ \symBGP' }{\member}$, where $\symBGP'\! = \symBGP \cup \lbrace\symTP\rbrace$;

	\item $\tpAdd{ \request{\symBGP}{\member} }{\symTP}{\member\!}  \,\myequiv{\Fed}\,  \request{ \symBGP' }{\member}$, where $\symBGP'\! = \symBGP \cup \lbrace\symTP\rbrace$;

	\item $\tpAddB{ \bgpAdd{\varphi}{\symBGP}{\member\!} }{\symTP}{\member\!}  \myequiv{\Fed}  \bgpAdd{\varphi}{\symBGP'}{\member\!}$, where $\symBGP'\! = \symBGP \cup \lbrace\symTP\rbrace$.
\end{enumerate}
\end{myproposition}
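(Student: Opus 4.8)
The plan is to establish each of the seven equivalences~(8)--(14) by one routine recipe: unfold $\sols{\cdot}$ on both sides via Definition~\ref{def:FedQPL:Semantics} and then close the gap using a handful of elementary facts. The facts I would use are: (a)~because $\iface$ supports both triple pattern requests and BGP requests, $\sols{\request{\symTP}{\member}} = \eval{\symTP}{\symRDFgraph}$ and $\sols{\request{\symBGP}{\member}} = \eval{\symBGP}{\symRDFgraph}$ for the relevant $\xxxAlgebraOperatorName{req}$-subexpressions; (b)~$\eval{\symTP}{\symRDFgraph} = \eval{\lbrace\symTP\rbrace}{\symRDFgraph}$ for every triple pattern~$\symTP$ (Section~\ref{sec:Preliminaries}); (c)~$\eval{\symBGP_1}{\symRDFgraph} \Join \eval{\symBGP_2}{\symRDFgraph} = \eval{\symBGP_1 \cup \symBGP_2}{\symRDFgraph}$ for BGPs $\symBGP_1,\symBGP_2$ (Section~\ref{sec:Preliminaries}); and (d)~the associativity and commutativity of~$\Join$. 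With these in hand, Equivalence~(8) is immediate: the left-hand side equals $\eval{\symTP}{\symRDFgraph}$ by~(a) for triple pattern requests, the right-hand side equals $\eval{\lbrace\symTP\rbrace}{\symRDFgraph}$ by~(a) for BGP requests, and the two coincide by~(b).

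For the ``flattening'' equivalences~(9), (10), and~(11) I would argue by induction on $n = \left|\symBGP\right|$. In the base case $n=1$ the statements reduce either to~(8) (for~(9) and~(10)) or to a single application of Cases~(2) and~(3) of Definition~\ref{def:FedQPL:Semantics} together with fact~(b) (for~(11)). For the inductive step I would peel the last triple pattern~$\symTP_n$ off the right-hand side, apply the induction hypothesis to the inner subexpression built from $\symBGP \setminus \lbrace\symTP_n\rbrace$, and recombine: unfolding the outermost $\xxxAlgebraOperatorName{join}$ or $\xxxAlgebraOperatorName{tpAdd}$ via Case~(4)/(2) of Definition~\ref{def:FedQPL:Semantics} yields a $\Join$ with $\eval{\symTP_n}{\symRDFgraph} = \eval{\lbrace\symTP_n\rbrace}{\symRDFgraph}$, which fact~(c) fuses with $\eval{\symBGP \setminus \lbrace\symTP_n\rbrace}{\symRDFgraph}$ into $\eval{\symBGP}{\symRDFgraph}$; for~(11) the extra factor $\sols{\varphi}$ simply rides along by the associativity of~$\Join$ (fact~(d)).

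Finally, the ``merging'' equivalences~(12), (13), and~(14) each follow from a single unfolding step: unfold the outer $\xxxAlgebraOperatorName{bgpAdd}$ or $\xxxAlgebraOperatorName{tpAdd}$ (and, for~(12) and~(13), also the inner $\xxxAlgebraOperatorName{req}$) with Definition~\ref{def:FedQPL:Semantics}, rewrite the triple-pattern evaluation as the evaluation of the matching singleton BGP via fact~(b), and apply fact~(c) to fuse $\eval{\symBGP}{\symRDFgraph} \Join \eval{\lbrace\symTP\rbrace}{\symRDFgraph}$ into $\eval{\symBGP \cup \lbrace\symTP\rbrace}{\symRDFgraph} = \eval{\symBGP'}{\symRDFgraph}$; in~(14) the factor $\sols{\varphi}$ again rides along by associativity of~$\Join$.

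The obstacle here is bookkeeping rather than insight. Two points require care: ensuring that the interface's request-encoding of a triple pattern and of the corresponding singleton BGP both deliver exactly $\eval{\symTP}{\symRDFgraph}$, so that freely switching between $\symTP$ and $\lbrace\symTP\rbrace$ inside $\xxxAlgebraOperatorName{req}$ is sound; and, in the inductive arguments for~(9)--(11), matching the left-to-right parenthesization displayed on the right-hand sides to the order in which the induction strips triple patterns from~$\symBGP$. Neither point needs machinery beyond Definition~\ref{def:FedQPL:Semantics} and the algebraic facts recalled in Section~\ref{sec:Preliminaries}.
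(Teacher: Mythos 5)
Your proposal is correct and takes essentially the same route the paper intends: the authors state that these equivalences ``follow from the definition of BGPs, in combination with Definition~\ref{def:FedQPL:Semantics}'' and, in the proof sketch of Proposition~\ref{prop:Equivalences:TP}, leave them as an exercise to be done exactly by unfolding $\sols{\cdot}$ and applying the fact that $\eval{\symBGP_1}{\symRDFgraph} \Join \eval{\symBGP_2}{\symRDFgraph} = \eval{\symBGP_1 \cup \symBGP_2}{\symRDFgraph}$ together with commutativity and associativity of~$\Join$. Your explicit attention to the request-encoding of a triple pattern versus its singleton BGP, and the induction for the flattening equivalences, are exactly the bookkeeping the paper elides.
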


The following equivalences focus on
	\removable{expressions with}
a federation member
	\removable{that provides}
the \emph{brTPF interface}~(cf.\ Example~\ref{ex:Interface:brTPF}).
	\removable{Consequently,} these
equivalences follow from the definition of that interface, in combination with Definition~\ref{def:FedQPL:Semantics}.

\begin{myproposition}
Let $\member = (\symRDFgraph,\iface_\mathsf{brTPF})$ be a
member in a federation~$\Fed$ such that $\iface_\mathsf{brTPF}$ is the brTPF interface;
let $\symTP$ be a triple pattern, $\Omega$~be a set of solution mappings, and $\varphi$ be a FedQPL expression that is valid for $\Fed$.
It holds that:
\begin{enumerate}
	\itemsep1mm 
\setcounter{enumi}{14}
	\item $\joinB{\request{\symTP}{\member} \,}{\, \varphi}  \,\myequiv{\Fed}\,  \request{(\symTP,\Omega)}{\member\!}$, where $\Omega = \sols{\varphi}$;

	\item $\tpAdd{\varphi}{\symTP}{\member\!}  \,\myequiv{\Fed}\,  \request{(\symTP,\Omega)}{\member\!}$, where $\Omega = \sols{\varphi}$.
\end{enumerate} 
\end{myproposition}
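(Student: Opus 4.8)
The plan is to prove both equivalences exactly as the proof sketch of Proposition~\ref{prop:Equivalences:TP} handles its cases: by Definition~\ref{def:SemanticEquivalence} it suffices to establish $\sols{\varphi} = \sols{\varphi'}$ for the two expressions in question, so in each case I would write out a short chain of set-equalities, unfolding the left- and right-hand sides with Definition~\ref{def:FedQPL:Semantics} together with the definition of $\ifaceFct_\mathsf{brTPF}$ from Example~\ref{ex:Interface:brTPF}.

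For Equivalence~(15), I would first unfold the left-hand side: Case~(4) of Definition~\ref{def:FedQPL:Semantics} gives $\sols{\join{\request{\symTP}{\member}}{\varphi}} = \sols{\request{\symTP}{\member}} \Join \sols{\varphi}$, and then Case~(1) together with the first line of the definition of $\ifaceFct_\mathsf{brTPF}$ gives $\sols{\request{\symTP}{\member}} = \ifaceFct_\mathsf{brTPF}(\symTP,\symRDFgraph) = \eval{\symTP}{\symRDFgraph}$. For the right-hand side, Case~(1) gives $\sols{\request{(\symTP,\Omega)}{\member}} = \ifaceFct_\mathsf{brTPF}((\symTP,\Omega),\symRDFgraph)$ with $\Omega = \sols{\varphi}$. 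Thus the claim reduces to the single set-equality
\[ \eval{\symTP}{\symRDFgraph} \Join \sols{\varphi} \;=\; \ifaceFct_\mathsf{brTPF}\bigl((\symTP,\sols{\varphi}),\symRDFgraph\bigr), \]
which I would establish via the case distinction built into the definition of $\ifaceFct_\mathsf{brTPF}$ on pairs --- the branch $\sols{\varphi} = \emptyset$ versus the branch $\sols{\varphi} \neq \emptyset$ --- in each branch expanding both sides into set-builder notation and matching the compatibility test $\mu_1 \sim \mu_2$ used by $\Join$ against the membership condition $\exists\,\mu'\in\sols{\varphi}: \mu \sim \mu'$ used by the brTPF response. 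Equivalence~(16) then follows with essentially no extra work: Case~(2) of Definition~\ref{def:FedQPL:Semantics} gives $\sols{\tpAdd{\varphi}{\symTP}{\member}} = \sols{\varphi} \Join \ifaceFct_\mathsf{brTPF}(\symTP,\symRDFgraph) = \sols{\varphi} \Join \eval{\symTP}{\symRDFgraph}$, which by commutativity of $\Join$~\cite{DBLP:journals/tods/PerezAG09} equals the left-hand side of Equivalence~(15) after unfolding, so the same set-equality closes it.

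The one step that is not pure bookkeeping is that final set-equality, and in particular its $\sols{\varphi} = \emptyset$ branch, where the definition of $\ifaceFct_\mathsf{brTPF}$ returns $\eval{\symTP}{\symRDFgraph}$ outright rather than a filtered set, so this branch has to be checked separately against what $\eval{\symTP}{\symRDFgraph} \Join \sols{\varphi}$ evaluates to; the non-empty branch, by contrast, is a routine rewriting of the definition of $\Join$ against the brTPF filter, fully analogous to Equivalence~(1) of Proposition~\ref{prop:Equivalences:TP}. Everything else is mechanical case analysis over Definition~\ref{def:FedQPL:Semantics}.
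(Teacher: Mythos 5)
Your reduction of both equivalences to the single set-equality $\eval{\symTP}{\symRDFgraph} \Join \sols{\varphi} = \ifaceFct_\mathsf{brTPF}\bigl((\symTP,\sols{\varphi}),\symRDFgraph\bigr)$ is exactly the bookkeeping the paper intends (it gives no proof beyond "follows from the definition of that interface and Definition~\ref{def:FedQPL:Semantics}"), but the step you then declare routine is precisely where the argument breaks, in \emph{both} branches. In the branch $\sols{\varphi}=\emptyset$, the left-hand side is $\eval{\symTP}{\symRDFgraph} \Join \emptyset = \emptyset$ (the join ranges over pairs, so an empty operand kills it), while the definition of $\ifaceFct_\mathsf{brTPF}$ returns $\eval{\symTP}{\symRDFgraph}$ outright; these differ whenever $\symTP$ has a match in $\symRDFgraph$. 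You correctly flag that this branch "has to be checked separately," but you never carry the check out, and it does not go through. In the branch $\sols{\varphi}\neq\emptyset$, the mismatch is structural rather than borderline: the left-hand side consists of \emph{merged} mappings $\mu_1 \cup \mu_2$ with domain $\fctVars{\symTP} \cup \fctDom{\mu_2}$, whereas $\ifaceFct_\mathsf{brTPF}$ returns a \emph{subset} of $\eval{\symTP}{\symRDFgraph}$, i.e.\ mappings with domain exactly $\fctVars{\symTP}$ --- a semijoin, not a join. So the two sides differ as soon as $\varphi$ binds any variable outside $\fctVars{\symTP}$ (e.g.\ $\symTP = (?x,p,?y)$ and $\sols{\varphi} = \lbrace\lbrace ?y \!\rightarrow\! c, ?z \!\rightarrow\! d\rbrace\rbrace$). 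This is therefore \emph{not} "fully analogous to Equivalence~(1)" of Proposition~\ref{prop:Equivalences:TP}, where both sides unfold to the same genuine join.

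The upshot is that your proof cannot be completed as proposed: the identity you reduce to is false without extra hypotheses (something like $\fctDom{\mu'} \subseteq \fctVars{\symTP}$ for every $\mu' \in \sols{\varphi}$, plus handling of the empty case), or else the right-hand sides of~(15) and~(16) need an additional client-side join with $\varphi$ wrapped around the $\xxxAlgebraOperatorName{req}$ operator, which is how a brTPF client actually combines the filtered response with its intermediate result. Surfacing this discrepancy explicitly --- rather than absorbing it into "mechanical case analysis" --- is the substantive content a full proof of this proposition would need.
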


The following equivalences focus on
	\removable{expressions with}
federation members
	\removable{that provide}
the \emph{SPARQL endpoint interface}~(cf.\ Example~\ref{ex:Interface:SPARQL}).
	They
follow from the definition of SPARQL graph patterns~\cite{DBLP:journals/tods/PerezAG09,Harris13:SPARQL1_1Language}, in combination with Definition~\ref{def:FedQPL:Semantics}.
\todo{TODO: more equivalences are possible for SPARQL endpoints; e.g., simulating something like brTPF requests by using FILTER or by using the VALUES clause (see Carlos' paper for these things)}

\begin{myproposition}
Let $\member = (\symRDFgraph,\iface_\mathsf{sparql})$ be a
member in a federation~$\Fed$ where $\iface_\mathsf{sparql}$ is the SPARQL endpoint interface;
let $\symTP$ be a triple pattern, $\symBGP$ be a BGP, and $\symPattern$, $\symPattern_1$ and $\symPattern_2$ be graph patterns.
It holds~that:
\begin{enumerate}
	\itemsep1mm 
\setcounter{enumi}{16}
	\item $\request{\symPattern_1}{\member\!}  \,\myequiv{\Fed}\,  \request{\symPattern_2}{\member\!}$ \, if $\symPattern_1$ and $\symPattern_2$ are semantically equivalent~\cite{DBLP:journals/tods/PerezAG09};

	\item $\unionB{ \request{\symPattern_1}{\member} \,}{\, \request{\symPattern_2}{\member\!} }  \,\myequiv{\Fed}\,  \request{ (\symPattern_1 \,\mathrm{UNION}\, \symPattern_2) }{\member\!}$;

	\item $\joinB{ \request{\symPattern_1}{\member} \,}{\, \request{\symPattern_2}{\member\!} }  \,\myequiv{\Fed}\,  \request{ (\symPattern_1 \,\mathrm{AND}\, \symPattern_2) }{\member\!}$;

	\item \removable{$\tpAdd{ \request{\symPattern}{\member} }{\symTP}{\member}  \,\myequiv{\Fed}\,  \request{ (\symPattern \,\mathrm{AND}\, \symTP) }{\member\!}$;}

	\item \removable{$\bgpAdd{ \request{\symPattern}{\member} }{\symBGP}{\member}  \,\myequiv{\Fed}\,  \request{ (\symPattern \,\mathrm{AND}\, \symBGP) }{\member\!}$;}
\end{enumerate} 
\end{myproposition}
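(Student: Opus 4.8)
The plan is to prove each of the five equivalences by the same routine unfolding used in the proof sketch of Proposition~\ref{prop:Equivalences:TP}: expand Definition~\ref{def:FedQPL:Semantics} for the FedQPL operator(s) on the left-hand side, replace each occurrence of the interface function using its definition from Example~\ref{ex:Interface:SPARQL}, i.e.\ $\ifaceFct_\mathsf{sparql}(\symPattern,\symRDFgraph) = \eval{\symPattern}{\symRDFgraph}$ for every graph pattern~$\symPattern$ (and, by the remarks in Section~\ref{sec:DataModel}, also $\ifaceFct_\mathsf{sparql}(\symTP,\symRDFgraph) = \eval{\symTP}{\symRDFgraph}$ and $\ifaceFct_\mathsf{sparql}(\symBGP,\symRDFgraph) = \eval{\symBGP}{\symRDFgraph}$, since $\iface_\mathsf{sparql}$ supports both triple pattern and BGP requests), and finally collapse the resulting expression over sets of solution mappings into the evaluation of a single SPARQL graph pattern using the set-based SPARQL algebra of P{\'e}rez et al.~\cite{DBLP:journals/tods/PerezAG09}. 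All five claims concern only the single member $\member = (\symRDFgraph,\iface_\mathsf{sparql})$, so the federation~$\Fed$ enters only through validity; I would observe once that each graph pattern appearing on a right-hand side ($\symPattern_1\,\mathrm{UNION}\,\symPattern_2$, $\symPattern_1\,\mathrm{AND}\,\symPattern_2$, $\symPattern\,\mathrm{AND}\,\symTP$, $\symPattern\,\mathrm{AND}\,\symBGP$) is again a legal request for $\iface_\mathsf{sparql}$, so every expression involved is valid for~$\Fed$.

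Carrying this out: for~(17), $\sols{\request{\symPattern_i}{\member}} = \eval{\symPattern_i}{\symRDFgraph}$ by Case~(1) of Definition~\ref{def:FedQPL:Semantics} for $i\in\{1,2\}$, and semantic equivalence of $\symPattern_1$ and $\symPattern_2$ as SPARQL patterns~\cite{DBLP:journals/tods/PerezAG09} means $\eval{\symPattern_1}{\symRDFgraph'} = \eval{\symPattern_2}{\symRDFgraph'}$ for every RDF graph~$\symRDFgraph'$, hence in particular for~$\symRDFgraph$. For~(18), Case~(5) and then Case~(1) give $\sols{\unionB{\request{\symPattern_1}{\member}}{\request{\symPattern_2}{\member}}} = \eval{\symPattern_1}{\symRDFgraph}\cup\eval{\symPattern_2}{\symRDFgraph}$, which by the algebra semantics of $\mathrm{UNION}$ equals $\eval{(\symPattern_1\,\mathrm{UNION}\,\symPattern_2)}{\symRDFgraph} = \sols{\request{(\symPattern_1\,\mathrm{UNION}\,\symPattern_2)}{\member}}$; equivalence~(19) is identical with Case~(4) in place of Case~(5) and with $\eval{\symPattern_1}{\symRDFgraph}\Join\eval{\symPattern_2}{\symRDFgraph} = \eval{(\symPattern_1\,\mathrm{AND}\,\symPattern_2)}{\symRDFgraph}$. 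Equivalences~(20) and~(21) are the $\xxxAlgebraOperatorName{tpAdd}$- and $\xxxAlgebraOperatorName{bgpAdd}$-counterparts: apply Case~(2) resp.\ Case~(3) of Definition~\ref{def:FedQPL:Semantics}, use the above identities for $\ifaceFct_\mathsf{sparql}$ on $\symTP$ resp.\ $\symBGP$, and then invoke $\eval{\symPattern}{\symRDFgraph}\Join\eval{\symTP}{\symRDFgraph} = \eval{(\symPattern\,\mathrm{AND}\,\symTP)}{\symRDFgraph}$ resp.\ $\eval{\symPattern}{\symRDFgraph}\Join\eval{\symBGP}{\symRDFgraph} = \eval{(\symPattern\,\mathrm{AND}\,\symBGP)}{\symRDFgraph}$, which hold by the algebra semantics of $\mathrm{AND}$ together with the fact (Section~\ref{sec:Preliminaries}) that a BGP behaves as the conjunction of its triple patterns and that $\eval{\symTP}{\symRDFgraph} = \eval{\{\symTP\}}{\symRDFgraph}$.

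I do not expect a genuine obstacle; the one point to handle with care is that every appeal to ``the algebra semantics of $\mathrm{AND}$ / $\mathrm{UNION}$'' must be to the \emph{set}-based semantics fixed in Section~\ref{sec:Preliminaries} (under which $\eval{(P_1\,\mathrm{AND}\,P_2)}{\symRDFgraph} = \eval{P_1}{\symRDFgraph}\Join\eval{P_2}{\symRDFgraph}$ and $\eval{(P_1\,\mathrm{UNION}\,P_2)}{\symRDFgraph} = \eval{P_1}{\symRDFgraph}\cup\eval{P_2}{\symRDFgraph}$ hold by definition of the algebra), and, for the mixed-operator cases~(20)--(21), that one correctly identifies a single triple pattern with the singleton BGP it denotes. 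With this bookkeeping, each of the five equivalences reduces to a two- or three-line chain of equalities exactly like the one displayed for Equivalence~(1) in the proof sketch of Proposition~\ref{prop:Equivalences:TP}.
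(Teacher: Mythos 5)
Your proposal is correct and follows exactly the route the paper intends: the paper proves only Equivalence~(1) in detail and states that the remaining equivalences, including~(17)--(21), "follow from the definition of SPARQL graph patterns, in combination with Definition~\ref{def:FedQPL:Semantics}," which is precisely your unfold-substitute-collapse routine using $\ifaceFct_\mathsf{sparql}(\symPattern,\symRDFgraph)=\eval{\symPattern}{\symRDFgraph}$ and the set-based algebra identities for $\mathrm{AND}$ and $\mathrm{UNION}$. Your explicit care about the set-based semantics and the identification of a triple pattern with its singleton BGP in cases~(20)--(21) matches the paper's conventions, so no gap remains.
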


The following equivalences cover expressions
	with
\emph{multiple federation members}.
	These equivalences
follow from Definition~\ref{def:FedQPL:Semantics}.

\begin{myproposition}
Let $\member_1 \!=\! (\symRDFgraph,\iface_1)$, $\member_2 \!=\! (\symRDFgraph,\iface_2)$, $\member_3 \!=\! (\symRDFgraph,\iface_3)$, and $\member_4 \!=\! (\symRDFgraph,\iface_4)$ be 
members in a federation~$\Fed$~(not necessarily different ones) such that interfaces $\iface_1$ and $\iface_2$ support triple pattern requests and interfaces $\iface_3$ and $\iface_4$ support BGP requests.
Let $\symTP$, $\symTP_1$ and $\symTP_2$ be triple patterns,
$\symBGP$, $\symBGP_1$ and $\symBGP_2$ be BGPs,
and $\varphi$ be a FedQPL expression that is valid for $\Fed$.
It holds that:
\begin{enumerate}
	\itemsep1mm 
\setcounter{enumi}{21}
	\item $\tpAddB{ \tpAdd{\varphi}{\symTP_2}{\member_2\!} }{\symTP_1}{\member_1\!}  \,\myequiv{\Fed}\,  \tpAddB{ \tpAdd{\varphi}{\symTP_1}{\member_1\!} }{\symTP_2}{\member_2\!}$;

	\item $\tpAddB{ \bgpAdd{\varphi}{\symBGP}{\member_3\!} }{\symTP}{\member_1\!}  \,\myequiv{\Fed}\,  \bgpAddB{ \tpAdd{\varphi}{\symTP}{\member_1\!} }{\symBGP}{\member_3\!}$;

	\item $\bgpAddB{ \bgpAdd{\varphi}{\symBGP_2}{\member_4\!} }{\symBGP_1}{\member_3\!}  \,\myequiv{\Fed}\,  \bgpAddB{ \bgpAdd{\varphi}{\symBGP_1}{\member_3\!} }{\symBGP_2}{\member_4\!}$.
\end{enumerate} 
\end{myproposition}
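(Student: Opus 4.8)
The plan is to prove all three equivalences in exactly the way the authors sketch for Proposition~\ref{prop:Equivalences:TP}: unfold the semantics via Definition~\ref{def:FedQPL:Semantics} and then invoke associativity and commutativity of the SPARQL join operator~$\Join$~\cite{DBLP:journals/tods/PerezAG09,DBLP:conf/icdt/Schmidt0L10}. For each of~(22)--(24) I would rewrite both the left-hand side and the right-hand side into the common form $\sols{\varphi} \Join \Omega_a \Join \Omega_b$, where $\Omega_a$ and $\Omega_b$ are fixed sets of solution mappings that do not depend on how the two ``add'' operators are nested; once both sides are in this form, equality is immediate because $\Join$ is associative and commutative, and then the equivalence for $\Fed$ follows by Definition~\ref{def:SemanticEquivalence}.

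Concretely, for Equivalence~(22) I first apply Case~(2) of Definition~\ref{def:FedQPL:Semantics} twice, obtaining on the left $\bigl( \sols{\varphi} \Join \ifaceFct_2(\symTP_2,\symRDFgraph) \bigr) \Join \ifaceFct_1(\symTP_1,\symRDFgraph)$ and on the right $\bigl( \sols{\varphi} \Join \ifaceFct_1(\symTP_1,\symRDFgraph) \bigr) \Join \ifaceFct_2(\symTP_2,\symRDFgraph)$, where $\ifaceFct_1$ and $\ifaceFct_2$ are the interface functions of $\member_1$ and $\member_2$. These two expressions involve the same three sets of solution mappings under $\Join$, so they are equal. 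For Equivalence~(23) I would do the analogous unfolding, using Case~(3) of Definition~\ref{def:FedQPL:Semantics} for the $\xxxAlgebraOperatorName{bgpAdd}$ occurrence and Case~(2) for the $\xxxAlgebraOperatorName{tpAdd}$ occurrence, so that both sides become $\sols{\varphi} \Join \ifaceFct_3(\symBGP,\symRDFgraph) \Join \ifaceFct_1(\symTP,\symRDFgraph)$ up to reordering; Equivalence~(24) is the same with two applications of Case~(3), yielding $\sols{\varphi} \Join \ifaceFct_3(\symBGP_1,\symRDFgraph) \Join \ifaceFct_4(\symBGP_2,\symRDFgraph)$ on both sides.

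I do not expect any real obstacle here: the argument is purely the algebra of $\Join$. The one point to be slightly careful about is that the hypotheses ``$\iface_1, \iface_2$ support triple pattern requests'' and ``$\iface_3, \iface_4$ support BGP requests'' are precisely what make the nested expressions on both sides valid for $\Fed$ in the sense of Definition~\ref{def:FedQPL:Validity}, so that $\sols{\cdot}$ is defined for every subexpression involved; the equalities themselves hold for arbitrary sets of solution mappings in the join positions and do not otherwise use these hypotheses. If preferred, one may additionally rewrite $\ifaceFct_i(\symTP,\symRDFgraph)$ as $\eval{\symTP}{\symRDFgraph}$ and $\ifaceFct_j(\symBGP,\symRDFgraph)$ as $\eval{\symBGP}{\symRDFgraph}$ via the definition of ``supports \dots requests'', but this reformulation is not needed for the proof.
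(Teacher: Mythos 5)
Your proposal is correct and matches the paper's intended argument exactly: the paper proves only Equivalence~(1) in detail and states that all remaining equivalences, including (22)--(24), follow "in a similar manner" by unfolding $\sols{\cdot}$ via the semantics definition and applying the commutativity and associativity of $\Join$, which is precisely what you do. Your added remark that the support hypotheses serve only to guarantee validity (and hence well-definedness of $\sols{\cdot}$ on both sides), not the join algebra itself, is a correct and worthwhile clarification.
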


The following equivalences are independent of interface types and focus on the relationships between the two multiway operators~($\xxxAlgebraOperatorName{mj}$ and $\xxxAlgebraOperatorName{mu}$) and their respective binary counterparts.

\begin{myproposition}
Let $\Fed$ be a federation,
and let $\mathit{\Phi}$, $\mathit{\Phi}_\mathsf{J}$, and $\mathit{\Phi}_\mathsf{U}$ be sets of FedQPL expressions that are valid for $\Fed$ such that
$|\mathit{\Phi}|>1$,
there exists $\join{\varphi_1}{\varphi_2} \in \mathit{\Phi}_\mathsf{J}$%
	\ and
$\union{\varphi_1}{\varphi_2} \in \mathit{\Phi}_\mathsf{U}$.
It holds that:
\begin{enumerate}
	\itemsep1mm 
\setcounter{enumi}{24}
	\item $\xxxAlgebraOperatorName{mj} \,\mathit{\Phi}  \,\myequiv{\Fed}\,  \join{ \xxxAlgebraOperatorName{mj} \, \mathit{\Phi}'\! }{\, \varphi }$, where $\varphi \in \mathit{\Phi}$ and $\mathit{\Phi}' = \mathit{\Phi} - \lbrace \varphi \rbrace$;

	\item $\xxxAlgebraOperatorName{mu} \,\mathit{\Phi}  \,\myequiv{\Fed}\,  \union{ \xxxAlgebraOperatorName{mu} \, \mathit{\Phi}'\! }{\, \varphi }$, where $\varphi \in \mathit{\Phi}$ and $\mathit{\Phi}' = \mathit{\Phi} - \lbrace \varphi \rbrace$;

	\item $\xxxAlgebraOperatorName{mj} \,\mathit{\Phi}_\mathsf{J}  \,\myequiv{\Fed}\,  \xxxAlgebraOperatorName{mj} \, \mathit{\Phi}'$\!, where $\mathit{\Phi}'\! = ( \mathit{\Phi}_\mathsf{J} - \lbrace \join{\varphi_1}{\varphi_2} \rbrace ) \cup \lbrace \varphi_1,\varphi_2 \rbrace$;  

	\item $\xxxAlgebraOperatorName{mu} \,\mathit{\Phi}_\mathsf{U}  \myequiv{\Fed}  \xxxAlgebraOperatorName{mu} \, \mathit{\Phi}'$\!, where $\mathit{\Phi}'\! = ( \mathit{\Phi}_\mathsf{U} - \lbrace \union{\varphi_1}{\varphi_2} \rbrace ) \cup \lbrace \varphi_1,\varphi_2 \rbrace$.  
\end{enumerate}
\end{myproposition}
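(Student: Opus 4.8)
The plan is to prove all four equivalences exactly as the earlier propositions in this section are proved: unfold both sides with Definition~\ref{def:FedQPL:Semantics} down to the level of sets of solution mappings, and then reconcile the two resulting expressions using nothing more than the associativity and commutativity of $\Join$ and of $\cup$ recalled in Section~\ref{sec:Preliminaries}. Validity of the right-hand sides for $\Fed$ is automatic from Definition~\ref{def:FedQPL:Validity}, since each operand of a right-hand side is a subexpression of a valid left-hand side, or a $\xxxAlgebraOperatorName{mj}$/$\xxxAlgebraOperatorName{mu}$ over a nonempty subset of one. Items~(25) and~(27) concern $\xxxAlgebraOperatorName{mj}$ together with $\Join$; items~(26) and~(28) are their literal mirror images for $\xxxAlgebraOperatorName{mu}$ together with $\cup$, obtained by swapping Case~(4)$\leftrightarrow$Case~(5) and Case~(6)$\leftrightarrow$Case~(7) of Definition~\ref{def:FedQPL:Semantics} and replacing the join law by the union law. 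So I would spell out only the $\Join$-cases.

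For~(25), write $\mathit{\Phi} = \lbrace \varphi_1,\dots,\varphi_n \rbrace$ and, using commutativity of $\Join$, assume without loss of generality that the distinguished element is $\varphi = \varphi_n$; since $|\mathit{\Phi}|>1$, the set $\mathit{\Phi}' = \lbrace \varphi_1,\dots,\varphi_{n-1} \rbrace$ is nonempty, so $\xxxAlgebraOperatorName{mj}\,\mathit{\Phi}'$ is well-formed. By Case~(6), $\sols{\xxxAlgebraOperatorName{mj}\,\mathit{\Phi}} = \sols{\varphi_1} \Join \cdots \Join \sols{\varphi_n}$ and $\sols{\xxxAlgebraOperatorName{mj}\,\mathit{\Phi}'} = \sols{\varphi_1} \Join \cdots \Join \sols{\varphi_{n-1}}$, while by Case~(4), $\sols{\join{\xxxAlgebraOperatorName{mj}\,\mathit{\Phi}'}{\varphi}} = \sols{\xxxAlgebraOperatorName{mj}\,\mathit{\Phi}'} \Join \sols{\varphi_n}$; associativity of $\Join$ identifies the two sides. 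Item~(26) is word-for-word the same with $\cup$ in place of $\Join$ and Cases~(7),(5) in place of~(6),(4). For~(27), let $\psi = \join{\varphi_1}{\varphi_2}$ be the distinguished element of $\mathit{\Phi}_\mathsf{J}$. Enumerating $\mathit{\Phi}_\mathsf{J}$ with $\psi$ last and applying Case~(6) then Case~(4), $\sols{\xxxAlgebraOperatorName{mj}\,\mathit{\Phi}_\mathsf{J}}$ is a $\Join$ over the sets $\sols{\chi}$ for $\chi \in \mathit{\Phi}_\mathsf{J}$ in which the factor contributed by $\psi$ is itself the grouped product $\bigl(\sols{\varphi_1} \Join \sols{\varphi_2}\bigr)$; dropping that grouping by associativity yields a flat $\Join$, which is exactly the one produced by Case~(6) from $\mathit{\Phi}' = (\mathit{\Phi}_\mathsf{J} - \lbrace \psi \rbrace) \cup \lbrace \varphi_1,\varphi_2 \rbrace$. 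Item~(28) is the same argument with $\cup$.

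The only genuinely non-routine point --- and the place I would be most careful about --- is the set bookkeeping in~(27) and~(28). The step ``the single factor $\sols{\varphi_1}\Join\sols{\varphi_2}$ may be replaced by the two factors $\sols{\varphi_1}$ and $\sols{\varphi_2}$ listed separately'' follows from associativity precisely when $\varphi_1$ and $\varphi_2$ are distinct from each other and do not already occur among the remaining elements of $\mathit{\Phi}_\mathsf{J}$ (resp.\ $\mathit{\Phi}_\mathsf{U}$), so that $\mathit{\Phi}'$ genuinely has one more element than $\mathit{\Phi}_\mathsf{J}$ (resp.\ $\mathit{\Phi}_\mathsf{U}$). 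If one of them coincides with an element already present, the set $\mathit{\Phi}'$ collapses and the identity would need idempotence of $\Join$ (resp.\ $\cup$), which does not hold for arbitrary sets of solution mappings; the equivalences are therefore to be read --- and are applied in the rewriting rules --- for the case where the two operands of the binary operator being flattened are new subexpressions, which is the situation in any use of the rule. Everything else is a mechanical application of Definition~\ref{def:FedQPL:Semantics}.
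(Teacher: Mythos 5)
Your proof is correct and follows exactly the route the paper intends for every equivalence in this section: unfold both sides via the semantics definition (Definition~\ref{def:FedQPL:Semantics}) and reconcile the resulting expressions using the associativity and commutativity of $\Join$ and $\cup$; the paper spells this out only for Equivalence~(1) and explicitly leaves the remaining propositions, including this one, as analogous applications of the same two steps. Your closing caveat about the set-collapse cases in~(27) and~(28) is a genuine subtlety the paper does not address---$\Omega \Join \Omega \neq \Omega$ in general for sets of solution mappings with heterogeneous domains, and likewise an element of $\mathit{\Phi}_\mathsf{J}$ absorbed by the union forming $\mathit{\Phi}'$ would silently drop a factor---so reading the rule as applying when $\varphi_1$, $\varphi_2$ are fresh, distinct operands is the right way to keep the stated equivalence sound.
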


The following equivalences are also independent of interface types and focus only on the two multiway operators.

\begin{myproposition}
Let $\Fed$ be a federation;
let $\varphi$ be a FedQPL expression that is valid for $\Fed$,
and let $\mathit{\Phi}_\mathsf{J}$ and $\mathit{\Phi}_\mathsf{U}$ be sets of FedQPL expressions that are valid for $\Fed$ such that there exists  $\xxxAlgebraOperatorName{mj}\,\mathit{\Phi}_\mathsf{J}'\! \in \mathit{\Phi}_\mathsf{J}$ and $\xxxAlgebraOperatorName{mu}\,\mathit{\Phi}_\mathsf{U}'\! \in \mathit{\Phi}_\mathsf{U}$.
It holds that:
\begin{enumerate}
	\itemsep1mm 
\setcounter{enumi}{28}
	\item $\xxxAlgebraOperatorName{mj} \,\mathit{\Phi}_\mathsf{J} \,\myequiv{\Fed}\,  \xxxAlgebraOperatorName{mj} \, \mathit{\Phi}_\mathsf{J}''$, where $\mathit{\Phi}_\mathsf{J}'' = ( \mathit{\Phi}_\mathsf{J} - \lbrace \xxxAlgebraOperatorName{mj}\,\mathit{\Phi}_\mathsf{J}' \rbrace ) \cup \mathit{\Phi}_\mathsf{J}'$;

	\item $\xxxAlgebraOperatorName{mu} \,\mathit{\Phi}_\mathsf{U} \,\myequiv{\Fed}\,  \xxxAlgebraOperatorName{mu} \, \mathit{\Phi}_\mathsf{U}''$, where $\mathit{\Phi}_\mathsf{U}'' = ( \mathit{\Phi}_\mathsf{U} - \lbrace \xxxAlgebraOperatorName{mu}\,\mathit{\Phi}_\mathsf{U}' \rbrace ) \cup \mathit{\Phi}_\mathsf{U}'$;

	\item $\munion{ \varphi } \,\myequiv{\Fed}\,  \varphi$;
	\item $\mjoin{ \varphi } \,\myequiv{\Fed}\,  \varphi$.
\end{enumerate}
\end{myproposition}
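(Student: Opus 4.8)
The plan is to obtain all four equivalences directly from Definition~\ref{def:FedQPL:Semantics}, in the same style as the sketch for Equivalence~(1) in Proposition~\ref{prop:Equivalences:TP}: unfold $\sols{\cdot}$ on both sides and appeal only to the associativity and commutativity of the algebra operators $\Join$ and $\cup$ (plus, for $\cup$, its idempotence). Two of the four are immediate. For Equivalence~(32), Case~(6) of Definition~\ref{def:FedQPL:Semantics} applied to $\mjoin{\varphi}$ (i.e., with $n=1$) gives $\sols{\mjoin{\varphi}} = \sols{\varphi}$; symmetrically, Case~(7) applied to $\munion{\varphi}$ yields Equivalence~(31). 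It remains to handle the two flattening rules~(29) and~(30).

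For Equivalence~(30), write $\mathit{\Phi}_\mathsf{U} = \lbrace \xxxAlgebraOperatorName{mu}\,\mathit{\Phi}_\mathsf{U}', \psi_1, \ldots, \psi_k \rbrace$. By Case~(7) of Definition~\ref{def:FedQPL:Semantics}, $\sols{\xxxAlgebraOperatorName{mu}\,\mathit{\Phi}_\mathsf{U}} = \sols{\xxxAlgebraOperatorName{mu}\,\mathit{\Phi}_\mathsf{U}'} \cup \sols{\psi_1} \cup \cdots \cup \sols{\psi_k}$, and applying the same case to the inner expression gives $\sols{\xxxAlgebraOperatorName{mu}\,\mathit{\Phi}_\mathsf{U}'} = \bigcup_{\chi \in \mathit{\Phi}_\mathsf{U}'} \sols{\chi}$. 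Substituting, and using that $\cup$ is associative, commutative, and idempotent, the right-hand side collapses to $\bigcup_{\varphi \in \mathit{\Phi}_\mathsf{U}''} \sols{\varphi}$, which by Case~(7) is $\sols{\xxxAlgebraOperatorName{mu}\,\mathit{\Phi}_\mathsf{U}''}$; here idempotence is exactly what lets us absorb any expression that happens to lie both in $\mathit{\Phi}_\mathsf{U}'$ and in $\mathit{\Phi}_\mathsf{U} - \lbrace \xxxAlgebraOperatorName{mu}\,\mathit{\Phi}_\mathsf{U}' \rbrace$.

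Equivalence~(29) is meant to go through the same way, now via Case~(6) and the associativity and commutativity of $\Join$: expanding $\sols{\xxxAlgebraOperatorName{mj}\,\mathit{\Phi}_\mathsf{J}}$ once for the whole set and once for the nested $\xxxAlgebraOperatorName{mj}\,\mathit{\Phi}_\mathsf{J}'$ produces a join of the sets $\sols{\varphi}$ over all relevant $\varphi$, and reassociating should give $\sols{\xxxAlgebraOperatorName{mj}\,\mathit{\Phi}_\mathsf{J}''}$. The point that needs care here --- and the main obstacle --- is that, unlike $\cup$, the operator $\Join$ is \emph{not} idempotent on sets of solution mappings: two compatible mappings with distinct domains can combine into a further mapping, so in general $\Omega \Join \Omega \neq \Omega$. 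Consequently, if $\mathit{\Phi}_\mathsf{J}'$ shares an element with $\mathit{\Phi}_\mathsf{J} - \lbrace \xxxAlgebraOperatorName{mj}\,\mathit{\Phi}_\mathsf{J}' \rbrace$, a repeated join factor on the left-hand side cannot simply be dropped. I would close this by working under the natural assumption that the operands of a multiway operator are pairwise-distinct subplans that do not recur among the remaining operands at the same level; then $\mathit{\Phi}_\mathsf{J}''$ is the disjoint union of $\mathit{\Phi}_\mathsf{J}'$ and $\mathit{\Phi}_\mathsf{J} - \lbrace \xxxAlgebraOperatorName{mj}\,\mathit{\Phi}_\mathsf{J}' \rbrace$, the reassociation is an identity over the same collection of factors, and the argument reduces to a one-line computation from Definition~\ref{def:FedQPL:Semantics}, just as for Equivalence~(1). (Alternatively, one could add that disjointness hypothesis to the statement of~(29).)
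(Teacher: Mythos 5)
Your proposal is correct and follows essentially the same route as the paper, which proves none of these equivalences explicitly but derives them all, in the style of the proof of Equivalence~(1), by unfolding Definition~\ref{def:FedQPL:Semantics} and invoking the associativity, commutativity, and (for $\cup$) idempotence of the algebra operators. Your observation about Equivalence~(29) is a genuine subtlety the paper silently assumes away: since $\Join$ is not idempotent on sets of solution mappings, the stated equivalence can indeed fail when $\mathit{\Phi}_\mathsf{J}'$ shares an element with $\mathit{\Phi}_\mathsf{J} - \lbrace \xxxAlgebraOperatorName{mj}\,\mathit{\Phi}_\mathsf{J}' \rbrace$, so the disjointness hypothesis you add is needed for a fully rigorous statement.
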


Lastly, the following equivalences are also independent of interface types; they
follow
from Definition~\ref{def:FedQPL:Semantics} and the corresponding equivalences for the SPARQL algebra~\cite{DBLP:journals/tods/PerezAG09,DBLP:conf/icdt/Schmidt0L10}.

\begin{myproposition} \label{prop:Equivalences:join&union}
Let $\Fed$ be a federation and let $\varphi_1$, $\varphi_2$, and $\varphi_3$ be FedQPL expressions that are valid for $\Fed$.
It holds that:
\begin{enumerate}
	\itemsep1mm 
\setcounter{enumi}{32}
	\item $\join{\varphi_1}{\varphi_2}  \,\myequiv{\Fed}\,  \join{\varphi_2}{\varphi_1}$;
	\item $\union{\varphi_1}{\varphi_2}  \,\myequiv{\Fed}\,  \union{\varphi_2}{\varphi_1}$;
	\item $\union{\varphi_1}{\varphi_1}  \,\myequiv{\Fed}\,  \varphi_1$;
	\item $\joinB{\varphi_1}{\join{\varphi_2}{\varphi_3}}  \,\myequiv{\Fed}\,  \joinB{\join{\varphi_1}{\varphi_2}}{\varphi_3}$;
	\item $\unionB{\varphi_1}{\union{\varphi_2}{\varphi_3}}  \,\myequiv{\Fed}\,  \unionB{\union{\varphi_1}{\varphi_2}}{\varphi_3}$;
	\item $\joinB{\varphi_1}{\union{\varphi_2}{\varphi_3}}  \,\myequiv{\Fed}\,  \unionB{\join{\varphi_1}{\varphi_2}}{\join{\varphi_1}{\varphi_3}}$.
\end{enumerate}
\end{myproposition}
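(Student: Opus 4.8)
The plan is to handle all six equivalences uniformly by the same two-step reduction. First, for each claimed equivalence $\varphi \myequiv{\Fed} \varphi'$, I would unfold both $\sols{\varphi}$ and $\sols{\varphi'}$ according to Definition~\ref{def:FedQPL:Semantics}. Since every operator appearing in this proposition is either $\xxxAlgebraOperatorName{join}$ or $\xxxAlgebraOperatorName{union}$, only Cases~(4) and~(5) of that definition are relevant, and they translate the FedQPL-level operators directly into the SPARQL-algebra operators $\Join$ and $\cup$ over the sets $\sols{\varphi_1}$, $\sols{\varphi_2}$, $\sols{\varphi_3}$. Hence, after unfolding, each equivalence becomes an identity between two set-algebra expressions in the unknowns $\Omega_i \definedAs \sols{\varphi_i}$.

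Second, I would discharge each resulting set-algebra identity by citing the corresponding known property of $\Join$ and $\cup$ on sets of solution mappings~\cite{DBLP:journals/tods/PerezAG09, DBLP:conf/icdt/Schmidt0L10}: commutativity of $\Join$ for~(33), commutativity of $\cup$ for~(34), idempotence of $\cup$ for~(35), associativity of $\Join$ for~(36), associativity of $\cup$ for~(37), and distributivity of $\Join$ over $\cup$ for~(38). For instance, for~(33) we obtain $\sols{\join{\varphi_1}{\varphi_2}} = \Omega_1 \Join \Omega_2 = \Omega_2 \Join \Omega_1 = \sols{\join{\varphi_2}{\varphi_1}}$, where the first and last steps are Definition~\ref{def:FedQPL:Semantics}, Case~(4), and the middle step is the commutativity of $\Join$; the other five cases are analogous, each reducing a single algebraic step to a known fact about $\Join$ and~$\cup$.

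I expect no real obstacle here: the content of the proposition is entirely inherited from the algebra of $\Join$ and $\cup$, and the only points that deserve an explicit word are (i)~that both sides of each equivalence are valid for $\Fed$, so that $\sols{\cdot}$ is defined on both sides --- this is immediate from Definition~\ref{def:FedQPL:Validity}, Cases~(4) and~(5), since the two sides of each equivalence are built from the same valid subexpressions --- and (ii)~the distributivity identity~(38), which is the only one that is not purely ``structural''; but it is a standard fact that $\Omega_1 \Join (\Omega_2 \cup \Omega_3) = (\Omega_1 \Join \Omega_2) \cup (\Omega_1 \Join \Omega_3)$ for sets of solution mappings, so even that step reduces to a citation. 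Following the convention already used in the proof sketch of Proposition~\ref{prop:Equivalences:TP}, I would write out one representative case (say~(33) or~(38)) in full and leave the remaining ones, which are mechanical, to the reader.
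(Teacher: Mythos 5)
Your proposal is correct and follows exactly the route the paper intends: it justifies these equivalences by unfolding $\sols{\cdot}$ via Cases~(4) and~(5) of Definition~\ref{def:FedQPL:Semantics} and then appealing to the known commutativity, associativity, idempotence, and distributivity properties of $\Join$ and $\cup$ in the set-based SPARQL algebra, which is precisely what the paper's one-line justification and the representative proof of Equivalence~(1) in Proposition~\ref{prop:Equivalences:TP} prescribe. Your added remark that validity of both sides follows from Definition~\ref{def:FedQPL:Validity} is a correct (and slightly more careful) observation than the paper makes explicit.
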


\begin{myexample} \label{ex:Equivalences:join&union}
By applying Equivalence~(33)~(cf.\ Proposition~\ref{prop:Equivalences:join&union}), we may swap the two subexpressions in the FedQPL expression in Example~\ref{ex:Equivalences:TP}, which results in the following expression.
\begin{equation*}
\join{ \request{\symTP_1}{\member_1} }{ \request{\symTP_2}{\member_2} }
\end{equation*}
We may now rewrite this expression into the following one by applying Equivalence~(1)~(cf.\ Proposition~\ref{prop:Equivalences:TP}) again.
\begin{equation*}
\tpAddB{ \request{\symTP_2}{\member_2\!} }{\symTP_1}{\member_1\!}
\end{equation*}
Observe that the latter expression is, thus, semantically equivalent (for $\Fed_\mathsf{ex}$) not only to the previous expression, but also to the expressions in Examples~\ref{ex:Equivalences:TP} and~\ref{ex:tpAdd}, respectively. 
\end{myexample}

\begin{myexample} \label{ex:tpAdd-2}
The previous example
highlights that an alternative to the plan captured by the FedQPL expression
$\tpAddB{ \request{\symTP_1}{\member_1\!} }{\symTP_2}{\member_2\!}$
 in Example~\ref{ex:tpAdd} would be---among others---the plan represented by the expression $\tpAddB{ \request{\symTP_2}{\member_2\!} }{\symTP_1}{\member_1\!}$.
Considering that the interface of federation member~$\member_1$ is
	a brTPF interface
whereas $\member_2$ has a TPF interface~(cf.\ Example~\ref{ex:Federation}), we may prefer the latter plan because it 
	provides for a greater number of algorithms to choose from during
physical query optimization~(e.g., implementing $\xxxAlgebraOperatorName{tpAdd}$ over
	brTPF may be done by using a bind join algorithm~\cite{hartig2016brtpf}, which is impossible with TPF).
On the other hand, a query optimizer may estimate a significantly smaller result size for $\request{\symTP_1}{\member_1\!}$ than for $\request{\symTP_2}{\member_2\!}$, which could justify choosing the plan of Example~\ref{ex:tpAdd}.
\end{myexample}

\section{Concluding Remarks} \label{sec:Concluding}

In this paper, in addition to proving several important results regarding the query plan language that we propose%
, we have demonstrated initial ideas for applying this language.

For instance, we have shown that our source assignments can
represent the output of existing query decomposition \& source selection approaches~(cf.\ Section~\ref{sec:SourceSelection:ExistingApproaches}), and that these approaches are inherently
	limited%
~(cf.\ Proposition~\ref{prop:LimitationOfRestrictedJoinOverUnionClass}). Hence, there is still an opportunity for future work on better source selection approaches even in the context of homogeneous federations of SPARQL endpoints.

Moreover, we have not only demonstrated that the language can be used to represent logical query plans~(cf.\ Examples~\ref{ex:request}--\ref{ex:union} and~\ref{ex:FedQPL:Correctness}), but also that it is suitable both as a basis for logical query optimization~(cf.\ Examples~\ref{ex:Equivalences:TP}--\ref{ex:Equivalences:join&union}) and as a starting point for physical query optimization~(cf.\ Example~\ref{ex:tpAdd-2}).
	As future work regarding FedQPL, we are planning to extend the language with additional operators that can be used to represent query plans for more expressive fragments of SPARQL, and we aim to also provide a multiset semantics.

	However,
the ultimate next step
is to develop effective optimization approaches for queries over federations with heterogeneous interfaces. FedQPL provides a formal foundation for~such~work.

\begin{acks}  
Sijin Cheng's work was funded
by CUGS~(the National Graduate School in Computer Science, Sweden).
Olaf Hartig's work was funded in equal parts by the Swedish Research Council~(%
	Vetens\-kaps\-r{\aa}det,
project reg.\ no.~2019-05655)
and by the CENIIT program at Lin\-k\"oping University~(project no.~17.05).
\end{acks}

\balance

\bibliographystyle{ACM-Reference-Format}
\bibliography{main}

\ExtendedVersion{
	\newpage
\appendix

\begin{center}
	\huge
\bfseries
Appendix
\end{center}

\section*{Proof of Lemma~\ref{lemma:LimitationOfRestrictedJoinOverUnionClass}}

\label{appendix}

\textbf{Lemma~\ref{lemma:LimitationOfRestrictedJoinOverUnionClass}.} \,
There exists a BGP~$\symBGP$, a triple pattern accessible federation~$\Fed$, and a source assignment~$a$ that is correct for $\symBGP$ over~$\Fed$, such that $\sacost{a} < \sacost{a'}$ for every source assignment~$a'$ that is in $S^*_{\Join_{(\cup)}}$ and that is also correct for $\symBGP$ over~$\Fed$.

\begin{proof}
Recall our example source assignment~$a'_\mathsf{ex}$ for
BGP~$\symBGP_\mathsf{ex}$ over
	\removable{our example}
federation~$\Fed_\mathsf{ex}$~(cf.\ Example~\ref{ex:SourceAssignment}). $\Fed_\mathsf{ex}$ is triple pattern accessible~(cf.\ Example~\ref{ex:Federation}) and $a'_\mathsf{ex}$ is correct for $\symBGP_\mathsf{ex}$ over $\Fed_\mathsf{ex}$~(cf.\ Example~\ref{ex:SourceAssignment}). Note also that $a'_\mathsf{ex}$ is clearly not in $S^*_{\Join_{(\cup)}}$%
. Now, we may enumerate every source assignment~$a'$ in $S^*_{\Join_{(\cup)}}$ that is valid for $\symBGP_\mathsf{ex}$ over $\Fed_\mathsf{ex}$ and that has an sa-cost smaller or equivalent to 3, i.e., $\sacost{a'} \leq \sacost{ a'_\mathsf{ex} }$. For each of these source assignments we will find that it is not correct. 
\end{proof}

\section*{Proof of Lemma~\ref{lemma:LimitationOfJoinOverUnionClass}}

\textbf{Lemma~\ref{lemma:LimitationOfJoinOverUnionClass}.} \,
There exists a BGP~$\symBGP$, a triple pattern accessible federation~$\Fed$, and a source assignment~$a$ that is correct for $\symBGP$ over~$\Fed$, such that $\sacost{a} < \sacost{a'}$ for every source assignment~$a'$ that is in $S_{\Join_{(\cup)}}$ and that is also correct for $\symBGP$ over~$\Fed$.

\begin{proof}
	We can use the same proof as for Lemma~\ref{lemma:LimitationOfRestrictedJoinOverUnionClass}
because the source assignment~$a'_\mathsf{ex}$ as used in that proof is also not in $S_{\Join_{(\cup)}}$.
\end{proof}

\section*{Proof of Theorem~\ref{thm:SourceSelectionComplexity:LowerBound}}

\label{Proof:thm:SourceSelectionComplexity:LowerBound}

\textbf{Theorem~\ref{thm:SourceSelectionComplexity:LowerBound}.}
The source selection problem is NP-hard.

\begin{proof}
We show the NP-hardness by a reduction from the node cover problem~(also called vertex cover problem), which is known to be NP-hard~\cite{DBLP:conf/coco/Karp72}.

The node cover problem is defined as follows: Given
	a positive integer~$k$ and an undirected graph~$G=(V,E)$,
decide whether there exists a set~$V'\! \subseteq V$ such $\left| V' \right| \leq k$ and every edge in $E$ is incident on some vertex in $V'\!$%
	\removable{~(i.e., for every undirected edge~$\{u,v\} \in E$ it holds that $u \in V'$ or $v \in V'$)}%
.

For our reduction we introduce the following function~$f$ that maps every instance of the node cover problem~(i.e., every
	$G=(V,E)$ and~$k$)
to an instance of the source selection problem~(i.e., a BGP~$\symBGP$, a triple pattern accessible federation~$\Fed$, and a positive integer~$c$).
Given $G=(V,E)$ and~$k$, function~$f$ maps the \removable{undirected} graph~$G$ to a federation~$\Fed$ that consists of one member per vertex in $V$. Hence, we define~$\Fed$ such that $\left|\Fed\right| = \left|V\right|$ and there exists a bijection $f_\mathsf{memb}\!: V \rightarrow \Fed$. \removable{Then,} for every \removable{vertex}~$v \in V$, the corresponding federation member $f_\mathsf{memb}(v)=(\symRDFgraph_v,\iface_v)$ is defined as follows: \removable{The interface~}$\iface_v$ is the TPF interface~(cf.\ Example~\ref{ex:Interface:TPF}),
	\removable{or any other interface that supports triple pattern requests,}
and for the RDF graph~$\symRDFgraph_v$ we have that
\begin{align*}
\symRDFgraph_v = \Big\lbrace \bigl( \mathsf{uri}(e), \mathtt{rdf\!:\!type}, \mathtt{ex\!:\!Edge} \bigr) \,\Big\vert\,
& \text{$e$ is an edge in $E$}
\\[-3mm]
& \text{that is incident on $v$ }
 \Big\rbrace ,
\end{align*}
where
	$\mathsf{uri}\!: E \rightarrow \symAllURIs$ is a bijection that maps every edge in $E$ to a distinct URI.

In addition to mapping the input graph~$G$ to the federation~$\Fed$, function~$f$ maps the integer~$k$ directly to $c$%
	\removable{~(i.e., $c=k$)},
and the BGP~$\symBGP$
	\removable{returned by $f$}
is the same for every instance of the node cover problem;
	\removable{this BGP consists of a single triple pattern:} $\symBGP = \lbrace \symTP \rbrace$
where $\symTP = \bigl( ?x, \mathtt{rdf\!:\!type}, \mathtt{ex\!:\!Edge} \bigr)$.
It is not difficult to see that our mapping function~$f$ can be computed in polynomial time.

Then, the reduction is based on the following claim: For any possible input $G=(V,E)$ and~$k$, and the corresponding output $(\symBGP, \Fed, c) = f(G,k)$,
there exists a set~$V'\! \subseteq V$ such $\left| V' \right| \leq k$ and every edge in $E$ is incident on some vertex in $V'$ if and only if there exists a source assignment~$a$ such that $a$ is correct for $\symBGP$ over~$\Fed$ and $\sacost{a} \leq c$.
%
%
This claim is easily verified based on
two~observations:
\begin{enumerate}
\item For every edge $e=\lbrace u,v \rbrace$ in $E$, there are exactly two federation members
	that contain
the RDF triple $$\bigl( \mathsf{uri}(e), \mathtt{rdf\!:\!type}, \mathtt{ex\!:\!Edge} \bigr),$$ namely, the members
	\removable{created for the vertexes $v$ and $u$, that is,}
$f_\mathsf{memb}(v)$ and~$f_\mathsf{memb}(u)$.
\item For at least one of these two members, say~$\member$,
	a subexpression of the form $\request{\ifaceReqLangExp}{\member}$ must be part of every source assignment that is correct for $\symBGP$ over~$\Fed$.
\end{enumerate}

Hence, the node cover problem can be reduced to our source selection problem, and since it is NP-hard%
	~\cite{DBLP:conf/coco/Karp72}%
, the source selection problem must be NP-hard%
	~as~well%
.
\end{proof}

\section*{Proof of Theorem~\ref{thm:SourceSelectionComplexity:UpperBound}}

Before we present the proof of Theorem~\ref{thm:SourceSelectionComplexity:UpperBound}, we show two auxiliary results which we shall then use to prove the theorem.

\begin{mylemma} \label{lemma:FedEvaluationOfBgpIsPTIME}
Given a BGP~$\symBGP$, a triple pattern accessible federation~$\Fed$, and a solution mapping~$\mu$, the problem
	to decide
whether $\mu \in \eval{\symBGP}{\Fed}$ can be solved in
	time $O( |\Fed|^2 + |\symBGP| \cdot |\Fed| )$.
\end{mylemma}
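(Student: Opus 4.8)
The plan is to read off a direct membership test from Definition~\ref{def:QuerySemantics} together with the evaluation semantics of BGPs recalled in Section~\ref{sec:Preliminaries}, and then to bound its cost using the triple pattern accessibility assumption stated at the end of Section~\ref{sec:DataModel}. First, by Definition~\ref{def:QuerySemantics} we have $\eval{\symBGP}{\Fed} = \eval{\symBGP}{\symRDFgraph_\mathrm{union}}$ with $\symRDFgraph_\mathrm{union} = \bigcup_{(\symRDFgraph,\iface) \in \Fed} \symRDFgraph$, and by the definition of $\eval{\symBGP}{\symRDFgraph_\mathrm{union}}$, $\mu \in \eval{\symBGP}{\Fed}$ holds if and only if \emph{(i)} $\fctDom{\mu} = \fctVars{\symBGP}$ and \emph{(ii)} $\mu[\symTP]$ is an RDF triple contained in $\symRDFgraph_\mathrm{union}$ for every triple pattern $\symTP \in \symBGP$. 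Condition~\emph{(i)} is checked by a single scan over $\mu$ and $\symBGP$, so the task reduces to deciding~\emph{(ii)}; note that once~\emph{(i)} holds we have $\fctVars{\symTP} \subseteq \fctDom{\mu}$ for every $\symTP \in \symBGP$, hence each $\mu[\symTP]$ is indeed a single RDF triple.

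For a fixed triple pattern $\symTP \in \symBGP$, the condition $\mu[\symTP] \in \symRDFgraph_\mathrm{union}$ amounts to the existence of a member $\member_i = (\symRDFgraph_i,\iface_i) \in \Fed$ with $\mu[\symTP] \in \symRDFgraph_i$, which I would test member by member using one triple pattern request each. If $\mu[\symTP]$ contains no blank node, it is itself a ground triple pattern, and requesting its matches from $\member_i$ returns a nonempty answer precisely when $\mu[\symTP] \in \symRDFgraph_i$. If $\mu[\symTP]$ does contain blank nodes, I would first replace them by fresh variables (the same variable for repeated occurrences of the same blank node) to obtain a genuine triple pattern, request its matches from $\member_i$ — an answer set that, by the accessibility assumption, is of polynomial size and computable within the accounted budget — and then scan that answer for the triple $\mu[\symTP]$. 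Moreover, the blank-node disjointness requirement of Definition~\ref{def:Federation} guarantees that a blank node occurring in $\mu[\symTP]$ can belong to at most one member, so for such triple patterns only that single member is ever relevant.

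Putting the pieces together, condition~\emph{(ii)} is decided by performing, over all $\leq |\symBGP|$ triple patterns of $\symBGP$ and all $\leq |\Fed|$ members, at most $|\symBGP| \cdot |\Fed|$ triple pattern requests, together with $O(|\Fed|^2)$ further bookkeeping steps (for instance, the pairwise blank-node-disjointness checks of Definition~\ref{def:Federation} on the input, and, for the at most $O(|\symBGP|)$ blank nodes in the range of $\mu$, locating the member whose graph may contain each). Charging a triple pattern request together with the scan of its polynomially bounded answer as a unit operation — which is legitimate because, by the assumption at the end of Section~\ref{sec:DataModel}, every such step costs only polynomially many Turing-machine steps — the overall cost is $O(|\Fed|^2 + |\symBGP| \cdot |\Fed|)$, as claimed.

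I expect the main obstacle to be the correct and cost-bounded handling of blank nodes: since triple patterns contain no blank nodes by definition, $\mu[\symTP]$ need not be a legal request, and one has to argue via Definition~\ref{def:Federation} that only a single member is ever relevant for a blank-node-bearing triple and that the detour through a variable-renamed triple pattern followed by an explicit scan of its (polynomially sized) answer set does not increase the asymptotic cost. Everything else — in particular, that conditions~\emph{(i)} and~\emph{(ii)} together characterize $\mu \in \eval{\symBGP}{\Fed}$ — is immediate from the definitions in Section~\ref{sec:Preliminaries} and Definition~\ref{def:QuerySemantics}.
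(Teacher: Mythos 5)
Your proof is correct in substance but takes a genuinely different route from the paper. The paper's proof first \emph{materializes} the union graph $\symRDFgraph_\mathrm{union}$ on the tape, charging $O(|\Fed|^2)$ for duplicate elimination, and then invokes the known result of P{\'e}rez et al.\ that the evaluation problem for BGPs is solvable in time $O(|\symBGP| \cdot |\symRDFgraph_\mathrm{union}|)$; blank nodes never become an issue because the check $\mu[\symBGP] \subseteq \symRDFgraph_\mathrm{union}$ is performed directly on the materialized set. You instead unfold that black box into an explicit per-triple-pattern, per-member containment test, which is more elementary and self-contained, and it even shows that the $O(|\Fed|^2)$ term is not really needed (your bookkeeping uses is only because you chose to include it; the claimed bound is an upper bound, so this is harmless). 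Two remarks on your accounting. First, charging ``a triple pattern request together with the scan of its answer'' as a \emph{unit} operation is not literally licensed by the assumption at the end of Section~\ref{sec:DataModel}, which only promises polynomial time per request; as stated, your argument therefore only yields a polynomial bound. The fix is easy: checking $\mu[\symTP] \in \symRDFgraph_i$ by scanning the encoding of $\symRDFgraph_i$ costs $O(|\symRDFgraph_i|)$, and summing over all members gives $O(|\Fed|)$ per triple pattern, hence $O(|\symBGP| \cdot |\Fed|)$ overall, which is the intended bound. Second, the entire blank-node detour (renaming to fresh variables, exploiting the disjointness condition of Definition~\ref{def:Federation}) is an artifact of insisting on going through the interface abstraction; if you test $\mu[\symTP] \in \symRDFgraph_i$ as plain set membership on the tape, as the paper implicitly does, the case distinction disappears. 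Your care here is not wrong --- it correctly identifies that $\mu[\symTP]$ need not be a legal request --- but it is avoidable, and avoiding it makes the proof shorter and closer to the stated time bound.
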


\begin{proof}
To check whether $\mu \in \eval{\symBGP}{\Fed}$ we actually have to check whether $\mu \in \eval{\symBGP}{\symRDFgraph_\mathrm{union}}$ where $\symRDFgraph_\mathrm{union} \!= \bigcup_{(\symRDFgraph,\iface) \in \Fed} \symRDFgraph$~(cf.\ Definition~\ref{def:QuerySemantics}). To this end, we first materialize the RDF graph~$\symRDFgraph_\mathrm{union}$, which is possible in
	time $O( |\Fed|^2 )$ (note that the complexity is quadratic because of the need to eliminate duplicates when materializing $\symRDFgraph_\mathrm{union}$). Thereafter,
checking whether $\mu \in \eval{\symBGP}{\symRDFgraph_\mathrm{union}}$ is know as the evaluation problem of SPARQL, which has been shown to be solvable in time $O( |\symBGP| \cdot |\symRDFgraph_\mathrm{union}| )$ for~BGPs~\cite{DBLP:journals/tods/PerezAG09}. Since $|\Fed|=|\symRDFgraph_\mathrm{union}|+k$ for some constant $k$, the algorithm has an overall time complexity of $O( |\Fed|^2 + |\symBGP| \cdot |\Fed| )$.
\end{proof}

\begin{mylemma} \label{lemma:MuInSolsIsPTIME}
Given a BGP~$\symBGP$, a triple pattern accessible federation~$\Fed$, a source assignment~$a$ that is valid for $\symBGP$ over~$\Fed$, and a solution mapping~$\mu$, the problem of deciding whether $\mu \in \mathsf{sols}(a)$ can be solved in
	time $O( |a| \cdot |\Fed| + |a| \cdot |\mu| )$.
\end{mylemma}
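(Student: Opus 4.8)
The plan is to decide $\mu \in \sols{a}$ by a single bottom-up traversal of the parse tree of $a$, without ever materialising $\sols{a}$ itself, which would be prohibitive since $\sols{a}$ can be exponentially larger than $a$. The traversal exploits the following structural property of a source assignment $a$ that is valid for $\symBGP$ over $\Fed$: for every subexpression $\varphi$ of $a$, all solution mappings in $\sols{\varphi}$ share a common domain, which is moreover determined syntactically. We write $d_\varphi$ for this domain: for a request $\request{\ifaceReqLangExp}{\member}$ it is $\fctVars{\ifaceReqLangExp}$; for $\xxxAlgebraOperatorName{mj}\,\mathit{\Phi}$ it is $\bigcup_{\varphi' \in \mathit{\Phi}} d_{\varphi'}$; and for $\xxxAlgebraOperatorName{mu}\,\mathit{\Phi}$ the domains $d_{\varphi'}$ ($\varphi' \in \mathit{\Phi}$) all coincide, so $d_\varphi$ is their common value. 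Proving this property, by structural induction and using that the subexpressions combined by any $\xxxAlgebraOperatorName{mu}$ address the same subquery, is the first step.

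Given the property, the algorithm computes for each subexpression $\varphi$ (in bottom-up order) the set $d_\varphi$ together with a Boolean $b_\varphi$ that records whether $d_\varphi \subseteq \fctDom{\mu}$ and $\mu|_{d_\varphi} \in \sols{\varphi}$. For a leaf $\request{\symTP}{\member}$ with $\member = (\symRDFgraph,\iface)$ we have $d_\varphi = \fctVars{\symTP}$ and $\sols{\varphi} = \eval{\symTP}{\symRDFgraph}$ (as $\iface$ supports triple pattern requests), so $b_\varphi$ holds iff $\fctVars{\symTP} \subseteq \fctDom{\mu}$ and the ground triple $\mu[\symTP]$ occurs in $\symRDFgraph$; using the triple-pattern-accessibility assumption this is checked in time $O(|\Fed| + |\mu|)$. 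A leaf $\request{\symBGP'}{\member}$ with a BGP is handled in the same spirit in time $O(|\symBGP'| \cdot (|\Fed| + |\mu|))$, essentially by the argument behind Lemma~\ref{lemma:FedEvaluationOfBgpIsPTIME}. For $\varphi = \xxxAlgebraOperatorName{mj}\{\varphi_1, \ldots, \varphi_n\}$ we set $d_\varphi = \bigcup_i d_{\varphi_i}$ and $b_\varphi = \bigwedge_i b_{\varphi_i}$; here uniform domains are essential: if $\mu|_{d_{\varphi_i}} \in \sols{\varphi_i}$ for all $i$, then these mappings are pairwise compatible (being restrictions of the single mapping $\mu$) and their union is exactly $\mu|_{d_\varphi}$, hence $\mu|_{d_\varphi} \in \sols{\varphi_1} \Join \cdots \Join \sols{\varphi_n} = \sols{\varphi}$, and conversely any decomposition of $\mu|_{d_\varphi}$ as such a join forces the $i$-th component to equal $\mu|_{d_{\varphi_i}}$. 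For $\varphi = \xxxAlgebraOperatorName{mu}\{\varphi_1, \ldots, \varphi_n\}$ we have $d_\varphi = d_{\varphi_1} = \cdots = d_{\varphi_n}$ and $b_\varphi = \bigvee_i b_{\varphi_i}$, directly from $\sols{\varphi} = \bigcup_i \sols{\varphi_i}$. Finally, $\mu \in \sols{a}$ holds iff $b_a$ is true and, in addition, $d_a = \fctDom{\mu}$; the latter check is needed because every mapping in $\sols{a}$ has domain $d_a$.

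For the running time, each leaf whose request is a triple pattern costs $O(|\Fed| + |\mu|)$ and each leaf whose request is a BGP $\symBGP'$ costs $O(|\symBGP'| \cdot (|\Fed| + |\mu|))$; since the request expressions occurring in $a$ have total size at most $|a|$, all leaves together cost $O(|a| \cdot |\Fed| + |a| \cdot |\mu|)$. Each internal node with $n$ children costs $O(n \cdot |\mu|)$ for the set operations on the $d_{\varphi_i}$ plus $O(n)$ for the Boolean combination, and since the number of parent-child links in $a$ is $O(|a|)$, all internal nodes together cost $O(|a| \cdot |\mu|)$. Adding the final comparison gives the claimed bound $O(|a| \cdot |\Fed| + |a| \cdot |\mu|)$.

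The step I expect to be the main obstacle is the first one: isolating and justifying the uniform-domain property, and verifying that it is exactly what makes the $\xxxAlgebraOperatorName{mj}$ case sound. That some such restriction is indispensable can be seen as follows: for arbitrary source assignments one can encode an arbitrary propositional CNF formula as a $\xxxAlgebraOperatorName{mj}$ of binary $\xxxAlgebraOperatorName{mu}$ expressions over requests whose variable sets are chosen according to the clauses, so that deciding $\mu \in \sols{a}$ becomes the satisfiability problem; hence the constraints built into the notion of being valid for $\symBGP$ over $\Fed$ are doing real work here. A secondary point to handle with care is that a BGP-request leaf $\request{\symBGP'}{\member}$ must return $\eval{\symBGP'}{\symRDFgraph}$ (as, e.g., the SPARQL endpoint interface does), so that the leaf check reduces to the membership test of Lemma~\ref{lemma:FedEvaluationOfBgpIsPTIME}.
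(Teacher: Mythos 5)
Your proposal is essentially the paper's own proof. The paper also argues by structural induction on $a$: a leaf $\request{\symBGP'}{\member}$ is handled by checking $\fctDom{\mu}=\fctVars{\symBGP'}$, instantiating $\symBGP'$ with $\mu$, and looking each resulting triple up in the member's data; a $\xxxAlgebraOperatorName{mu}$ node by testing membership of $\mu$ in some child; and a $\xxxAlgebraOperatorName{mj}$ node by first checking $\fctDom{\mu}=\fctVars{a}$ and then recursing on the restrictions $\mu|_{\fctVars{a_i}}$. Your bottom-up computation of the pairs $(d_\varphi,b_\varphi)$ is an iterative rendering of exactly this recursion, and your cost accounting matches the paper's bound. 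The one point where you go beyond the paper is in isolating the uniform-domain property as an explicit prerequisite; be aware that it does not actually follow from validity: Definition~\ref{def:FedQPL:Validity} constrains only the interfaces, and Definition~\ref{def:SourceAssignment} does not force the children of a $\xxxAlgebraOperatorName{mu}$ to address the same subquery, so, e.g., $\munion{\request{\symTP_1}{\member},\request{\symTP_2}{\member}}$ with $\fctVars{\symTP_1}\neq\fctVars{\symTP_2}$ is a valid source assignment that violates the property. This is a shared caveat rather than a defect peculiar to your argument, since the paper's Algorithm~\ref{algo:MuInSolsIsPTIME:IndStepJoin} tacitly makes the same assumption when it identifies the join witness for $a_i$ with $\mu|_{\fctVars{a_i}}$; your CNF-encoding remark showing that some such restriction is indispensable is a worthwhile addition. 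To make either proof airtight one should restrict the statement to source assignments whose $\xxxAlgebraOperatorName{mu}$-children share a variable set, which covers every source assignment the paper actually constructs.
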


\begin{proof}
We proof the lemma by induction over the structure of~$a$.

\smallskip\noindent
\textit{Base case:} If $a$ is of the form $\request{\ifaceReqLangExp}{\member}$, we can
	use Algorithm~\ref{algo:MuInSolsIsPTIME:BaseCase} to check whether $\mu \in \mathsf{sols}(a)$.
By Definition~\ref{def:SourceAssignment}, we know that $\ifaceReqLangExp$ is a triple pattern or a BGP. W.l.o.g., we assume it is a BGP~$\symBGP'$\!.
In the algorithm, we write $\fctVars{a}$ to denote the set of variables mentioned in the BGPs within the source assignment~$a$. Formally, this set is defined recursively as follows: If $a$ is of the form $\request{\ifaceReqLangExp}{\member}$ where $\ifaceReqLangExp$ is a triple pattern or a BGP, then $\fctVars{a} = \fctVars{\ifaceReqLangExp}$. If $a$ is of the form $\munion{ a_1, ...\, , a_n }$ or of the form $\mjoin{ a_1, ...\, , a_n }$, then $\fctVars{a} = \bigcup_{i \in \lbrace 1,...,n\rbrace}\fctVars{a_i}$.

The first step of the algorithm~(lines~\ref{step:MuInSolsIsPTIME:BaseCase:IfBegin}--\ref{step:MuInSolsIsPTIME:BaseCase:IfEnd}) checks that $\mu$ is actually defined for the variables in $a$, which can be done in time $O( |a|\cdot|\mu| )$. The next step is to replace all variables in $\symBGP'$ according to $\mu$~(cf.\ line~\ref{step:MuInSolsIsPTIME:BaseCase:Substitute}), which results in a set~$\symRDFgraph'$ of RDF triples. The time complexity of this step is again $O( |a|\cdot|\mu| )$ (we assume here that $|a|=|\symBGP'|+k$ for some constant $k$). Finally, the algorithm checks that every triple in $\symRDFgraph'$ exists in the data~$\symRDFgraph$ of federation member~$\member$~(lines~\ref{step:MuInSolsIsPTIME:BaseCase:ForBegin}--\ref{step:MuInSolsIsPTIME:BaseCase:ForEnd}). The actual check for each triple~(i.e., line~\ref{step:MuInSolsIsPTIME:BaseCase:IfInsideFor}) can be done in time $O(|\Fed|)$ (we use that $|\symRDFgraph|<|\Fed|$) and, thus, the time complexity of the whole for loop~(lines~\ref{step:MuInSolsIsPTIME:BaseCase:ForBegin}--\ref{step:MuInSolsIsPTIME:BaseCase:ForEnd}) is $O( |a|\cdot|\Fed| )$. Therefore, the time complexity of the whole algorithm is $O( |a| \cdot |\mu| + |a| \cdot |\Fed| )$.

\begin{algorithm}[t!]
\caption{Check whether $\mu \in \mathsf{sols}(a)$ for $a = \request{\symBGP'}{\member}$ with $\member = (\symRDFgraph,\iface)$}\label{algo:MuInSolsIsPTIME:BaseCase}
{\small
\begin{algorithmic}[1]
\IF {$\fctDom{\mu} \neq \fctVars{a}$} \label{step:MuInSolsIsPTIME:BaseCase:IfBegin}
	\RETURN \FALSE
	\hspace{9mm}\COMMENT{$\mu$ has to be defined for the variables in $a$}
\ENDIF \label{step:MuInSolsIsPTIME:BaseCase:IfEnd}

\STATE let $\symRDFgraph' := \mu[\symBGP']$ \label{step:MuInSolsIsPTIME:BaseCase:Substitute}
\hspace{10mm}\COMMENT{substitute all variables in $\symBGP'$ according to $\mu$}

\FORALL {$t \in \symRDFgraph'$} \label{step:MuInSolsIsPTIME:BaseCase:ForBegin}
	\IF {$t \notin \symRDFgraph$}  \label{step:MuInSolsIsPTIME:BaseCase:IfInsideFor}
		\RETURN \FALSE
			\hspace{6mm}\COMMENT{search the data of $\member$ for every triple in $\symRDFgraph'$}
	\ENDIF
\ENDFOR \label{step:MuInSolsIsPTIME:BaseCase:ForEnd}
\RETURN \TRUE
\end{algorithmic}}
\end{algorithm}

\smallskip\noindent
\textit{Induction step:} In the induction step we consider the remaining two cases of $a$.

Case 1) If $a$ is of the form $\munion{ a_1, ...\, , a_n }$, we can check whether $\mu \in \mathsf{sols}(a)$ by using Algorithm~\ref{algo:MuInSolsIsPTIME:IndStepUnion}. The algorithm tries to find a sub-expression~$a_i$ inside $a$ such that we have $\mu \in \mathsf{sols}(a_i)$. By the induction hypothesis, the corresponding check in line~\ref{step:MuInSolsIsPTIME:IndStepUnion:IfInsideFor} can be done in time $O( |a_i| \cdot |\mu| + |a_i| \cdot |\Fed| )$ for every $i \in \lbrace 1,...,n \rbrace$. Consequently, for the whole loop~(lines~\ref{step:MuInSolsIsPTIME:IndStepUnion:ForBegin}--\ref{step:MuInSolsIsPTIME:IndStepUnion:ForEnd}), and thus the whole algorithm, we have a time complexity of $O\bigl( (|a_1|+...+|a_n|) \cdot |\mu| + (|a_1|+...+|a_n|) \cdot |\Fed| \bigr)$, which is $O( |a| \cdot |\mu| + |a| \cdot |\Fed| )$.

\begin{algorithm}[t!]
\caption{Check whether $\mu \in \mathsf{sols}(a)$ for $a = \munion{ a_1, ...\, , a_n }$}\label{algo:MuInSolsIsPTIME:IndStepUnion}
{\small
\begin{algorithmic}[1]
\FORALL {$i \in \lbrace 1,...\,,n \rbrace$} \label{step:MuInSolsIsPTIME:IndStepUnion:ForBegin}
	\IF {$\mu \in \mathsf{sols}(a_i)$}  \label{step:MuInSolsIsPTIME:IndStepUnion:IfInsideFor}
		\RETURN \TRUE
	\ENDIF
\ENDFOR \label{step:MuInSolsIsPTIME:IndStepUnion:ForEnd}
\RETURN \FALSE
\end{algorithmic}}
\end{algorithm}

Case 2) If $a$ is of the form $\mjoin{ a_1, ...\, , a_n }$, we can check whether $\mu \in \mathsf{sols}(a)$ by using Algorithm~\ref{algo:MuInSolsIsPTIME:IndStepJoin}. The first step~(lines~\ref{step:MuInSolsIsPTIME:IndStepJoin:IfBegin}--\ref{step:MuInSolsIsPTIME:IndStepJoin:IfEnd}) is to check that $\mu$ is actually defined for the variables in $a$, which can be done in time $O( |a|\cdot|\mu| )$. Thereafter, the algorithm iterates over the subexpressions $a_1$ to $a_n$.

For each such
	subexpression
$a_i$, the algorithm first takes the restriction of $\mu$ to the variables in $\fctVars{a_i}$, denoted by $\mu_i$~(line~\ref{step:MuInSolsIsPTIME:IndStepJoin:Projection}); i.e., $\mu_i$ is a solution mapping such that $\fctDom{\mu_i} = \fctDom{\mu} \cap \fctVars{a_i}$ and $\mu_i(?v)=\mu(?v)$ for every variable $?v \in \fctDom{\mu} \cap \fctVars{a_i}$. For every $a_i$, this steps can be done in time $O( |a_i|\cdot|\mu| )$. Next, the algorithm checks whether $\mu_i \notin \mathsf{sols}(a_i)$~(line~\ref{step:MuInSolsIsPTIME:IndStepJoin:IfInsideFor}), in which case $\mu$ cannot be in $\mathsf{sols}(a)$. By the induction hypothesis, for every $i \in \lbrace 1,...\,,n \rbrace$, this check can be done in time $O( |a_i| \cdot |\mu_i| + |a_i| \cdot |\Fed| )$, which we may generalize to $O( |a_i| \cdot |\mu| + |a_i| \cdot |\Fed| )$ because $|\mu|=|\mu_i|+k_i$ for some constant~$k_i$.

Then, the time complexity of the whole for loop~(lines~\ref{step:MuInSolsIsPTIME:IndStepJoin:ForBegin}--\ref{step:MuInSolsIsPTIME:IndStepJoin:ForEnd})
is $O\bigl( (|a_1|+...+|a_n|) \cdot |\mu| + (|a_1|+...+|a_n|) \cdot |\Fed| \bigr)$, which is $O( |a| \cdot |\mu| + |a| \cdot |\Fed| )$. When combined with the complexity of the first step~(%
	lines~\ref{step:MuInSolsIsPTIME:IndStepJoin:IfBegin}--\ref{step:MuInSolsIsPTIME:IndStepJoin:IfEnd},
see above), the time complexity of Algorithm~\ref{algo:MuInSolsIsPTIME:IndStepJoin} is also $O( |a| \cdot |\mu| + |a| \cdot |\Fed| )$.
\end{proof}

\begin{algorithm}[t!]
\caption{Check whether $\mu \in \mathsf{sols}(a)$ for $a = \mjoin{ a_1, ...\, , a_n }$}\label{algo:MuInSolsIsPTIME:IndStepJoin}
{\small
\begin{algorithmic}[1]
\IF {$\fctDom{\mu} \neq \fctVars{a}$} \label{step:MuInSolsIsPTIME:IndStepJoin:IfBegin}
	\RETURN \FALSE
	\hspace{9mm}\COMMENT{$\mu$ has to be defined for the variables in $a$}
\ENDIF \label{step:MuInSolsIsPTIME:IndStepJoin:IfEnd}

\FORALL {$i \in \lbrace 1,...\,,n \rbrace$} \label{step:MuInSolsIsPTIME:IndStepJoin:ForBegin}
	\STATE let $\mu_i$ be the restriction of $\mu$ to the variables in $\fctVars{a_i}$ \label{step:MuInSolsIsPTIME:IndStepJoin:Projection}
	\IF {$\mu_i \notin \mathsf{sols}(a_i)$}  \label{step:MuInSolsIsPTIME:IndStepJoin:IfInsideFor}
		\RETURN \FALSE
	\ENDIF
\ENDFOR \label{step:MuInSolsIsPTIME:IndStepJoin:ForEnd}
\RETURN \TRUE
\end{algorithmic}}
\end{algorithm}

Now we are ready to prove the theorem.

\medskip\noindent
\textbf{Theorem~\ref{thm:SourceSelectionComplexity:UpperBound}.}
The source selection problem is in $\Sigma_2^\mathrm{P}$.

\begin{proof}
	We
assume a nondeterministic Turing machine~(NTM) that is equipped with
	\removable{the following}
oracle. For every BGP~$\symBGP$, every triple pattern accessible federation~$\Fed$, and every source assignment~$a$, the oracle
	returns \texttt{true} if and only if
$a$ is \emph{not} correct for $\symBGP$ over~$\Fed$.
Then, the NTM decides the source selection problem for any given input $\symBGP$, $\Fed$, and $c$ as follows:
First, the NTM guesses a source assignment~$a$ such that the size of $a$ is polynomial in the size of $\symBGP$ and $\Fed$~(by Proposition~\ref{prop:CorrectnessOfExhaustiveSourceAssignments} we know that such polynomial-sized source assignments exist and are correct for $\symBGP$ over~$\Fed$). Then, the NTM has to check
	that $a$ is correct for $\symBGP$ over~$\Fed$ and that $\sacost{a} \leq c$. The latter property, i.e., $\sacost{a} \leq c$,
can be checked in polynomial time by scanning $a$ and counting all subexpressions of the form $\request{\ifaceReqLangExp}{\member}$. To check the correctness of~$a$ the NTM uses its oracle (and inverts the response of the oracle; i.e., $a$ is correct for $\symBGP$ over~$\Fed$ if and only if the oracle returns \texttt{\small false}).

Now,
	\removable{to show that the source selection problem is in $\Sigma_2^\mathrm{P}$}
it remains to show that checking whether $a$ is not correct for $\symBGP$ over~$\Fed$~(i.e., the decision problem solved by the oracle) is in NP.
To this end, we use the following nondeterministic program: First, we test whether $a$ is valid for $\symBGP$ over~$\Fed$, which is a precondition for the correctness~(cf.\ Definition~\ref{def:FedQPL:Correctness}) and can be checked in polynomial time if we assume that the encoding of $\Fed$ on the tape of a Turing Machine includes an indication of the type of interface that each member in $\Fed$ has. Next, we guess a solution mapping~$\mu$ where the intuition is that $\mu \in \eval{\symBGP}{\Fed}$ but $\mu \notin \mathsf{sols}(a)$, which shows that $a$ is not correct for $\symBGP$ over~$\Fed$. Hence, the program has to check these two properties of $\mu$, which can be done in polynomial time as we have shown in Lemmas~\ref{lemma:FedEvaluationOfBgpIsPTIME} and~\ref{lemma:MuInSolsIsPTIME}.
\end{proof}
}

\end{document}